\newcommand{\abs}[1]{\lvert#1\rvert}
\newcommand{\norm}[1]{\lVert#1\rVert}
\newcommand{\field}[2]{\mathbb{#1}^{#2}}
\newcommand{\hilbert}[1]{\mathscr{#1}}
\newcommand{\fock}{\hilbert{F}}
\newcommand{\epsi}{\varepsilon}
\newcommand{\E}{{\mathrm{e}}}
\newcommand{\I}{\mathrm{i}}
\newcommand{\Hf}{H_{\mathrm{f}}}
\newcommand{\hepsiop}{{H}^{\epsi}}
\newcommand{\hzeroepsiop}{ {H}_{0}^{\epsi}}
\newcommand{\hfree}{{H}_{\mathrm{free}}^{\epsi}}
\newcommand{\vrm}[1]{V_{\mathrm{#1}}}
\newcommand{\me}{m_{\mathrm{el}}}
\newcommand{\mn}{m_{\mathrm{nuc}}}
\newcommand{\Or}{{\mathcal{O}}}
\newcommand{\R}{{\mathbb{R}}}
\newcommand{\C}{{\mathbb{C}}}
\newcommand{\N}{{\mathbb{N}}}
\newcommand{\D}{{\mathrm{d}}}
\newcommand{\Hi}{ \mathcal{H} }
\newcommand{\rom}{\renewcommand{\labelenumi}{{\rm(\roman{enumi})}}}
\newcommand{\brom}{\begin{enumerate}\rom}
\newcommand{\erom}{\end{enumerate}}
\newtheorem{theorem}{Theorem}
\newtheorem*{theorem*}{Theorem}
\newtheorem{lemma}{Lemma}
\newtheorem*{lemma*}{Lemma}
\newtheorem{proposition}{Proposition}
\newtheorem{corollary}{Corollary}
\title{Spontaneous decay of resonant energy levels for molecules with moving nuclei}
\author{  Stefan Teufel and Jakob Wachsmuth\\[4mm] \small Mathematisches Institut der Universit\"at T\"ubingen\\  \small  Auf der Morgenstelle 10, 72076 T\"ubingen, Germany\\}
 \date{  September 2, 2011}
\begin{document}
\maketitle

\begin{abstract}\small
We consider the Pauli-Fierz Hamiltonian with dynamical nuclei and investigate the transitions between the resonant electronic energy levels under the assumption that there are no free photons in the beginning. Coupling the limits of small fine structure constant and of heavy nuclei  allows us to 
prove the validity of the Born-Oppenheimer approximation at leading order  and to provide a simple formula for the rate of spontaneous decay.  
\end{abstract}

\tableofcontents

\newpage

\section{Introduction}

In the quantum mechanical description of atoms and molecules one usually neglects the coupling to the radiation field and thus the possibility of emission or absorption of photons. The charged nuclei and electrons only interact via the static Coulomb interaction. Still the predictions for the spectra of atoms and molecules are in very good agreement with experimental data usually gathered through interaction with light. Also the predictions for the dynamical behavior of molecules agree with the motion observed e.g.\ in chemical reactions.
The reason for the good agreement lies in the smallness of the fine structure constant $\alpha\approx\frac{1}{137}$ that determines the strength of the coupling to the radiation field.

It is by now well understood even on a mathematical level how the coupling to the quantized radiation field changes the spectrum of the Hamiltonian operator describing an atom or a {\em static} molecule, e.g.\ \cite{BFS,HHH, AFFS,Fa,HaSe}. 
The quantum mechanical eigenstates become resonances, with energies   close  to the original eigenvalues, that decay nearly exponentially with a rate that can be computed perturbatively.

The quantum mechanical understanding of the {\em dynamics} of molecules is  based on  the Born-Oppenheimer approximation. Roughly speaking one assumes that if the electrons are initially in a certain eigenstate relative to the nuclear positions (e.g.\ in the ground state), they will remain in the ``same'' eigenstate relative to the nuclear positions  even when the latter change. The electronic state is ``slaved'' in this sense, but by energy conservation the electronic energy level serves as an effective potential for the motion of the  nuclei. 
The validity of the approximation was proved in various versions \cite{HaJo1,MaSo1,SpTe,PST2,MaSo2}. It is an adiabatic approximation relying on the fact that due to their large mass the nuclei move slowly compared to the lighter electrons. While transitions between different electronic levels (so-called non-adiabatic transitions) are possible even without coupling to the radiation field, the probabilities   for such  transitions are usually exponentially small in the adiabatic parameter and thus negligible.  

The content of this work is a mathematical analysis of molecular dynamics with the coupling to the quantized radiation field taken into account. 
Our first result is  the validity of the Born-Oppenheimer approximation at leading order. This is of course expected, since the validity of the Born-Oppenheimer approximation has been confirmed  experimentally in countless situations. Again the reason is the smallness of $\alpha$ which leads to small decay rates on the time scale set by the nuclear motion. This result is a rather straightforward consequence of combining the known quantum mechanical results on the Born-Oppenheimer approximation  with standard time-dependent perturbation theory.

The main mathematical and physical problem solved in this paper is the determination of the rates of spontaneous emission for dynamical molecules, i.e.\ for situations where the nuclei undergo a nontrivial dynamics. Let us briefly discuss an example where these rates are relevant. In Figure~\ref{figure} some electronic energy levels for a di-atomic molecule are schematically plotted as a function of the nuclear separation $R$. The ground state energy actually behaves like $R^{-6}$ for large $R$ and thus leads to a rather small attractive force for separated atoms in the ground state, the so-called van-der-Waals force. One strategy to accelerate the production of  dimers  is to excite one of the atoms, so that the molecular system is in the first excited state that behaves like $R^{-3}$ and thus leads to a stronger attractive force. Once the nuclei come close,  the system goes either into the ground state by   spontaneous 
emission of a photon or the nuclei will only scatter and separate again. One is thus interested in the probability for spontaneous emission within a finite time interval while the nuclei are sufficiently close. However, this probability is not governed by a fixed decay rate since the electronic state and thus the lifetime of the resonance changes with the location of the nuclei. In particular no exponential decay law can be expected.

\begin{figure}[h]\label{figure}
\begin{center}
\includegraphics[height=9cm]{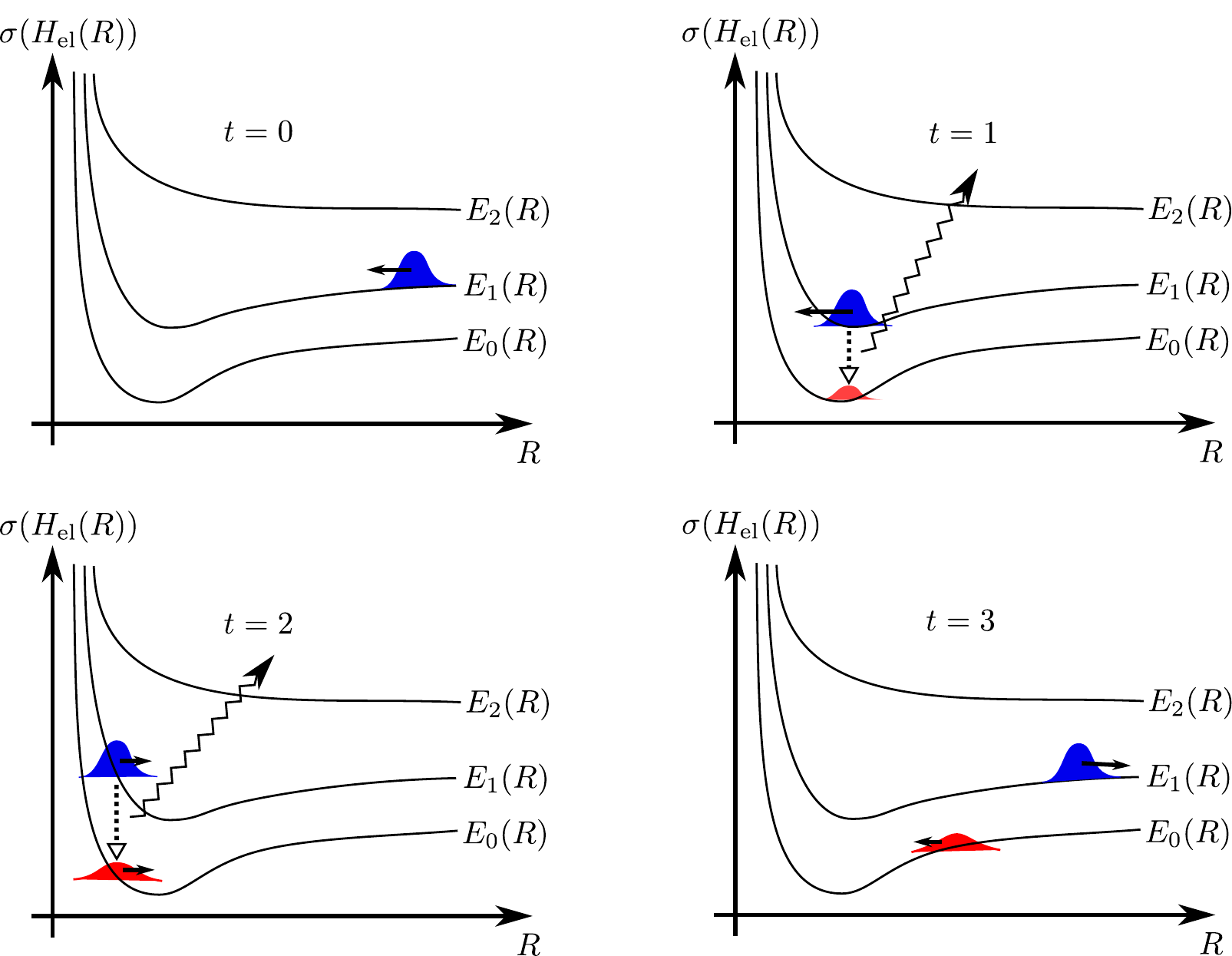}
\end{center}
\caption{While moving in the electronic surface $E_1$, there is a configuration dependent probability for a transition to the ground state surface $E_0$ through spontaneous emission of a photon.}
\end{figure}

Our main result is an explicit time-dependent formula for the probability of spontaneous  decay of a dynamical molecule through emission of a photon, for finite times on the natural time scale of molecular dynamics. Since on this time scale the probability for spontaneous emission 
is quite small, it is far from straightforward to determine its leading order expression and to show that the remainder terms are even smaller. In particular we need to carefully separate the three time-scales given by the slow nuclear motion, the intermediate electronic motion and the fast photons. 
The main idea of our proof is the construction of subspaces that correspond to specific electronic states relative to the nucleonic configuration and momentum   that are dressed by a cloud of virtual photons. The restriction of the full dynamics to these subspaces is the Born-Oppenheimer approximation. A transition between two subspaces corresponds to a change in the electronic state with simultaneous emission or absorption of a free photon. 
 
Before we can make these ideas more precise, we have to explain the mathematical model. For a molecule with $l$ nuclei and $r$ electrons the Hamiltonian is (in atomic units where  $\hbar=1$ and $c=1$)
\[
H_{\rm mol} = -\frac{1}{2m_{\rm nuc}} \sum_{j=1}^l \Delta_{x_j} -\frac{1}{2m_{\rm el}} \sum_{j=1}^r \Delta_{y_j} +  \alpha[V_{\mathrm{ee}}(y) + V_{\mathrm{en}}(x, y) + V_{\mathrm{nn}}(x)]
\]
acting on the state space $\Hi_{\rm mol} := \Hi_{\rm nuc}\otimes \Hi_{\rm el} := L^2(\R_x^{3l}) \otimes L^2_{\rm as}(\R_y^{3r})$.
Here $x = (x_{1}, \ldots, x_{l})$ denotes the configuration of the $l$ nuclei, $y = (y_{1}, \ldots, y_{r})$ the configuration of the $r$ electrons and $\alpha$ is the fine structure constant
\[ 
\alpha := \frac{e^{2}}{4\pi}\approx \frac{1}{137}\,.
\]
For notational simplicity we assume that all the nuclei have the same mass $m_{\rm nuc}$ and denote the electron mass by $m_{\rm el}$. We also disregard spin as it  would only complicate notation and not change the results.
For the moment let $\vrm{ee}, V_{\rm en}$ and $\vrm{nn}$ be the Coulomb potentials between electrons, electrons and nuclei, and nuclei respectively.

 Taking  into account also the coupling to the quantized radiation field, the Hamiltonian for the system becomes
\begin{eqnarray*}
\tilde{H}
 & := &
  \frac{1}{2\mn}\sum_{j=1}^{l}\big(p_{j, x} - 2\pi^{\frac{1}{2}}\alpha^{\frac{1}{2}}Z_{j}A_\Lambda(x_{j})\big)^{2} + \frac{1}{2\me}\sum_{j=1}^{r}\big(p_{j, y} - 2\pi^{\frac{1}{2}}\alpha^{\frac{1}{2}}A_{\Lambda}(y_{j})\big)^{2}\nonumber \\[1mm]
  &&+\, H_{\rm f} \;\;+\;\;  \alpha[V_{\mathrm{ee}}(y)\,+\, V_{\mathrm{en}}(x, y) \,+ \,V_{\mathrm{nn}}(x)]
 \end{eqnarray*}
where $p_{j,x} := -\I\nabla_{x_j}$ and $p_{j,y} := -\I\nabla_{y_j}$.
It acts on the Hilbert space
\[
\Hi := \Hi_{\rm mol} \otimes\fock,
\]
where $\fock$ is the photonic Fock space.

The nuclear charge in multiples of the electron charge is denoted by $Z_j$, $H_{\rm f}$ is the Hamiltonian of the free field and $A_{\Lambda}$ is the quantized transverse vector potential in the Coulomb gauge, with a sharp ultraviolet cutoff $\Lambda$, needed to make the Hamiltonian $\tilde{H}$ a well-defined self-adjoint operator. More explicitly
\[
A_{\Lambda}(q) =\frac{1}{(2\pi)^\frac{3}{2}} \sum_{\lambda=1}^{2}\int_{|k|\leq \Lambda} \frac{\D^3 k}{\sqrt{2\abs{k}}}\, e_{\lambda}(k) \left(\E^{\I k\cdot q}\,a(k, \lambda) + \E^{-\I k\cdot q}\,a^{*}(k, \lambda)\right), \,\, q\in\field{R}{3},
\]
where $a^{*}(k, \lambda)$ and $a(k, \lambda)$ are the standard creation and annihilation operators and $\{e_{\lambda}\}_{\lambda=1, 2}$ are the photon polarization vectors. Note that we use a sharp ultraviolet cutoff just to simplify notation. All our proofs work without any changes for smooth cutoffs.
Physically the cutoff is irrelevant for the problem at hand, as long as it is large compared to the energy of the emitted photons.

As explained before, the validity of the Born-Oppenheimer approximation rests on the fact that the coupling to the field is small and we will be interested in the asymptotics for small $\alpha$. However, since also the Coulomb interaction depends on $\alpha$ (it is a consequence of coupling to the field after all) the size of an atom or molecule as well as the electronic energy levels depend on $\alpha$.  To understand atoms and molecules by perturbation theory in $\alpha$ one thus switches to $\alpha$-dependent units where the typical sizes and the typical energies are independent of $\alpha$: introducing the Bohr-radius $\eta$ and the Rydberg-energy $\mu$ as
\[
 \eta = \frac{1}{2\me\alpha}\,, \quad \mu = 2\me\alpha^{2}\,,
\]
one implements the change of units on the one-particle configuration space as
\[
U_\eta :  L^2(\R^3_z,\D z) \to  L^2(\R^3_z,\D z)\,,\quad (U_\eta\psi)(z)= \eta^{3/2} \psi(\eta  z)
\]
and on the one-photon momentum space as
\[
U_\mu :  L^2(\R^3_k,\D k) \to  L^2(\R^3_k,\D k) \,,\quad (U_\mu \phi)(k) = \mu^{3/2} \phi(\mu  k)\,.
\]
These transformations are canonically lifted to a unitary $U_\alpha$ on the full  Hilbert space~$\Hi$. 
Finally we fix the  ultraviolet cutoff in units of Rydberg to some finite value $\Lambda_{0}<\infty$,  
\[ 
\Lambda = \mu\,\Lambda_{0}\,.
\]
A straightforward computation shows that in the new units the Hamiltonian becomes
\begin{eqnarray}
H^{\epsi, \alpha} \;\;:=\;\; \tfrac{1}{\mu}\,U_\alpha\, \tilde{H}\,U_\alpha^{*}&=&
 \epsi^{2}\,\sum_{j=1}^{l}\left(p_{j, x} - 2\pi^{1/2}\alpha^{3/2}Z_{j}A_{\Lambda_0}(\alpha x_{j})\right)^{2} \nonumber\\
&&+\,\sum_{j=1}^{r}\big(p_{j, y} - 2\pi^{1/2}\alpha^{3/2}A_{\Lambda_0}(\alpha y_{j})\big)^{2} \nonumber\\[2mm] &&+\,H_{\rm f}\;+\; \vrm{ee}(y) + \vrm{en}(x, y) + \vrm{nn}(x) \,,\label{hepsialpha}
\end{eqnarray}
where we abbreviate
\[ 
\epsi := \bigg(\frac{\me}{\mn}\bigg)^{1/2}\,.
\]
Note that even for the lightest nuclei $\epsi$ is already rather small,
\[
\epsi  \;\approx\;\begin{cases} \frac{1}{43} & \mn = m_{\mathrm{p}},\\
                                                         \frac{1}{136} & \mn = 10m_{\mathrm{p}},
                                           \end{cases}
					   \qquad m_{\mathrm{p}}: \,\textrm{proton mass}\,.
\]
The physical Hamiltonian \eqref{hepsialpha} depends on the two small dimensionless parameters $\epsi$ and $\alpha$. The smallness of $\epsi$ is the basis for the Born-Oppenheimer approximation in molecular dynamics and the smallness of $\alpha$ allows for a perturbative understanding of electronic resonances.

Our aim is to construct a Born-Oppenheimer expansion for the resonances of the molecular system. To explain exactly what we mean by this statement we first recall   some known results about the two limit cases which are contained in the Hamiltonian $H^{\epsi, \alpha}$, the case $\epsi=0, \alpha\neq 0$ (a molecule coupled to electromagnetic field with clamped nuclei) and the case $\epsi\neq 0, \alpha=0$ (a molecule with dynamical nuclei but no coupling to the field).  

\subsection[Electronic resonances for fixed nuclei ]{Electronic resonances for fixed nuclei ($\epsi=0 ,\,\alpha >0$)}

When $\epsi=0$, the Hamiltonian depends parametrically on the nuclear configuration $x$ and becomes
\begin{equation*}
H^{\epsi=0,\alpha}(x) =: H_{{\rm el}}(x) + \Hf + W(\alpha, \Lambda_0),
\end{equation*}
where the electronic Hamiltonian
\begin{equation}\label{hel}
H_{{\rm el}}(x) := \sum_{j=1}^{r}p_{j, y}^{2} + \vrm{ee}(y) + V_{{\rm en}}(x, y) + \vrm{nn}(x)
\end{equation}
is for every fixed $x$ a self-adjoint operator on $\Hi_{\rm el} = L^{2}(\field{R}{3r}_{y})$. We assume that the spectrum of $H_{{\rm el}}(x)$ is of the form
\begin{equation*}
\sigma(H_{{\rm el}}(x)) = \{E_{0}(x), E_{1}(x), \ldots\}\cup [\Sigma(x), \infty),
\end{equation*}
where $E_{0}(x)<E_{1}(x)<E_{2}(x)<\ldots\leq\Sigma(x)$ are eigenvalues of finite multiplicity below $\Sigma(x)$, possibly with an accumulation point at $\Sigma(x)$ and absolutely continuous spectrum in $[\Sigma(x), +\infty)$.
As shown in \cite{Zi}, if 
\begin{equation}\label{zishlin}
r\leq\sum_{j=1}^{l}Z_{j}
\end{equation}
then $H_{\mathrm{el}}(x)$ has an infinite number of eigenvalues below the threshold $\Sigma(x)$.

Under the same hypothesis \eqref{zishlin} it was shown in \cite{LiLo}, using a binding condition introduced in \cite{GLL}, that $H^{\epsi = 0, \alpha}$ has a ground state $E(x)$ for every $\alpha>0$ (using a smooth ultraviolet cutoff). The existence of the ground state for small values of the fine structure constant $\alpha$ has   been shown before in \cite{BFS}.

It is expected that the electronic eigenvalues $E_1(x), E_2(x),\ldots$ turn into resonances and that apart from the ground state the spectrum  of $H^{\epsi = 0, \alpha}$ is absolutely continuous.
It was shown in \cite{BFS} (see also \cite{AFFS, HHH}) for the case $l=1$ (an atom)
that the eigenvalues $\{E_{j}\}_{j>0}$ become resonances   in the sense of the Aguilar-Balslev-Combes-Simon   theory (\cite{HiSi} chapters 16--18, \cite{ReSi4} sections XII.6, XIII.10, \cite{Si}). 
We quote a typical result (cf.\ e.g.\
 Corollary~2 from \cite{HHH}) on the exponential decay of resonant states  without giving technical details.\\[2mm]
\noindent{\bf Almost exponential decay of atomic resonances.} {\em
Let $P_{j}$ be the spectral projection of $H_{\rm el}= H_{\rm el}(0)$ corresponding to 
the eigenvalue  $E_j$, $j\not=0$, and let $Q_0$ be the projection on the Fock-vacuum.  
For $\Psi\in\,{\rm Ran}P_j\otimes Q_0$ normalized to one it holds that
\[
\| (P_j\otimes Q_0) \,\E^{-\I t H^{\epsi=0,\alpha} }  \Psi \|^2= \E^{-t \alpha^3 \gamma} + b(\alpha, t)\,,
\]
where $b(\alpha, t) = \Or(\alpha^\frac{1}{2})$ uniformly in time and $\gamma>0$.
It follows that  the lifetime of the resonance is of order $\alpha^{-3}$.
}\\[2mm]
The difficult part in proving such results is to control the error term uniformly in time. For short times $t\ll \alpha^{-3}$ the decay rate into any other state $E_i<E_j$ can be easily computed by a perturbative argument known as Fermi's golden rule. \\[2mm]
\noindent{\bf Fermi's golden rule.} {\em 
Let $P_{j}$ and $P_i$ be the spectral projections of $H_{\rm el}= H_{\rm el}(0)$ corresponding to 
the eigenvalues  $E_i<E_j$  and let $Q_0$ be the projection on the Fock-vacuum. Then for $\Psi\in\,{\rm Ran}P_j\otimes Q_0$ normalized to one it holds that
\begin{equation}\label{FGR}
\| (P_i\otimes {  1}) \,\E^{-\I t H^{\epsi=0,\alpha} }\Psi \|^2= 
\tfrac{4}{3} \,\alpha^3 (E_j-E_i)^3 |D_{ij}|^2 \,  t \;+\;o(\alpha^3)
\end{equation}
uniformly on bounded time intervals, 
where $|D_{ij}|$ is the dipole-matrix element, c.f.\ Section~\ref{regime}.
}\\[2mm]
Since the natural time scale for nuclear dynamics is short in this sense, we will not be interested in results on exponential decay on long time scales for dynamical molecules (it is not even clear what this would exactly mean), but in explicit decay rates in the form of Fermi's golden rule. However, as will be explained in Section~\ref{regime}, for moving nuclei the decay rate depends on the configuration of the nuclei, which in turn changes quickly on the time scale of the resonance. As a consequence
one can not just adapt the usual perturbative argument in order to compute decay rates for dynamical molecules. But before coming to the full problem, let us first recall some basic facts about the case $\alpha=0$ and $\epsi>0$.

\subsection[Dynamical nuclei without coupling to the field]{Dynamical nuclei without coupling to the field\\ ($\epsi>0,\,\alpha =0$)}

This case is the setting of the standard time-dependent Born-Oppenheimer approximation (e.g.\ \cite{HaJo1,MaSo1,SpTe,Teu2,PST1,MaSo2}). The Hamiltonian has  the form
\begin{equation*}
H^{\epsi, \alpha=0} = -\epsi^{2}\Delta_{x} + H_{\mathrm{el}}(x)\,,
\end{equation*}  
where we omit the field Hamiltonian $\Hf$ because it commutes with the rest and is therefore irrelevant. 

For kinetic energies of order one (in units of Rydberg!)
 the nuclei have  velocities of order $ \epsi $. The   time scale on which the nuclei move distances of order one (in units of Bohr radii) are thus times of order $ \epsi^{-1} $. Hence it is natural to change the unit of time as well and to solve 
 the time-dependent  Schr\"odinger equation 
\begin{equation*}
\I\epsi\tfrac{\D}{\D t}\psi(t) = H^{\epsi, \alpha=0}\,\psi(t)\, .
\end{equation*} 
The long time-scale will be reflected in the following by the fact, that we evaluate unitary groups at times 
$\frac{t}{\epsi}$, i.e.\ we consider $\E^{-\I H \frac{t}{\epsi}}$ for $t$ of order one.

To avoid additional technicalities, one assumes that $H_{\mathrm{el}}$ is in a suitable sense a smooth function of $x$. This requires to introduce a smearing of the nuclear charge distribution $\varphi\in C_0^\infty(\R^3)$ with $\varphi \geq 0$ and $\int\varphi=1$.
The electronic repulsion remains unchanged
\begin{equation*}
V_{\mathrm{ee}}(y) = \sum_{n=1}^{r-1}\sum_{m=n+1}^{r}\frac{1}{\abs{y_{n}-y_{m}}},
\end{equation*}
while the electron-nucleon attraction and the nuclear repulsion become
\begin{equation*}
V_{\mathrm{en}}(x, y) = -\sum_{n=1}^{r} \sum_{m=1}^{l}  \left(\tfrac{2}{\pi}\right)^{1/2}Z_m \int_{\field{R}{3}}\D k\, \frac{\hat{\varphi}(k)}{\abs{k}^{2}}\E^{\I k\cdot(y_{n}-x_{m})}
\end{equation*}
and
\begin{equation*}
\vrm{nn}(x) = \sum_{n=1}^{l-1}\sum_{m=n+1}^{l}4\pi Z_nZ_m\int_{\field{R}{3}}\D k\, \frac{\abs{\hat{\varphi}(k)}^{2}}{\abs{k}^{2}}\E^{\I k\cdot(x_{n}-x_{m})} \,.
\end{equation*}

In \cite{MaSo2} a ``twisted'' pseudo-differential calculus is introduced, which, generalizing Hunziker's distortion analyticity method \cite{Hu}, allows to treat also the case of the unsmeared Coulomb potential.

Let $E_{j}(x)$ be an eigenvalue of the electronic Hamiltonian $H_{\mathrm{el}}(x)$ which is globally isolated by a gap from the rest of the spectrum. \\[2mm]
\noindent{\bf Definition of isolated electronic eigenvalues.} {\em 
Let for all $x\in\R^{3l}$ be 
  $ E_{j}(x) $  an eigenvalue of the electronic Hamiltonian $H_{\mathrm{el}}(x)$. The family $E_j(x)$   is called isolated, if  there exist two functions $f_{\pm}\in C_{\mathrm{b}}(\field{R}{3l}_{x}, \field{R}{})$ defining an interval $I(x)=[f_{-}(x), f_{+}(x)]$ such that 
\begin{equation*}
\sigma(H_{\mathrm{el}}(x)) \cap I(x) = E_{j}(x) \quad\textrm{and}\quad \inf_{x\in\field{R}{3l}}\mathrm{dist}\big(E_{j}(x), \sigma(H_{\mathrm{el}}(x))\setminus E_{j}(x)\big) >0 \, .
\end{equation*}
 }

This condition implies that $E_j(x)$ and the spectral projection $P_{j}(x)$ onto the eigen\-space of $E_{j}(x)$ are smooth functions of $x$, c.f.\ Lemma~\ref{derivPj}. We denote by $P_{j}$ the direct integral 
\begin{equation*}
P_{j} := \int^{\oplus}_{\field{R}{3l}}\D x\, P_{j}(x)
\end{equation*}
which acts on $\Hi_{\rm nuc} \otimes \Hi_{\rm el} \cong  L^{2}(\field{R}{3l}, \Hi_{\rm el})$.

The Born-Oppenheimer approximation rests on the observation that  the electronic state adjusts adiabatically to the slow motion of the nuclei, i.e., that the subspace $P_j \Hi_{\rm mol}$ is approxiamtely invariant under the time evolution.

 A rigorous version of this statement is the following theorem from \cite{SpTe}, which is also a special case of Proposition~\ref{TheoBOnoField} proven below. \\[3mm]
\noindent {\bf Leading order Born-Oppenheimer approximation.} {\em
The operator
\[
 \hepsiop_{j } :=   P_j H^{\epsi,\alpha=0} P_j   + (1-P_j) H^{\epsi,\alpha=0} (1-P_j  )
\] 
 is self-adjoint  on the domain $D$ of $H^{\epsi,\alpha=0}$ and satisfies
\begin{equation}\label{leadingBO0}
\left\|  \E^{-\I \frac{t}{\epsi}     \,H^{\epsi,\alpha=0}}  -\E^{-\I \frac{t}{\epsi} \hepsiop_{j }}       \right\|_{\mathcal{L}(D,\Hi_{\rm mol})} = \Or(\epsi|t|+\epsi)\,.
\end{equation}
}\\
This result is optimal in the sense that the difference is not smaller than order $\epsi$. However, the overlap 
\[
\| P_i \, \E^{-\I \frac{t}{\epsi}     \,H^{\epsi,\alpha=0}} \,P_j \| =\Or(\epsi)\quad \mbox{for $i\not= j$}
\]
that the true time evolution introduces between the different electronic subspaces does not correspond to actual transitions between electronic states. Indeed, the subspaces $P_j\Hi_{\rm mol}$ can be replaced by slightly deformed super\-adiabatic subspaces $P_j^\epsi\Hi_{\rm mol}$ that are invariant to higher order in $\epsi$. Physically  in $P_j^\epsi\Hi_{\rm mol}$ the electronic state now depends also on the velocity of the nuclei.

We construct such superadiabatic projections in Proposition~\ref{BOprop}. A straightforward consequence is   the following statement. \\[3mm]
\noindent {\bf Second order Born-Oppenheimer approximation.} {\em
The operator
\[
\tilde H^\epsi_{j } :=   P^\epsi_j H^{\epsi,\alpha=0} P^\epsi_j   + (1-P^\epsi_j) H^{\epsi,\alpha=0} (1-P^\epsi_j  )
\] 
 is self-adjoint  on the domain $D$ of $H^{\epsi,\alpha=0}$ and satisfies
\begin{equation}\label{leadingBO1}
\left\|\left( \E^{-\I \frac{t}{\epsi}     \,H^{\epsi,\alpha=0}}  - \E^{-\I \frac{t}{\epsi} \tilde H^\epsi_{j }}   \right)  {\bf 1}_{(-\infty,E)}(H^{\epsi,\alpha=0})   \right\|_{\mathcal{L}(\Hi_{\rm mol})} = \Or(\epsi^2|t|)\,.
\end{equation}
 }\\
Here an energy cutoff at an arbitrary but fixed energy $E$ is needed.
This improved approximation of the dynamics  is necessary for obtaining error terms smaller than the effect we are interested in, namely transitions between different electronic levels due to spontaneous emission of photons. But it turns out that a rigorous  control of these error terms requires  to prove (\ref{leadingBO0}) and (\ref{leadingBO1}) with respect to more general energy norms, which is the main new content of Propositions~\ref{TheoBOnoField} and~\ref{BOprop}. 

On the other hand, (\ref{leadingBO1})  can be shown with an error of order $\epsi^N$ for any $N\in\N$. Martinez and Sordoni even prove exponential error bounds without assuming a regularization on the nuclear charges, \cite{MaSo2}. However, the task of computing the exponentially small transition probabilities between superadiabatic subspaces (transitions that happen without emission of photons) is extremely difficult   even on a heuristic level, see \cite{HaJo2,BGT,BeGo}.

Another question is, whether one can dispose with the gap condition. At crossings of electronic eigenvalues the Born-Oppenheimer approximation breaks down and transitions between the levels occur at order $\epsi^0$, cf.\ \cite{LaTe} and references therein. 
For eigenvalues embedded into or at the threshold to continuous spectrum the rate of transition depends on the details of the model (see e.g.\ \cite{Ten} and \cite{TeTe}). 

Finally we remark that the importance of the Born-Oppenheimer approximation lies in the observation that  the diagonal Hamiltonian $  H^\epsi_{j }$,  when acting on states in the range of $P_j$, has an asymptotic expansion  starting with very simple terms,
\[
 H^\epsi_{j } P _j =  \left(-  (\epsi\nabla_x^{\rm Berry})^2 + E_j(x)\right) \,P_j  +\Or(\epsi^2)\,.
\]
Here $\nabla_x^{\rm Berry} := P_j\nabla_x P_j$ is the so-called Berry connection. Note that the electronic eigenvalue $E_j(x)$ appears as an effective potential for the motion of the nuclei.
To get the correct higher order terms, one needs to expand $\tilde H^\epsi_j$ on the range of $P^\epsi_j$ 
instead. While at zeroth and first order one obtains the same expansion as for $H^\epsi_j P_j$, starting at second order additional terms appear, see for example in \cite{PST2}.
Thus the unitary groups $\E^{-\I \frac{t}{\epsi}   H^\epsi_{j }} P_j $ resp.\ $\E^{-\I \frac{t}{\epsi} \tilde H^\epsi_{j }} P_j^\epsi$ can be computed by solving a Schr\"odinger equation for the nuclei only. Note that (\ref{leadingBO0}) and (\ref{leadingBO1}) are indeed the nontrivial mathematical statements to prove for justifying the time-dependent Born-Oppenheimer approximation.

\subsection[Dynamical nuclei with coupling to the field]{Dynamical nuclei with coupling to the field\\ ($\epsi,\alpha >0$)}\label{regime}

The coupling to the quantized radiation field presumably turns all electronic eigenvalues except for the ground state into resonances.
Our first aim is to prove that the Born-Oppenheimer approximation for a molecule described by $H^{\epsi, \alpha}$ remains valid. This makes sense only if the lifetime of the resonance, given according to the above discussion by $\alpha^{-3}$, is bigger than the time scale of molecular dynamics, given by $\epsi^{-1}$. 
To control the relation between the two scales we thus choose $\alpha$ to be a function of $\epsi$ such that $\alpha(\epsi)^{-3}>\epsi^{-1}$. Assuming that 
\begin{equation}\label{alphaepsi}
\alpha(\epsi) = \epsi^{\beta}, \qquad \beta>0,
\end{equation}
this condition implies that $\beta> \frac{1}{3}$. 
This is always true for realistic nuclei because
\[
m_{\rm p} \leq m_{\rm nuc} \leq 250 m_{\rm p} \quad\mbox{corresponds to} \quad   \epsi_{\rm min} = \frac{1}{680}  \leq \epsi \leq \frac{1}{43}= \epsi_{\rm max}\,,
\]
where $m_{\rm p}$ is the proton mass.
Thus
\[
\beta_{\rm min} = \frac{\ln \alpha}{\ln \epsi_{\rm min}} \approx 0.75 \quad \mbox{and}\quad  \beta_{\rm max} = \frac{\ln \alpha}{\ln \epsi_{\rm max}} \approx 1.31\,,
\]
which suggests to consider $\frac{3}{4}<\beta<\frac{4}{3}$. For some results we are able to cover even the range $\frac{2}{3}<\beta<\frac{4}{3}$, while for others we have to restrict to  $\frac{5}{6}<\beta<\frac{4}{3}$ which corresponds to $m_{\rm p} \leq m_{\rm nuc} \leq 72 m_{\rm p}$.
 
Inserting \eqref{alphaepsi} into $H^{\epsi,\alpha}$ as given in \eqref{hepsialpha} and expanding in powers of $\epsi$, we get a  Hamiltonian which depends just on $\epsi$.
Setting 
\begin{eqnarray}
H_0^\epsi &:= & \epsi^2\sum_{j=1}^{l}p_{j, x}^{2} + H_{\rm el}(x) + \Hf \nonumber\\
H_1^\epsi &:=& - \, 4\pi^{1/2} \sum_{j=1}^{r}A (\epsi^\beta y_{j})\cdot p_{j, y} \nonumber\\
H_2^\epsi &:=& - \,4\pi^{1/2}\sum_{j=1}^{l}Z_{j}A (\epsi^\beta x_{j})\cdot\epsi p_{j,x}  \nonumber \\
&&+\, \epsi^{\frac{3}{2}\beta-1}4\pi\sum_{j=1}^{r}:\hspace{-3pt}A (\epsi^\beta y_{j})^{2}\hspace{-3pt}: \;+\; \epsi^{\frac{3}{2}\beta+1}4\pi\sum_{j=1}^{l}Z_{j}^{2}:\hspace{-3pt}A (\epsi^\beta x_{j})^{2}\hspace{-3pt}: \,,\nonumber
 \end{eqnarray}
 where we normal ordered the quadratic terms,
we can write $H^\epsi := H^{\epsi,\alpha(\epsi)}$ as
\begin{equation}\label{hepsiop}
H^\epsi = \hzeroepsiop + \epsi^{\frac{3}{2}\beta}\hepsiop_{1} + \epsi^{\frac{3}{2}\beta +1}\hepsiop_{2} \,.
\end{equation}
Note that we think of $\epsi p_{j,x}$ being of order $\epsi^0$, since we want to look at states with nuclear kinetic energy of order $\epsi^0$. 
The leading order term $H_0^\epsi$ contains no coupling to the field at all. The first order term $H^\epsi_1$ describes the linear coupling of the electrons to the field and will be the relevant term for understanding spontaneous emission of  photons. Contributions from $H^\epsi_2$ will always be of lower order and contribute only to our error terms.

  Lemma~\ref{SALemma} below asserts that $H^\epsi$ is a well-defined self-adjoint operator for $\epsi$ sufficiently small and that the expansion (\ref{hepsiop}) makes actually sense, since the coefficients $H^\epsi_1$ and $H^\epsi_2$ are relatively $H_0^\epsi$-bounded with relative bounds independent of $\epsi$.

We come now to  an informal statement of our main results. 
Let 
\[
\hepsiop_{j,\rm field } := H_{j }^\epsi \otimes 1 \;+\; 1\otimes H_{\rm f}\,.
\]
In Corollary~\ref{TheoBOField} we show that, up  to a worse error estimate, the statement of  (\ref{leadingBO0}) 
remains valid.  \\[3mm]
\noindent {\bf Leading order BO-approximation with coupling to the field.} \\{\em
For $\frac{2}{3}<\beta\leq\frac{4}{3}$ it holds that
\[ 
\bigg\lVert\E^{-\I \frac{t}{\epsi} \hepsiop } -   \E^{-\I \frac{t}{\epsi} \hepsiop_{j,\rm field } } \bigg\rVert_{\mathcal{L}(D_0 ,\Hi)}
 \;=\;\Or(\epsi^{\frac{3}{2}\beta-1}\abs{t}+\epsi)\,.
\]
}\\
Technically this is a straightforward perturbative consequence of (\ref{leadingBO0}), as
the contribution of $\epsi^{\frac{3}{2}\beta}H^\epsi_1$ is of order $\epsi^{\frac{3}{2}\beta-1}$ for times of order $\epsi^{-1}$. However, we still believe that this result is conceptually important. It shows  that, in the context where the Born-Oppenheimer approximation is usually applied, the coupling to the radiation field is negligible at leading order. To our knowledge this is the first mathematical result of this type.

Our main result concerns the failure of the Born-Oppenheimer approximation because of spontaneous emission of photons. However, as we will show, the probability for making a transition through spontaneous emission is of order $\epsi^{3\beta -1}|t|$, which  for $\beta \geq 1$ is smaller than $\epsi^2$, the square of the error in the standard Born-Oppenheimer approximation (\ref{leadingBO0}).
Hence for $\beta \geq 1$ we need to consider transitions between  the superadiabatic subspaces $P^\epsi_j$  in order to correctly separate transitions through spontaneous emission from errors in the adiabatic approximation.  
Our main result is then the following, cf.\ Theorem~\ref{decayformula}. \\[3mm]
{\bf Probability for spontaneous emission.} {\em
Let $E_j(x) > E_i(x)$ for all $x$ and let $\Psi = \psi\otimes \Omega$ with $\psi  \in$ {\rm Ran}$P_{j }^\epsi$ and $\Omega\in\fock$ the vacuum state. 
The probability for ending up in the $i$-th electronic state   after time $t$ when starting in $\Psi$   is
\begin{equation}\label{simpform}
 \| (P_i^\epsi\otimes 1)\,  \E^{-\I \frac{t}{\epsi} \hepsiop }  \,\Psi \|^2 = \epsi^{ 3\beta-1}\, \int_0^{t}  \tfrac{4}{3}\,\big\||D_{ij}| \Delta_E^{3/2} \,
\E^{-\I\frac{s}{\epsi} \hepsiop_{j}}P_j \psi \big\|^2\,\D s   + o(\epsi^{ 3\beta-1})\,.
\end{equation}
 Here $\Delta_E$ and $D_{ij}$ are real-valued multiplication operators, namely  $\Delta_E(x) =  E_j(x) - E_i(x)$ the difference in energy and  $D_{ij}(x):= \sum_{k=1}^r \langle \varphi_i(x) | y_k \varphi_j(x) \rangle_{\Hi_{\rm el}}$  the dipole coupling element. $\varphi_i(x)$ and $\varphi_j(x)$ are normalized electronic states in {\rm Ran}$P_i(x)$ and {\rm Ran}$P_j(x)$ respectively.
}\\[3mm]
Thus the decay probability can be computed by propagating the initial molecular wave function according to the standard Born-Oppenheimer approximation in the level $E_j$ and integrating the decay rate along this trajectory. The decay rate is given by $\tfrac{4}{3}\,\alpha^3\epsi^{-1}|D_{ij}(x)|^2 \Delta_E(x)^3$ as a function of the nuclear configuration $x$. Recall that $\alpha=\epsi^\beta$ and that we rescaled time by $\epsi^{-1}$. This is   the natural generalization of Fermi's golden rule for atoms, c.f.\ (\ref{FGR}),
to moving nuclei. 

We briefly sketch the strategy of our proof and comment on some difficulties.
The basic idea is to use time-dependent perturbation theory according to the splitting
\[
\hepsiop = H_{j,\rm field}^\epsi  + \epsi^{\frac{3}{2}\beta} H_1^\epsi + \Or( \epsi^{\frac{3}{2}\beta+1})\,.
\]
Note that the $\Or(\epsi^2)$ term contains the $H_2^\epsi$ term, i.e.\ higher order terms in the coupling to the field, and error terms from the Born-Oppenheimer approximation, i.e.\ $P_j^\epsi \hepsiop (1-P_j^\epsi)$ and its adjoint. Abbreviating $P_{j,0}^\epsi := P^\epsi_j \otimes Q_0$, first order time-dependent perturbation theory gives at least formally
 \begin{eqnarray*} 
P_i^\epsi \E^{-\I \frac{t}{\epsi} \hepsiop } P_{j,0}^\epsi   &=& \underbrace{P_i^\epsi \E^{-\I \frac{t}{\epsi} H_{j,\rm field}^\epsi} P_{j,0}^\epsi}_{=0} -\frac{\I\epsi^{\frac{3}{2}\beta}}{\epsi} P_i^\epsi \int\limits_0^t \E^{-\I \frac{t-s}{\epsi} H_{j,\rm field}^\epsi} 
H_1^\epsi \E^{-\I \frac{s}{\epsi} H_{j,\rm field}^\epsi} P_{j,0}^\epsi\D s + \Or(\epsi^{\frac{3}{2}\beta})\\ &=& 
-\I \epsi^{\frac{3}{2}\beta -1} \underbrace{\int\limits_0^t \E^{-\I \frac{t-s}{\epsi} H_{i,\rm field}^\epsi} 
 P_i^\epsi H_1^\epsi P_{j,0}^\epsi \,\E^{-\I \frac{s}{\epsi} H_{j,\rm field}^\epsi} \D s}_{(*)} + \Or(\epsi^{\frac{3}{2}\beta})\,.
\end{eqnarray*}
This integral expression  is certainly a correct formula for the leading order piece of the wave function that made a transition after time $t$. However, since $P_i^\epsi H_1^\epsi P_{j,0}^\epsi$ is of order one, it seems at first sight to be of order $ \epsi^{\frac{3}{2}\beta -1}$, giving a transition probability of order $ \epsi^{3\beta -2}$. This is by a factor of $\epsi^{-1}$ larger than the expected value of order $\alpha^3\epsi^{-1}= \epsi^{3\beta -1}$. Thus the integral $(*)$ must be of order $\epsi^\frac{1}{2}$ due to oscillations. We don't see any way, however, to evaluate  $(*)$ directly  in order to get the simple formula (\ref{simpform}).
This is because the ``unperturbed dynamics'' given by the Born-Oppenheimer approximation is still a highly nontrivial Schr\"odinger evolution for many interacting particles.  In order to obtain a perturbative integral expression for the leading order transitions that has less oscillations and is thus tractable, we  replace $P_{j,0}^\epsi$ by  dressed superadiabatic vacuum projections $P_{j,\rm vac}^\epsi$. 
Physically speaking, this is because the leading order effect of the coupling to the field is a dressing of the electrons of order $\epsi^{\frac{3}{2}\beta}=\alpha^\frac{3}{2}$. The rate of spontaneous emission is only of order $\epsi^{\frac{3}{2}\beta +\frac{1}{2}}$. However, since the dressing does not grow as a function of time, after times of order $\epsi^{-1}$ the spontaneous emission   of order $\epsi^{\frac{3}{2}\beta -\frac{1}{2}}$ dominates the effect of the dressing. Therefore we can neglect the dressing in the final statement and it appears only in the proof.

 \section{The main results}\label{spaceadiabexp}

In this section we only give the main theorems and explain their proofs. The more technical proofs of the propositions and the lemmas are provided in Sections~\ref{Props} \& \ref{Lemmas} respectively.
In the first subsection we state a result about the time-dependent Born-Oppenheimer approximation and show that it remains valid, when we switch on the coupling to the field. In the next subsection we verify that also the superadiabatic subspaces survive in the coupled case.  Our central results will be presented in the last two subsections. There we consider the transitions between different energy levels, when there are no free photons in the beginning. In Section \ref{spontemi} we derive an expression for the leading order of the transition operator and in Section \ref{decayrate} we provide a more explicit formula for the transition rate.

\subsection{Born-Oppenheimer approximation with field}

First we consider only the molecular Hamiltonian
\[
\hepsiop_{\rm mol} := -\epsi^2 \sum_{j=1}^l \Delta_{x_j} + H_{\rm el}(x) \quad\mbox{on}\quad \Hi_{\rm mol} :=  L^2(\R_x^{3l}) \otimes L^2_{\rm as}(\R_y^{3r})
\]
with the usual domain $D_{\rm mol}:=W^{2,2}(\R^{3l+3r})\cap \Hi_{\rm mol}$. We denote the infimum of the spectrum of $\hepsiop_{\rm mol} $ by $e$ and set $\Delta_x:=\sum_{j=1}^l \Delta_{x_j}$. Furthermore, we define $D_{\rm mol}^0:=\Hi_{\rm mol}$ and denote the maximal domain of $(\hepsiop_{\rm mol})^n$ equipped with the graph norm by  $D_{\rm mol}^n$ for $n\in\N$. 

\medskip

Let $E_j(x)$ be an isolated energy band with $H_{\rm el}(x) P_j(x) = E_j(x) P_j(x)$. 
We will make use of the following version of the leading order time-dependent Born-Oppenheimer approximation.

\begin{proposition}\label{TheoBOnoField}
Let $E_j$ be an isolated energy band and $P_j$ the corresponding band projection.
The operator
\[
 \hepsiop_{j } :=   P_j \hepsiop_{\rm mol} P_j   + (1-P_j) \hepsiop_{\rm mol} (1-P_j  )
\] 
with domain $D_{\rm mol}$ is self-adjoint and it holds for $n=0,1$ that
\begin{equation}\label{leadingBO}
\left\|  \E^{-\I \frac{t}{\epsi}     \,\hepsiop_{\rm mol}}      -\E^{-\I \frac{t}{\epsi} \hepsiop_{j }}   \right\|_{\mathcal{L}(D_{\rm mol}^{n+1},D_{\rm mol}^n)} = \Or(\epsi|t|+\epsi)\,.
\end{equation}
Moreover, $P_j\in\mathcal{L}(D_{\rm mol}^n)$ with norm bounded independently of $\epsi$ for all $n\in\N$.
\end{proposition}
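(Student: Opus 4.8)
The plan is to establish the three assertions in turn: self-adjointness of $\hepsiop_j$ on $D_{\rm mol}$, the norm estimate \eqref{leadingBO} in the two energy norms $n=0,1$, and the boundedness of $P_j$ on all $D_{\rm mol}^n$. The last point is really the technical backbone, so I would prove it first. Since $E_j$ is isolated, the Riesz projection formula $P_j(x) = \frac{1}{2\pi\I}\oint_\Gamma (H_{\rm el}(x)-z)^{-1}\,\D z$ holds with a contour $\Gamma$ enclosing only $E_j(x)$, at uniform distance from the rest of the spectrum (this is exactly the content of the isolation hypothesis, and gives smoothness of $x\mapsto P_j(x)$, as recorded in the cited Lemma~\ref{derivPj}). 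Differentiating this formula repeatedly in $x$ produces expressions in which each $\partial_x$ either hits a resolvent — yielding $(H_{\rm el}-z)^{-1}(\partial_x V_{\rm en})(H_{\rm el}-z)^{-1}$ — or is already present from a previous differentiation; one then commutes the relatively bounded potential derivatives through the resolvents. The key analytic input is that $\partial_x^\alpha V_{\rm en}(x,y)$ is bounded (this uses the smearing $\varphi\in\coinf$ of the nuclear charge) and that $(H_{\rm el}(x)-z)^{-1}$ maps $\Hi_{\rm el}$ to $W^{2,2}$ boundedly, so that compositions $(\hepsiop_{\rm mol})^n P_j (\hepsiop_{\rm mol})^{-n}$ are bounded uniformly in $\epsi$ — here it matters that $\epsi^2\Delta_x$ only lowers the $\epsi$-weights, so the $\epsi$-uniformity is preserved. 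This gives $P_j\in\mathcal{L}(D_{\rm mol}^n)$ with $\epsi$-independent norm, and simultaneously $[\hepsiop_{\rm mol},P_j]$ is controlled.

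Self-adjointness of $\hepsiop_j$ on $D_{\rm mol}$ then follows: $\hepsiop_j = \hepsiop_{\rm mol} - P_j[\hepsiop_{\rm mol},(1-P_j)] - (1-P_j)[\hepsiop_{\rm mol},P_j]$, and the commutator $[\hepsiop_{\rm mol},P_j] = [-\epsi^2\Delta_x,P_j] = -\epsi^2\big((\Delta_x P_j) + 2(\nabla_x P_j)\cdot\nabla_x\big)$ is $\hepsiop_{\rm mol}$-bounded with relative bound $\Or(\epsi)$ by the $D_{\rm mol}^1$-boundedness just proved; so $\hepsiop_j$ is a small symmetric perturbation of the self-adjoint $\hepsiop_{\rm mol}$ on the same domain, and the Kato–Rellich theorem applies.

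For the dynamical estimate \eqref{leadingBO} I would use the standard Duhamel/adiabatic argument: write
\[
\E^{-\I\frac{t}{\epsi}\hepsiop_{\rm mol}} - \E^{-\I\frac{t}{\epsi}\hepsiop_j} = -\frac{\I}{\epsi}\int_0^t \E^{-\I\frac{t-s}{\epsi}\hepsiop_{\rm mol}}\,(\hepsiop_{\rm mol}-\hepsiop_j)\,\E^{-\I\frac{s}{\epsi}\hepsiop_j}\,\D s,
\]
and observe that $\hepsiop_{\rm mol}-\hepsiop_j = P_j\hepsiop_{\rm mol}(1-P_j) + (1-P_j)\hepsiop_{\rm mol}P_j = [[\hepsiop_{\rm mol},P_j],P_j]\cdot(\cdots)$ — more usefully, $P_j\hepsiop_{\rm mol}(1-P_j) = P_j[\hepsiop_{\rm mol},P_j](1-P_j) = -\epsi^2 P_j\big((\Delta_x P_j)+2(\nabla_x P_j)\cdot\nabla_x\big)(1-P_j)$, which is formally $\Or(\epsi^2)$ but really only $\Or(\epsi)$ on $D_{\rm mol}^{n+1}$ because of the surviving $\nabla_x$. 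The non-adiabatic-transitions trick is to write this off-diagonal term as a commutator $[\hepsiop_j,B^\epsi]$ with $B^\epsi$ bounded of size $\Or(\epsi)$ on the relevant energy spaces — concretely $B^\epsi := P_j[-\epsi^2\Delta_x,P_j]R$ with $R$ a suitable reduced resolvent of $\hepsiop_j$ on $\mathrm{Ran}(1-P_j)$ — so that under the integral the factor $\E^{-\I\frac{t-s}{\epsi}\hepsiop_{\rm mol}}[\hepsiop_j,B^\epsi]\E^{-\I\frac{s}{\epsi}\hepsiop_j}$ can be integrated by parts in $s$, trading the explicit $\epsi^{-1}$ against a $\frac{\D}{\D s}$ and leaving boundary terms of size $\Or(\epsi)$ plus an integrand of size $\Or(\epsi)$, whence the $\Or(\epsi|t|+\epsi)$ bound. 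All norm bookkeeping is done in $\mathcal{L}(D_{\rm mol}^{n+1},D_{\rm mol}^n)$, which is where the uniform $P_j\in\mathcal{L}(D_{\rm mol}^n)$ from the first step is used: it guarantees that $B^\epsi$ and the commutators map $D_{\rm mol}^{n+1}$ to $D_{\rm mol}^n$ with $\epsi$-independent constants.

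The main obstacle is the construction of $B^\epsi$ and the verification that it, together with the reduced resolvent $R$ and all the commutators appearing in the integration by parts, are bounded \emph{between the graph-norm spaces $D_{\rm mol}^n$ uniformly in $\epsi$} — the estimate is elementary in $\mathcal{L}(\Hi_{\rm mol})$ but the paper explicitly needs it in the stronger energy norms (this is flagged as ``the main new content''), and there the interplay between $\nabla_x P_j$ (which is $\Hi_{\rm el}$-bounded but not obviously $D_{\rm mol}^n$-bounded without exploiting smoothness of $P_j(x)$ to all orders) and powers of $\hepsiop_{\rm mol}$ has to be handled carefully, commuting $\hepsiop_{\rm mol}$ past $P_j$ one factor at a time and absorbing each resulting $\epsi^2\Delta_x$-derivative of $P_j$ via the contour-integral bounds from the first step.
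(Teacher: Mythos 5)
Your proposal follows essentially the same route as the paper: first establish $P_j\in\mathcal{L}(D_{\rm mol}^n)$ uniformly in $\epsi$ via the Riesz contour formula and the boundedness of $\partial_x^\alpha V_{\rm en}$, then conclude self-adjointness of $\hepsiop_j$ from Kato--Rellich since $\hepsiop_{\rm mol}-\hepsiop_j=(1-2P_j)[\hepsiop_{\rm mol},P_j]=\Or(\epsi)$, and finally run a Duhamel argument in which a Kato-style auxiliary operator converts the $\epsi^{-1}$ in front of the integral into a total $s$-derivative of a triple product of the two propagators with the auxiliary operator sandwiched between, leaving $\Or(\epsi)$ boundary terms and an $\Or(\epsi)$ integrand. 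Two small remarks: your $B^\epsi=P_j[-\epsi^2\Delta_x,P_j]R$ only produces one of the two off-diagonal blocks — the paper's auxiliary operator $K_j=R_j\hepsiop_{\rm mol}P_j+P_j\hepsiop_{\rm mol}R_j$ is the symmetrized version, i.e.\ you would need to add $(B^\epsi)^*$ to catch both $P_j\hepsiop_{\rm mol}P_j^\perp$ and $P_j^\perp\hepsiop_{\rm mol}P_j$; and the paper works with $[\hepsiop_{\rm mol},K_j]$ rather than $[\hepsiop_j,B^\epsi]$, which is the same to $\Or(\epsi^2)$ but bookkeeping-wise cleaner because it lets one use $\frac{\D}{\D s}\bigl(\E^{-\I\frac{t-s}{\epsi}\hepsiop_j}K_j\E^{-\I\frac{s}{\epsi}\hepsiop_{\rm mol}}\bigr)$ directly as the total derivative. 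Your identification of the ``main obstacle'' — getting all bounds uniformly in $\mathcal{L}(D_{\rm mol}^{n+1},D_{\rm mol}^n)$ rather than just in $\mathcal{L}(\Hi_{\rm mol})$ — is exactly what the paper isolates in Lemma~\ref{TechLem1} and Corollary~\ref{PCOR}.
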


This is a variant of a result in \cite{SpTe}. However, in \eqref{leadingBO} we have a slower growth of the bound as a function of time and better control on the domains, which is essential for the following.  The proof  given in Section~\ref{proofBOprop} is a streamlined and improved version  of the approach  in \cite{SpTe}. 
\medskip

Now we will take the radiation field into account. Recall that
\[
\hepsiop_0  := \hepsiop_{\rm mol}\otimes 1 + 1\otimes H_{\rm f}  
\]
on $D_0  = (D_{\rm mol} \otimes \fock) \cap (\Hi_{\rm mol} \otimes D(H_{\rm f}))$ and that
\[
H^\epsi = \hzeroepsiop + \epsi^{\frac{3}{2}\beta}\hepsiop_{1} + \epsi^{\frac{3}{2}\beta +1}\hepsiop_{2}
\]
was defined in (\ref{hepsiop}).
The following lemma  shows, in particular, that  $\hepsiop_0$ and $\hepsiop$ are self-adjoint on $D_0$. 

\begin{lemma}\label{SALemma}
The free Hamiltonian
\[
\hfree := \epsi^2 \sum_{j=1}^{l}p_{j,x}^{2} + \sum_{j=1}^{r}p_{j, y}^{2} + \Hf\,,
\]
with domain
\[
D_{0}:= D(\hfree) = (W^{2,2}(\R^{3l+3r})\otimes\fock) \cap (\Hi_{\rm mol} \otimes D(H_{\rm f}))
\]
is self-adjoint. 
The potentials $\vrm{ee}$, $V_{\rm en}$ and $\vrm{nn}$ are infinitesimally $\hfree$ bounded and 
$H^\epsi_1$ and $H^\epsi_2$ are $\hfree$-bounded with relative bounds independent of $\epsi$.\\[1mm]
Hence, by Kato-Rellich, $H_0^\epsi$ and $H^\epsi$ are self-adjoint on the domain $D_0$ for $\epsi$ small enough.
\end{lemma}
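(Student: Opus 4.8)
The plan is to verify the three assertions of Lemma~\ref{SALemma} in order: self-adjointness of $\hfree$ on $D_0$, the relative boundedness of the Coulomb potentials, and the relative boundedness of $H_1^\epsi$ and $H_2^\epsi$ with $\epsi$-independent bounds; the concluding self-adjointness of $H_0^\epsi$ and $H^\epsi$ is then immediate from Kato--Rellich.

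For the first assertion, $\hfree$ is a sum of three commuting non-negative self-adjoint operators acting on different tensor factors: $\epsi^2\sum_j p_{j,x}^2$ on $L^2(\R^{3l}_x)$, $\sum_j p_{j,y}^2$ on $L^2_{\rm as}(\R^{3r}_y)$, and $H_{\rm f}$ on $\fock$. Each is self-adjoint on its natural (Sobolev, resp.\ Fock) domain, and the sum of commuting self-adjoint operators is essentially self-adjoint on the algebraic tensor product of the individual domains, with closure having exactly the stated domain $D_0$. I would just cite the standard fact (e.g.\ via the joint functional calculus, or Reed--Simon) and note that the factor $\epsi^2$ only rescales the $x$-Laplacian and does not affect the domain.

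For the Coulomb potentials, $\vrm{ee}(y)$, $V_{\rm en}(x,y)$ and $\vrm{nn}(x)$ depend only on the electronic/nuclear variables and act as the identity on $\fock$, so I would first reduce to showing they are infinitesimally $(-\Delta_x-\Delta_y)$-bounded on $\Hi_{\rm mol}$ and then tensor with $1_\fock$ and $1_{D(H_{\rm f})}$; since $H_{\rm f}\geq 0$, infinitesimal boundedness relative to $-\epsi^2\Delta_x-\Delta_y$ implies the same relative to $\hfree$. In the \emph{smeared} case (nuclear charge density $\varphi\in C_0^\infty$) the potentials are in fact bounded — $V_{\rm en}$ and $\vrm{nn}$ are bounded continuous functions because convolving $|\cdot|^{-1}$ with $\hat\varphi/|k|^2$-type kernels produces bounded functions — so this is trivial; in the unsmeared Coulomb case one invokes the classical Kato inequality $|\,|y_n-y_m|^{-1}\psi\,|\le a\|\Delta\psi\|+b\|\psi\|$ for any $a>0$. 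The $\epsi^2$ in front of $\Delta_x$ is harmless here because for $\epsi\leq1$ one has $\|\epsi^2\Delta_x\psi\|\le\|\Delta_x\psi\|$, so infinitesimal bounds transfer (with $\epsi$-independent constants).

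The main work, and the only genuinely non-trivial point, is the relative boundedness of $H_1^\epsi=-4\pi^{1/2}\sum_j A(\epsi^\beta y_j)\cdot p_{j,y}$ and the $A^2$- and $A\cdot\epsi p_x$-terms in $H_2^\epsi$, uniformly in $\epsi$. The standard tool is the estimate $\|a(f)\psi\|\le\|f/\sqrt{|k|}\|_{L^2}\,\|H_{\rm f}^{1/2}\psi\|$ and $\|a^*(f)\psi\|\le\|f/\sqrt{|k|}\|_{L^2}\|H_{\rm f}^{1/2}\psi\|+\|f\|_{L^2}\|\psi\|$, which bounds each component of $A(\epsi^\beta q)$ by $C(H_{\rm f}+1)^{1/2}$ with a constant $C$ governed by $\|e_\lambda(k)\E^{\pm\I\epsi^\beta k\cdot q}\mathbf 1_{|k|\le\Lambda_0}/\sqrt{|k|}\|_{L^2}$ and $\|\cdot\,\mathbf 1_{|k|\le\Lambda_0}\|_{L^2}$; crucially these norms are \emph{independent of $q$} (the phase $\E^{\pm\I\epsi^\beta k\cdot q}$ has modulus one) and independent of $\epsi$, since the cutoff is fixed at $\Lambda_0$. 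Then $\|A(\epsi^\beta y_j)\cdot p_{j,y}\psi\|\lesssim \|(H_{\rm f}+1)^{1/2}p_{j,y}\psi\|$, and since $H_{\rm f}$ and $p_{j,y}$ act on different factors and commute, an interpolation/Young inequality $\|(H_{\rm f}+1)^{1/2}p_{j,y}\psi\|\le\delta\|(-\Delta_y+H_{\rm f})\psi\|+C_\delta\|\psi\|\le\delta\|\hfree\psi\|+C_\delta\|\psi\|$ gives the infinitesimal, $\epsi$-uniform bound. The $A^2$ terms come with prefactors $\epsi^{\frac32\beta-1}$ and $\epsi^{\frac32\beta+1}$; bounding $:\!A(\epsi^\beta q)^2\!:$ by $C(H_{\rm f}+1)$ (again uniformly in $q,\epsi$) one sees these are relatively bounded with relative bound $O(\epsi^{\frac32\beta-1})$, which for $\beta>\frac23$ tends to zero, so a fortiori they are $\hfree$-bounded with $\epsi$-uniform (indeed small) relative bound. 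Assembling: $H_1^\epsi$ and $H_2^\epsi$ are $\hfree$-bounded uniformly in $\epsi$, hence so is $\epsi^{\frac32\beta}H_1^\epsi+\epsi^{\frac32\beta+1}H_2^\epsi+V_{\rm ee}+V_{\rm en}+V_{\rm nn}$, and for $\epsi$ small its total relative bound is $<1$, so Kato--Rellich yields self-adjointness of $H_0^\epsi$ and $H^\epsi$ on $D_0$. The expected obstacle is purely bookkeeping: tracking that none of the constants in the $a^{\#}$-bounds pick up hidden $\epsi$- or $q$-dependence through the oscillatory exponential and the fixed cutoff, and correctly commuting $H_{\rm f}^{1/2}$ past $p_{j,y}$ to land inside $D(\hfree)$.
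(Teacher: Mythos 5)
Your proposal follows essentially the same route as the paper: bound each component of the vector potential $A(\epsi^\beta q)$ by $C(H_{\rm f}+1)^{1/2}$ via the standard $a^\#$-estimates, observe that the governing $L^2$-norms of the coupling functions are independent of $q$ (the phase has modulus one) and of $\epsi$ (the cutoff is fixed at $\Lambda_0$), and then invoke Kato--Rellich. The only real difference is presentational: the paper does not rederive the $a^\#$-estimates but cites Proposition~1 and the proof of Proposition~2 of [Ten], whereas you spell them out; and you supply the (routine) self-adjointness of $\hfree$ and the Coulomb bounds that the paper dismisses as ``standard.''

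There is, however, one genuine slip. You assert that the cross term yields an infinitesimal bound,
\[
\|(H_{\rm f}+1)^{1/2}p_{j,y}\psi\|\le \delta\,\|(-\Delta_y+H_{\rm f})\psi\| + C_\delta\|\psi\|\quad\text{for every }\delta>0,
\]
and conclude that $H_1^\epsi$ is infinitesimally $\hfree$-bounded. This is false: since $H_{\rm f}$ and $-\Delta_y$ commute, the inequality reduces on the joint spectrum to $(a+1)\,b \le \delta(a+b)^2 + C_\delta$ for $a,b\ge 0$, and putting $a=b=t$ gives $t^2+t\le 4\delta t^2 + C_\delta$, which fails for $t\to\infty$ once $\delta<1/4$. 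The correct statement — and exactly what the lemma claims — is that $H_1^\epsi$ (and $H_2^\epsi$) are merely $\hfree$-bounded with an $\epsi$-independent relative bound, e.g.\ via $(H_{\rm f}+1)(-\Delta_y) \le \tfrac12(H_{\rm f}+1-\Delta_y)^2 \le \tfrac12(\hfree+1)^2$. This does not damage your conclusion: the perturbation $\epsi^{\frac32\beta}H_1^\epsi + \epsi^{\frac32\beta+1}H_2^\epsi$ still has relative bound $O(\epsi^{\frac32\beta})<1$ for small $\epsi$, which together with the infinitesimal boundedness of the potentials gives Kato--Rellich for $H_0^\epsi$ and $H^\epsi$ on $D_0$ as required.
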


As before we define the ``diagonal'' part of $H^\epsi_0$  as
\begin{eqnarray*}
 \hepsiop_{j,{\rm field}}  &:=&  (P_j \otimes 1) \,\hepsiop_0 \, (P_j \otimes 1) + ((1-P_j) \otimes 1) \,\hepsiop_0 \, ((1-P_j) \otimes 1) \\
&=& \hepsiop_{j }\otimes 1 + 1\otimes H_{\rm f}. 
\end{eqnarray*}
It is now straightforward to prove the correctness of the leading order Born-Oppenheimer approximation also with field, by treating the coupling
to the field as a small perturbation. 

\begin{corollary}\label{TheoBOField}
For $n=0,1$ and $\frac{2}{3}<\beta\leq\frac{4}{3}$ it holds that
\begin{equation}\label{AdiF}
\bigg\lVert\E^{-\I \frac{t}{\epsi} \hepsiop } -   \E^{-\I \frac{t}{\epsi} \hepsiop_{j,{\rm field}} } \bigg\rVert_{\mathcal{L}(D_0^{n+1},D_0^n)}
 \;=\;\Or(\epsi^{\frac{3}{2}\beta-1}\abs{t}+\epsi)\,
\end{equation}
and 
\begin{equation}\label{AdiFP}
\bigg\lVert\Big(\E^{-\I \frac{t}{\epsi} \hepsiop } -   \E^{-\I \frac{t}{\epsi} \hepsiop_{j,{\rm field}} } \Big) (P_j\otimes 1)\bigg\rVert_{\mathcal{L}(D_0^{n+1},D_0^n)}
 \;=\;\Or( \epsi^{\frac{3}{2}\beta-1}\abs{t} +\epsi)\,.
\end{equation}

\begin{proof}
Standard time-dependent perturbation theory yields
\[
 \left\|\E^{-\I \frac{t}{\epsi} \hepsiop } -  \E^{-\I \frac{t}{\epsi} \hepsiop_{0}} \right\|_{\mathcal{L}(D_0^{n+1} ,D_0^n )} \;=\; \Or(  \epsi^{\frac{3}{2}\beta-1}\abs{t}) \,,
\]
because $\hepsiop - \hepsiop_{0} =  \epsi^{\frac{3}{2}\beta}$ in $\mathcal{L}(D_0^{n+1},D_0^n )$.
The last statement is not completely obvious for $n=1$, but it follows from Lemma~\ref{SALemma} and the fact that the commutator 
$
\epsi^{-\frac{3}{2}\beta}[\hepsiop - H^\epsi_0, \hfree]
$ is relatively bounded by $(H_{\rm free}^\epsi)^2$ uniformly in $\epsi$.
Now (\ref{AdiF}) follows from
 \[
\left(\E^{-\I \frac{t}{\epsi} \hepsiop_0 } -   \E^{-\I \frac{t}{\epsi} \hepsiop_{j,{\rm field}} } \right)  = 
\underbrace{\Big(\E^{-\I \frac{t}{\epsi} \hepsiop_{\rm mol} } -   \E^{-\I \frac{t}{\epsi} \hepsiop_{j } }\Big)}_{=\,\Or(\epsi|t|+\epsi)  \mbox{ in }  \mathcal{L}(D_{\rm mol}^{n+1},D_{\rm mol}^n )  }  \otimes \;\E^{-\I \frac{t}{\epsi} H_{\rm f} }\,,
\]
and the fact that $1\otimes\E^{-\I\frac{t}{\epsi}H_{\rm f}}$ is uniformly bounded in $\mathcal{L}(D_0^n)$
and from using the following technical lemma.
\begin{lemma}\label{TensorLemma}
Let $m,n\in\N_0$ with $m\leq n$. There is a constant $C<\infty$ such that if $\|B\|_{\mathcal{L}(D_{\rm mol}^{j+n-m}, D_{\rm mol}^j)  }\leq\delta$ for some $\delta>0$ and all $j=0,\dots,m$, then $\|B\otimes 1\|_{\mathcal{L}(D_0^n, D_0^m)  } \leq C\delta$.
\end{lemma}
By Proposition \ref{TheoBOnoField} we have $P_j \in \mathcal{L}(D_{\rm mol}^n)$ and thus, again by the previous lemma, $P_j\otimes 1 \in \mathcal{L}(D_0^n)$. This shows also  (\ref{AdiFP}).
\end{proof}
\end{corollary}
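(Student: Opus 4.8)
\emph{Overview and Step 1.} The plan is to interpolate between the two propagators in two Duhamel steps — first replace $\hepsiop$ by $\hzeroepsiop$, then $\hzeroepsiop$ by $\hepsiop_{j,{\rm field}}$ — and to combine the resulting bounds by the triangle inequality, all the while tracking the graph norms of $D_0^{n+1}$ and $D_0^n$. For the first step I would write
\[
\E^{-\I\frac{t}{\epsi}\hepsiop}-\E^{-\I\frac{t}{\epsi}\hzeroepsiop}=-\frac{\I}{\epsi}\int_0^t\E^{-\I\frac{t-s}{\epsi}\hepsiop}\big(\hepsiop-\hzeroepsiop\big)\,\E^{-\I\frac{s}{\epsi}\hzeroepsiop}\,\D s
\]
with $\hepsiop-\hzeroepsiop=\epsi^{\frac{3}{2}\beta}\hepsiop_1+\epsi^{\frac{3}{2}\beta+1}\hepsiop_2$. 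By Lemma~\ref{SALemma} this perturbation is bounded from $D_0^1$ to $\Hi=D_0^0$ with norm $\Or(\epsi^{\frac{3}{2}\beta})$, and since both unitary groups preserve $D_0$, the $\mathcal{L}(D_0^1,D_0^0)$-norm of the difference is $\Or(\epsi^{-1}\epsi^{\frac{3}{2}\beta}|t|)=\Or(\epsi^{\frac{3}{2}\beta-1}|t|)$, the exponent being positive because $\beta>\frac{2}{3}$. For the $n=1$ estimate I need the perturbation to be $\Or(\epsi^{\frac{3}{2}\beta})$ also as a map $D_0^2\to D_0^1$; this follows from Lemma~\ref{SALemma} together with the uniform $(\hfree)^2$-bound on $\epsi^{-\frac{3}{2}\beta}[\hepsiop-\hzeroepsiop,\hfree]$ mentioned in the statement, which lets one commute a power of $\hfree$ through the perturbation inside the integrand, and the same computation then gives $\Or(\epsi^{\frac{3}{2}\beta-1}|t|)$ in $\mathcal{L}(D_0^2,D_0^1)$.

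\emph{Step 2 and combination.} Since $H_{\rm f}$ acts only on the Fock factor and commutes with both $\hepsiop_{\rm mol}$ and $\hepsiop_j$, the two groups factorize, so
\[
\E^{-\I\frac{t}{\epsi}\hzeroepsiop}-\E^{-\I\frac{t}{\epsi}\hepsiop_{j,{\rm field}}}=\Big(\E^{-\I\frac{t}{\epsi}\hepsiop_{\rm mol}}-\E^{-\I\frac{t}{\epsi}\hepsiop_j}\Big)\otimes\E^{-\I\frac{t}{\epsi}H_{\rm f}}.
\]
Proposition~\ref{TheoBOnoField}, used with its parameter equal to $0$ and to $1$, bounds the molecular factor by $\Or(\epsi|t|+\epsi)$ in $\mathcal{L}(D_{\rm mol}^{j+1},D_{\rm mol}^j)$ for $j=0,1$, which is precisely the hypothesis of Lemma~\ref{TensorLemma}; hence the tensored operator is $\Or(\epsi|t|+\epsi)$ in $\mathcal{L}(D_0^{n+1},D_0^n)$ for $n=0,1$, using also that $1\otimes\E^{-\I\frac{t}{\epsi}H_{\rm f}}$ is unitary and preserves each $D_0^n$. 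Adding the bounds of the two steps and noting that $\frac{2}{3}<\beta\leq\frac{4}{3}$ forces $0<\frac{3}{2}\beta-1\leq1$, so that $\epsi|t|=\Or(\epsi^{\frac{3}{2}\beta-1}|t|)$, yields $\Or(\epsi^{\frac{3}{2}\beta-1}|t|+\epsi)$, which is (\ref{AdiF}).

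\emph{The projected estimate.} For (\ref{AdiFP}) I would simply compose (\ref{AdiF}) from the right with $P_j\otimes1$. By Proposition~\ref{TheoBOnoField}, $P_j\in\mathcal{L}(D_{\rm mol}^n)$ with norm bounded independently of $\epsi$ for every $n$, and then Lemma~\ref{TensorLemma} (applied to $B=P_j$) gives $P_j\otimes1\in\mathcal{L}(D_0^{n+1})$ uniformly in $\epsi$; composing with the bound just obtained produces the claimed $\Or(\epsi^{\frac{3}{2}\beta-1}|t|+\epsi)$ for $(\E^{-\I\frac{t}{\epsi}\hepsiop}-\E^{-\I\frac{t}{\epsi}\hepsiop_{j,{\rm field}}})(P_j\otimes1)$.

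\emph{Expected main obstacle.} The one genuinely delicate point is Step~1 at level $n=1$: I must verify that not only $\hepsiop-\hzeroepsiop$ but also its commutator with $\hfree$ is $(\hfree)^2$-relatively bounded uniformly in $\epsi$, so the Duhamel argument survives in the stronger graph norm. Since $\hepsiop_1$ and $\hepsiop_2$ involve the quantized vector potential smeared on the scale $\epsi^\beta$, this reduces to checking that the standard $a^\#$-estimates for these terms stay $\epsi$-uniform after one differentiation — a statement that is part of Lemma~\ref{SALemma}. Everything else is bookkeeping of powers of $\epsi$ together with two invocations of Lemma~\ref{TensorLemma}.
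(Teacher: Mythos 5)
Your proposal is correct and follows essentially the same route as the paper: a Duhamel step from $\hepsiop$ to $\hzeroepsiop$ using Lemma~\ref{SALemma} and the $(\hfree)^2$-relative bound on the commutator for the $n=1$ case, then the tensor factorization of $\E^{-\I\frac{t}{\epsi}\hzeroepsiop}-\E^{-\I\frac{t}{\epsi}\hepsiop_{j,\rm field}}$ combined with Proposition~\ref{TheoBOnoField} and Lemma~\ref{TensorLemma}, and finally composition with $P_j\otimes 1$ for \eqref{AdiFP}. The bookkeeping of exponents (in particular $\epsi|t|=\Or(\epsi^{\frac{3}{2}\beta-1}|t|)$ for $\beta\leq\frac{4}{3}$) is also handled as in the paper.
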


\subsection{Superadiabatic subspaces}

At first sight one might hope that the ``non-adiabatic matrix elements''
\begin{equation}\label{trans0}
  P_i \,\E^{-\I \frac{t}{\epsi} H^\epsi} \, P_j 
\end{equation}
from the $j$th to the $i$th electronic state are, at least for $E_i<E_j$,   dominated by spontaneous emission of photons. However, this is not the case, as the main contribution to (\ref{trans0})  comes from a velocity dependent deformation of the   electronic eigenstates. More precisely, the range of $P_j(x)$ is spanned by the eigenstate for a  {\em static} nucleonic configuration. But the slow movement of the nuclei will deform the electronic states at order~$\epsi$. This deformation is visible in the naive ``non-adiabatic matrix element'' (\ref{trans0}), but does not go along with emission of a photon.
The   projection  $P_j^\epsi = P_j +\Or(\epsi)$ onto the correctly modified electronic states 
 is the so-called superadiabatic projection  associated to $P_j$.
As the following proposition shows, the superadiabatic projection $P_j^\epsi$  commutes with $\hepsiop_{\rm mol}$ up to errors of order $\epsi^3$ uniformly on subspaces of bounded total energy. It is possible to go to even higher orders or exponentially small errors of the form $\E^{-\frac{c}{\epsi}}$, but this requires some highly technical pseudo-differential calculus, cf.\ \cite{MaSo2}, and is not needed for our analysis.

\begin{proposition} \label{BOprop}
Fix an arbitrary cutoff energy $E<\infty$.  
For any isolated energy band $E_j(x)$ there is $\epsi_0>0$ such that for $\epsi<\epsi_0$ there are operators $P_j^\epsi$    with the following properties:\\[1mm]
 $P_j^\epsi$ is an orthogonal projection and for any $n\in\N_0$ there is $\epsi_n>0$ and $C_n<\infty$  such that  for $\epsi<\epsi_n$ the operators
 $P_j^\epsi$   are bounded uniformly in    $\mathcal{L}(D_{\rm mol}^n)$ and
 \begin{equation}\label{PepsinaheP}
  \left\| P^\epsi_j - P_j\right\|_{\mathcal{L}(D_{\rm mol}^n)} \leq C_n \, \epsi\,.
\end{equation}
With ${\bf 1}_{E}$ denoting the characteristic function on the interval $(-\infty,E\,]$ we have furthermore that  
\begin{equation}\label{Commu}
\left\| \left[ \hepsiop_{\rm mol} , P_j^\epsi \right] \,{\bf 1}_{E}(\hepsiop_{\rm mol}) \right\|_{\mathcal{L}(\Hi_{\rm mol}, D_{\rm mol}^n  )} \leq C_n \, \epsi^3
\end{equation}
and 
\begin{equation}\label{CommuRel}
\left\| \left[ \hepsiop_{\rm mol} , P_j^\epsi \right]   \right\|_{\mathcal{L}(D_{\rm mol}^{n+1} ,D_{\rm mol}^{n })}   \leq C_n \, \epsi \,.
\end{equation}
  Given two isolated bands $E_i$ and $E_j$ the projections $P_i^\epsi$ and $P_j^\epsi$ satisfy  
 \begin{equation}\label{Portho}
 P_j^\epsi P_i^\epsi {\bf 1}_{E+\frac{1}{2}}(\hepsiop_{\rm mol})= \Or(\epsi^2) 
 \end{equation}
 in $\mathcal{L}( \Hi_{\rm mol}, D _{\rm mol}^n )$. 
\end{proposition}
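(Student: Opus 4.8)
The construction of $P_j^\epsi$ is the standard recursive scheme for almost-invariant (superadiabatic) subspaces, adapted to the present semiclassical setting in which $\epsi\nabla_x$ plays the role of the $\Or(1)$ nuclear momentum. Since $H_{\rm el}(x)$ commutes with its own spectral projection $P_j(x)$, the only obstruction to invariance comes from the nuclear kinetic term: writing $\hepsiop_{\rm mol}=-\epsi^2\Delta_x+H_{\rm el}(x)$,
\[
[\hepsiop_{\rm mol},P_j]=-\epsi^2[\Delta_x,P_j]=-\epsi^2\big((\Delta_xP_j)+2(\nabla_xP_j)\cdot\nabla_x\big),
\]
which on ${\rm Ran}\,{\bf 1}_{E}(\hepsiop_{\rm mol})$, where $\epsi\nabla_x$ is bounded, is of order $\epsi$. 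The plan is to build correctors so that $P_j^\epsi:=P_j+\epsi p_1+\epsi^2 p_2$ kills this defect to order $\epsi^3$, and then to replace the resulting approximate projection by an exact one.

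First I would run the recursion. At the $k$-th step one writes the accumulated defect $[\hepsiop_{\rm mol},P_j+\epsi p_1+\dots+\epsi^{k-1}p_{k-1}]\,{\bf 1}_{E}(\hepsiop_{\rm mol})$ as $\epsi^k\,T_k+\Or(\epsi^{k+1})$, with $T_k$ an explicit polynomial in $\epsi\nabla_x$ (of bounded degree) whose coefficients are bounded operators on $\Hi_{\rm el}$ built out of $P_j$, its $x$-derivatives, $E_j(x)$ and the reduced resolvent $R_j(x):=\big(H_{\rm el}(x)-E_j(x)\big)^{-1}(1-P_j(x))$; all of these are bounded and $x$-smooth with bounded $x$-derivatives of every order, by the gap hypothesis (Lemma~\ref{derivPj}) together with the fact that the smearing of the nuclear charges makes $H_{\rm el}(x)-H_{\rm el}(0)$ a bounded, smooth function of $x$. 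One then solves $p_k=p_k^{\rm D}+p_k^{\rm OD}$, the off-diagonal part (with respect to $P_j$) being fixed by $[H_{\rm el}(x),p_k^{\rm OD}]=-T_k^{\rm OD}$, i.e.\ $p_k^{\rm OD}=R_j T_k^{\rm OD} P_j-P_j T_k^{\rm OD} R_j$ (using that commutators with $P_j$ are off-diagonal), and the diagonal part being fixed by idempotency at order $\epsi^k$; in particular $p_1^{\rm D}=0$. Only two steps, $k=1,2$, are needed for an $\Or(\epsi^3)$ remainder, and $p_1,p_2$ are polynomials in $\epsi\nabla_x$ of degree $\le 1$ and $\le 2$ with bounded $\Hi_{\rm el}$-operator coefficients. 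Combined with the cutoff, which controls powers of $\epsi\nabla_x$, this already yields \eqref{PepsinaheP} and the preliminary estimate $[\hepsiop_{\rm mol},P_j+\epsi p_1+\epsi^2 p_2]\,{\bf 1}_{E}(\hepsiop_{\rm mol})=\Or(\epsi^3)$ in $\mathcal{L}(\Hi_{\rm mol},D_{\rm mol}^n)$, as well as the unconditional bound \eqref{CommuRel}: each corrector loses finitely many $x$-derivatives but carries a compensating power of $\epsi^2$, hence maps $D_{\rm mol}^{n+1}$ into $D_{\rm mol}^n$ with a gain of $\epsi$.

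Next I would make the projection exact. Symmetrizing produces a self-adjoint $\tilde P:=\tfrac12\big((P_j+\epsi p_1+\epsi^2 p_2)+(P_j+\epsi p_1+\epsi^2 p_2)^*\big)$ with $\tilde P^2-\tilde P=\Or(\epsi^3)$ (on the cutoff and in $\mathcal{L}(D_{\rm mol}^n)$), so that for $\epsi$ small its spectrum avoids a neighbourhood of $\tfrac12$ and one may set
\[
P_j^\epsi:=\frac{\I}{2\pi}\oint_{|z-1|=1/2}(z-\tilde P)^{-1}\,\D z.
\]
This is an orthogonal projection, lies in $\mathcal{L}(D_{\rm mol}^n)$ uniformly in $\epsi$ (because $\tilde P$ does, by Proposition~\ref{TheoBOnoField}, plus a standard commutator argument), and satisfies $\|P_j^\epsi-\tilde P\|=\Or(\epsi^3)$ in the relevant norms; this transfers the preliminary bounds to $P_j^\epsi$ itself, using $[\hepsiop_{\rm mol},(z-\tilde P)^{-1}]=(z-\tilde P)^{-1}[\hepsiop_{\rm mol},\tilde P](z-\tilde P)^{-1}$ and the fact that ${\bf 1}_{E}(\hepsiop_{\rm mol})$ commutes with $\hepsiop_{\rm mol}$ and maps $\Hi_{\rm mol}$ into every $D_{\rm mol}^n$. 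For \eqref{Portho} one uses that for $i\neq j$ the spectral projections of $H_{\rm el}(x)$ are genuinely orthogonal, $P_iP_j=0$; combined with the differentiated identity $(\nabla_xP_i)P_j=-P_i(\nabla_xP_j)$ and $R_iP_i=0=P_iR_i$, a direct check shows that the first-order cross terms vanish, so $P_i^\epsi P_j^\epsi=\Or(\epsi^2)$ on the slightly enlarged cutoff ${\bf 1}_{E+\frac12}(\hepsiop_{\rm mol})$, which is there precisely to run both band constructions simultaneously.

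I expect the main obstacle to be the domain bookkeeping behind \eqref{Commu}: obtaining $\Or(\epsi^3)$ not merely in $\mathcal{L}(\Hi_{\rm mol})$ but in $\mathcal{L}(\Hi_{\rm mol},D_{\rm mol}^n)$ for every $n$ forces one to track how the finitely many $x$-derivatives appearing in $T_3$ and in the correctors, together with their own $x$-derivatives, are absorbed by powers of $\hepsiop_{\rm mol}$ and by the cutoff. The delicate point is that the a priori unbounded electronic Hamiltonian $H_{\rm el}(x)$ must be arranged to enter only through the bounded objects $R_j(x)$, $E_j(x)$ and $H_{\rm el}(x)-H_{\rm el}(0)$ — which is exactly where the charge smearing and the algebraic identities of the recursion are used — and that this structure survive the passage from $\tilde P$ to the Riesz projection $P_j^\epsi$.
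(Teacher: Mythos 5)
Your recursive construction of the correctors $p_1,p_2$ is essentially the same as the paper's (the paper writes $S_1=P_0[P_0]R$, $P_1=S_1+S_1^*$, etc., and verifies that these are ``admissible operators'' in the sense of Lemma~\ref{TechLem1}), and your algebraic observation behind \eqref{Portho} matches the paper's explicit cancellation $Q_1P_0+Q_0P_1=0$. However, there is a genuine gap in your passage from the almost-projection to the exact projection, and it concerns exactly the ``domain bookkeeping'' you flag as the main obstacle.

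Your $p_1,p_2$ are polynomials in $\epsi\nabla_x$ with bounded operator coefficients, hence they are \emph{unbounded} operators on $\Hi_{\rm mol}$. Consequently $\tilde P:=\tfrac12\big((P_j+\epsi p_1+\epsi^2 p_2)+\text{h.c.}\big)$ is an unbounded self-adjoint operator; its spectrum is not confined to small neighbourhoods of $0$ and $1$, so the Riesz integral $\oint_{|z-1|=1/2}(z-\tilde P)^{-1}\,\D z$ is not available. The cutoff ${\bf 1}_E(\hepsiop_{\rm mol})$ appears in your estimates but not in the \emph{definition} of $\tilde P$, and that is precisely what breaks the construction. The paper handles this by building the cutoff into the almost-projection itself: it sets $\tilde P^\epsi:=\epsi P_1+\epsi^2 P_2$ and then defines
\[
P^\epsi_{\chi_E}:=P_0+\tilde P^\epsi-\big(1-\chi_E(\hepsiop_{\rm mol})\big)\,\tilde P^\epsi\,\big(1-\chi_E(\hepsiop_{\rm mol})\big)
= P_0 + \tilde P^\epsi\chi_E(\hepsiop_{\rm mol}) + \chi_E(\hepsiop_{\rm mol})\tilde P^\epsi\big(1-\chi_E(\hepsiop_{\rm mol})\big),
\]
which \emph{is} uniformly bounded in every $\mathcal{L}(D_{\rm mol}^n)$ because $\chi_E(\hepsiop_{\rm mol})\in\mathcal{L}(\Hi_{\rm mol},D_{\rm mol}^n)$ absorbs the derivative loss. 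One must then re-verify that this modified object still almost-commutes with $\hepsiop_{\rm mol}$ to order $\epsi^3$ on ${\bf 1}_E(\hepsiop_{\rm mol})$ and is still an almost-projection in the quantitative sense of Lemma~\ref{ProjectorLemma}, using Lemma~\ref{CommuLemma} to commute smooth energy cutoffs past $P^\epsi_{\chi_E}$. This additional step — cutting off the \emph{corrections} inside the definition, not merely cutting off the states when estimating — is the missing ingredient that makes the final Riesz/Nenciu projection step legitimate, and it is also what forces the paper to keep careful track of norms in $\mathcal{L}(\Hi_{\rm mol},D_{\rm mol}^n)$ and $\mathcal{L}(D_{\rm mol}^{n+1},D_{\rm mol}^n)$ throughout.
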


  Note that it is possible to go to even higher orders or exponentially small errors of the form $\E^{-\frac{c}{\epsi}}$, but this requires some highly technical pseudo-differential calculus, cf.\ \cite{MaSo2}, and is not needed for our analysis. On the other hand, the statements  (\ref{PepsinaheP}) and (\ref{CommuRel}) without energy cutoffs and also the 
  fact that the errors are small even in the norm of $D_{\rm mol}^n$ for $n>0$ 
  are essential to our analysis.  Since they do not follow in any straightforward way from the known results, we give the proof of Proposition~\ref{BOprop} in Section~\ref{BOpropproof}.

The next natural steps would be to improve on the error in Proposition~\ref{TheoBOnoField} for states with high energies cut off and to determine the asymptotic expansion of $P^\epsi_j \hepsiop_{\rm mol} P^\epsi_j$. However, for our purpose it suffices to consider the leading order effective Hamiltonian and we will need Proposition~\ref{TheoBOnoField} without a cutoff. Higher order results can be found e.g.\ in~\cite{MaSo1,PST2}. 
  
Now we investigate the lifted projectors $P_j^\epsi\otimes1\in\mathcal{L}(\Hi)$. Modulo some technicalities concerning the different graph norms, the following corollary is a simple consequence of Proposition~\ref{BOprop}.
\begin{corollary}\label{BOpropField}
Let  $P_j^\epsi$     be the operator  defined in Proposition~\ref{BOprop}.
Then for any $n\in\N_0$
\begin{equation}\label{Pepsibound}
\left\| \,P^\epsi_j\otimes 1 \,\right\|_{\mathcal{L}(    D_0^n  )} = \Or(1)\,,
\end{equation}
\begin{equation}\label{PepsinahePField}
\left\| \,P^\epsi_j\otimes 1  - P_j\otimes 1 \,\right\|_{\mathcal{L}(    D_0^n )} = \Or(\epsi)\,,
\end{equation}
\begin{equation}\label{CommuField}
\left\| \left[ \hepsiop_0 , P_j^\epsi \otimes 1 \right]    \, {\bf 1}_E(\hepsiop_0) \right\|_{\mathcal{L}(\Hi,  D_0^n )} = \Or(\epsi^3) \,,
\end{equation}
and for $n\geq 1$
\begin{equation}\label{CommuFieldRel}
\left\| \left[ \hepsiop_0 , P_j^\epsi \otimes 1 \right]   \right\|_{\mathcal{L}(D_0^n,    D_0^{n-1} )} =\Or(\epsi)  \,.
\end{equation}
 \end{corollary}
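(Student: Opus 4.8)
The plan is to reduce Corollary~\ref{BOpropField} to Proposition~\ref{BOprop} by lifting everything via $\otimes 1$ and carefully tracking which graph norm is controlled on which factor. The conceptual point is that all four estimates are about the operator $P_j^\epsi\otimes 1$, which acts trivially on the photonic Fock factor and commutes with $H_{\mathrm f}$, so one expects the molecular estimates to transfer verbatim. The only genuine work is the bookkeeping with the non-tensorial graph norms $D_0^n$, which are defined via powers of $\hepsiop_0 = \hepsiop_{\mathrm{mol}}\otimes 1 + 1\otimes H_{\mathrm f}$ rather than as a tensor product of $D_{\mathrm{mol}}^n$ with a Fock-space domain.

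First I would record the basic norm-comparison lemma: for $n\in\N_0$, the graph norm of $(\hepsiop_0)^n$ is equivalent (uniformly in $\epsi$, since $\hepsiop_{\mathrm{mol}}$ and $H_{\mathrm f}$ are both bounded below and commute as a tensor sum) to $\sum_{k=0}^n \|(\hepsiop_{\mathrm{mol}})^k\otimes (H_{\mathrm f}+1)^{n-k}\,\cdot\,\|$; this is the standard spectral-theoretic fact for commuting self-adjoint operators, analogous to Lemma~\ref{TensorLemma} but on the other side. With this in hand: for \eqref{Pepsibound} and \eqref{PepsinahePField}, since $P_j^\epsi\otimes 1$ commutes with $1\otimes(H_{\mathrm f}+1)^{m}$, it suffices to bound $(P_j^\epsi\otimes 1)$ and $((P_j^\epsi-P_j)\otimes 1)$ in $\mathcal{L}(D_{\mathrm{mol}}^k\otimes\fock)$ for each $k\le n$ with a factor $(H_{\mathrm f}+1)^{n-k}$ inert; these follow from the uniform boundedness $P_j^\epsi\in\mathcal{L}(D_{\mathrm{mol}}^k)$ and \eqref{PepsinaheP} in Proposition~\ref{BOprop}, combined with Lemma~\ref{TensorLemma} applied factorwise. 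For \eqref{CommuFieldRel}, use $[\hepsiop_0, P_j^\epsi\otimes 1] = [\hepsiop_{\mathrm{mol}},P_j^\epsi]\otimes 1$ (the $1\otimes H_{\mathrm f}$ piece drops out), and apply \eqref{CommuRel} together with Lemma~\ref{TensorLemma} to go from $\mathcal{L}(D_{\mathrm{mol}}^{k+1},D_{\mathrm{mol}}^k)$ to $\mathcal{L}(D_0^n,D_0^{n-1})$.

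The subtle one is \eqref{CommuField}, because the spectral cutoff ${\bf 1}_E(\hepsiop_0)$ does not factorize: ${\bf 1}_E(\hepsiop_{\mathrm{mol}}\otimes 1 + 1\otimes H_{\mathrm f})$ restricts the total energy, not the molecular energy alone. The fix is that $H_{\mathrm f}\ge 0$, so ${\bf 1}_E(\hepsiop_0)$ projects onto states whose molecular energy is at most $E - (\text{field energy}) \le E$. Concretely, writing $[\hepsiop_0,P_j^\epsi\otimes 1] = [\hepsiop_{\mathrm{mol}},P_j^\epsi]\otimes 1$, I would insert ${\bf 1}_{(-\infty,E]}(\hepsiop_{\mathrm{mol}})\otimes 1$ next to the cutoff using that $({\bf 1}_{(-\infty,E]}(\hepsiop_{\mathrm{mol}})\otimes 1)\,{\bf 1}_E(\hepsiop_0) = {\bf 1}_E(\hepsiop_0)$ — valid because on $\mathrm{Ran}\,{\bf 1}_E(\hepsiop_0)$ the operator $\hepsiop_{\mathrm{mol}}\otimes 1 = \hepsiop_0 - 1\otimes H_{\mathrm f}\le E$ — so that $[\hepsiop_0,P_j^\epsi\otimes 1]\,{\bf 1}_E(\hepsiop_0) = \big(([\hepsiop_{\mathrm{mol}},P_j^\epsi]\,{\bf 1}_E(\hepsiop_{\mathrm{mol}}))\otimes 1\big)\,{\bf 1}_E(\hepsiop_0)$. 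Now \eqref{Commu} gives $\|[\hepsiop_{\mathrm{mol}},P_j^\epsi]\,{\bf 1}_E(\hepsiop_{\mathrm{mol}})\|_{\mathcal{L}(\Hi_{\mathrm{mol}},D_{\mathrm{mol}}^n)} = \Or(\epsi^3)$, and Lemma~\ref{TensorLemma} (with $m=0$) lifts this to $\mathcal{L}(\Hi, D_0^n)$, while ${\bf 1}_E(\hepsiop_0)$ is a contraction on $\Hi$; this yields the claimed $\Or(\epsi^3)$ bound.

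The main obstacle is precisely the non-factorization of the energy cutoff in \eqref{CommuField}; once one observes that positivity of $H_{\mathrm f}$ lets the total-energy cutoff absorb the molecular-energy cutoff, the rest is a mechanical application of Lemma~\ref{TensorLemma} and the norm-equivalence for $D_0^n$. I would also note in passing that $\Or(\epsi^2)$-type statements like \eqref{Portho} are not needed here but lift by the identical argument, which is why the corollary does not restate them. No new ideas beyond Proposition~\ref{BOprop} and Lemma~\ref{TensorLemma} are required, so the write-up should be short.
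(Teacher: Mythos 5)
Your treatment of \eqref{Pepsibound}, \eqref{PepsinahePField} and \eqref{CommuFieldRel} matches the paper's (Lemma~\ref{TensorLemma} together with $[\hepsiop_0,P_j^\epsi\otimes 1]=[\hepsiop_{\rm mol},P_j^\epsi]\otimes 1$), and for \eqref{CommuField} you correctly isolate the one non-trivial point: the cutoff does not factorize, but $H_{\rm f}\ge 0$ gives $({\bf 1}_E(\hepsiop_{\rm mol})\otimes 1)\,{\bf 1}_E(\hepsiop_0)={\bf 1}_E(\hepsiop_0)$, which is exactly the paper's \eqref{prodest2}. However, the final step of your \eqref{CommuField} argument has a genuine gap. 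You assert that Lemma~\ref{TensorLemma} with $m=0$ lifts the bound $\|[\hepsiop_{\rm mol},P_j^\epsi]\,{\bf 1}_E(\hepsiop_{\rm mol})\|_{\mathcal{L}(\Hi_{\rm mol},D_{\rm mol}^n)}=\Or(\epsi^3)$ to a bound on $\big([\hepsiop_{\rm mol},P_j^\epsi]\,{\bf 1}_E(\hepsiop_{\rm mol})\big)\otimes 1$ in $\mathcal{L}(\Hi,D_0^n)$. That lemma with $m=0$ yields a bound in $\mathcal{L}(D_0^n,\Hi)$, i.e.\ the opposite direction; and more to the point, what you want cannot hold: an operator of the form $B\otimes 1$ is never bounded from $\Hi$ to $D_0^n$ for $n\ge 1$ unless $B=0$, since $B\otimes 1$ leaves photon number and field energy untouched, so on vectors $\phi\otimes\eta_N$ with fixed $\phi$ and $N$-photon states $\eta_N$ of field energy of order $N$ the image has $D_0^n$-norm growing like $N^n$. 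Tensoring with $1$ cannot manufacture regularity with respect to $H_{\rm f}$.

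The correct way to close the argument --- and what the paper does --- is to keep the factorization $\big([\hepsiop_{\rm mol},P_j^\epsi]{\bf 1}_E(\hepsiop_{\rm mol})\otimes 1\big)\,{\bf 1}_E(\hepsiop_0)$ and put the regularity gain entirely on the second factor: ${\bf 1}_E(\hepsiop_0)$ is $\Or(1)$ in $\mathcal{L}(\Hi,D_0^n)$ because its range carries bounded $\hepsiop_0$-energy, while the first factor is $\Or(\epsi^3)$ merely as a bounded operator in $\mathcal{L}(D_0^n)$, which is what Lemma~\ref{TensorLemma} with $m=n$ gives (this needs $\|[\hepsiop_{\rm mol},P_j^\epsi]{\bf 1}_E(\hepsiop_{\rm mol})\|_{\mathcal{L}(D_{\rm mol}^j)}=\Or(\epsi^3)$ for all $j\le n$, which follows from \eqref{Commu} because $D_{\rm mol}^j\hookrightarrow\Hi_{\rm mol}$ and $D_{\rm mol}^n\hookrightarrow D_{\rm mol}^j$). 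Composing the two factors then delivers the stated $\Or(\epsi^3)$ in $\mathcal{L}(\Hi,D_0^n)$.
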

\begin{proof}
 In view of Lemma \ref{TensorLemma}, (\ref{Pepsibound}) and (\ref{PepsinahePField}) follow from the corresponding statements for $P_j^\epsi$, and 
 (\ref{CommuFieldRel}) follows from (\ref{CommuRel}) and
\[
 \left[ \hepsiop_0 , P_j^\epsi \otimes 1 \right]  =  \left[ \hepsiop_{\rm mol} , P_j^\epsi \right]   \otimes 1 \,.
\]
Since $H_f$ is nonnegative and $\hepsiop_{\rm mol}\otimes 1$ and $1\otimes H_{\rm f} $   commute,  we have that
\begin{equation}\label{prodest2}
({\bf 1}_E(\hepsiop_{\rm mol})\otimes 1)\,{\bf 1}_E(\hepsiop_{0}) = {\bf 1}_E(\hepsiop_{0})\,.
\end{equation}
Now (\ref{CommuField}) is a direct consequence of (\ref{Commu}) and the simple computation
\begin{eqnarray*}
  \| [\hepsiop_0, P^\epsi_j \otimes 1] \,{\bf 1}_E(\hepsiop_0)\|_{\mathcal{L}(\Hi, D^n_0)} &=& \|
[\hepsiop_{\rm mol}\otimes 1, P^\epsi_j\otimes 1 ] \,{\bf 1}_E(\hepsiop_0)\|_{\mathcal{L}(\Hi, D^n_0)} \\
&  \stackrel{(\ref{prodest2})}{=}& \| ([\hepsiop_{\rm mol}, P^\epsi_j] \otimes 1) ({\bf 1}_E(\hepsiop_{\rm mol})\otimes 1) {\bf 1}_E(\hepsiop_0)\|_{\mathcal{L}(\Hi, D^n_0)} \\
&  = & \|( [\hepsiop_{\rm mol}, P^\epsi_j]{\bf 1}_E(\hepsiop_{\rm mol})\otimes 1) {\bf 1}_E (\hepsiop_0)\|_{\mathcal{L}(\Hi, D^n_0)}\\
&   \leq& \| [\hepsiop_{\rm mol}, P^\epsi_j]{\bf 1}_E(\hepsiop_{\rm mol})\otimes 1\|_{\mathcal{L}( D^n_0)}  \,  \| {\bf 1}_E (\hepsiop_0)\|_{\mathcal{L}(\Hi, D^n_0)}.
\end{eqnarray*}
\end{proof}
In the following, we just write $P_j^\epsi$ for $P_j^\epsi\otimes 1$ to shorten notation.

\subsection{Spontaneous emission: transition operator}\label{spontemi}

According to Corollary~\ref{TheoBOField} there are no transitions between different electronic states 
at leading order even on microscopic times of order $\epsi^{-1}$. Naive perturbation theory used in the proof suggests that the leading order transitions through coupling to the field are of order $\epsi^{\frac{3}{2}\beta-1}$ on this time scale. However, we will see that the transitions through spontaneous emission are actually smaller, namely of order $\epsi^{\frac{3}{2}\beta-\frac{1}{2}}$. Since we aim at a leading order expression for these transitions, we need to control the full time evolution up to errors which are smaller than $\epsi^{\frac{3}{2}\beta-\frac{1}{2}}$. To this end we construct superadiabatic subspaces  corresponding 
to  definite dressed electronic states containing so-called virtual but no free photons. Note that the adiabatic subspaces of Corollary~\ref{BOpropField} correspond to definite   electronic states with arbitrary  state of the field.  
 
For the sake of clarity we formulate our results about the dressed projector $P^\epsi_{j,\rm vac}$ with an additional small parameter $\delta>0$. Later on we will choose a $\delta$ that is fixed by $\epsi$ and $\beta$.

\begin{proposition}\label{BOpropVacField}
Fix an arbitrary energy cutoff $E<\infty$ and let $E_j$   be  an isolated electronic energy band.  
There is $\epsi_0>0$ such that for $\epsi<\epsi_0$ and any $\delta\geq\epsi^{\frac{1}{2}}$ there are operators  $P_{j,{\rm vac}}^\epsi(\delta)\in \mathcal{L}(\Hi)\cap \mathcal{L}(D_0)$     with the following properties:\\
$P_{j,{\rm vac}}^\epsi$ are orthogonal projections with
\begin{equation}\label{CommuVacFieldRel}
 \left\| \, \left[ \hepsiop , P_{j,{\rm vac}}^\epsi \right]  \right\|_{\mathcal{L}(   D_0,\Hi)} = \Or (\epsi   )\,,  \end{equation}
\begin{equation}\label{CommuVacField}
\left\| \, \left[ \hepsiop , P_{j,{\rm vac}}^\epsi \right] \,{\bf 1}_{E+1} (\hepsiop_0 ) \,\right\|_{\mathcal{L}( \Hi, D_0)} =\Or\big(  \epsi^{\frac{3}{2}\beta   }  \delta^{ \frac{1}{2}} \big)
\,,
\end{equation}
and, as a consequence,  
\begin{eqnarray}\label{CommuChiVacField}
\left\| \left[\tilde\chi(\hepsiop ),  P_{j,{\rm vac}}^\epsi   \right] \right\|_{\mathcal{L}( \Hi, D_0)} &=& \Or\big(\epsi^{\frac{3}{2}\beta   } \delta^{ \frac{1}{2}} \big) \,, 
\end{eqnarray}
for any smooth $\tilde \chi$ with compact support in $(-\infty,E+1)$.\\
Moreover, for $n=0,1$
\begin{equation}\label{PepsinahePVacField}
  \left\| P_{j,{\rm vac}}^\epsi - P_j^\epsi\otimes Q_0 \right\|_{\mathcal{L}( D_0^n)} =\Or( \epsi^{\frac{3}{2}\beta} \delta^{-\frac{1}{2}}  )\,,
\end{equation}
where $P_j^\epsi$ is the superadiabatic projection from Proposition~\ref{BOprop} and $Q_0$ is the vacuum projection in Fock space. 
\end{proposition}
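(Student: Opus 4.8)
The plan is to construct $P_{j,{\rm vac}}^\epsi(\delta)$ by a two-step dressing procedure, building on the superadiabatic projection $P_j^\epsi$ from Proposition~\ref{BOprop}. The natural first guess is $P_j^\epsi \otimes Q_0$, which projects onto ``superadiabatic electronic state $j$ with photon vacuum''. This commutes with $\hepsiop_0$ on energy-bounded subspaces up to $\Or(\epsi^3)$ by Corollary~\ref{BOpropField}, but it does \emph{not} approximately commute with the full $\hepsiop = \hepsiop_0 + \epsi^{\frac{3}{2}\beta}\hepsiop_1 + \epsi^{\frac{3}{2}\beta+1}\hepsiop_2$, because the linear coupling $\hepsiop_1$ maps the vacuum sector into the one-photon sector at order $\epsi^{\frac{3}{2}\beta}$. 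The idea is to absorb this leading-order off-diagonal piece by a unitary dressing transformation: set $P_{j,{\rm vac}}^\epsi := \E^{\I \epsi^{\frac{3}{2}\beta} S^\epsi} (P_j^\epsi \otimes Q_0) \E^{-\I \epsi^{\frac{3}{2}\beta} S^\epsi}$ for a suitable bounded self-adjoint generator $S^\epsi$ chosen so that the commutator $[\hepsiop, P_{j,{\rm vac}}^\epsi]$ loses its $\Or(\epsi^{\frac{3}{2}\beta})$ contribution on the relevant energy shell. Concretely, $S^\epsi$ should solve (to leading order) the commutator equation $[\hepsiop_0, \epsi^{\frac{3}{2}\beta} S^\epsi] + \epsi^{\frac{3}{2}\beta}\hepsiop_1^{\rm OD} \approx 0$ on ${\rm Ran}({\bf 1}_{E+1}(\hepsiop_0))$, where $\hepsiop_1^{\rm OD}$ is the part of $\hepsiop_1$ off-diagonal with respect to the splitting induced by $P_j^\epsi \otimes Q_0$; since $\hepsiop_1$ creates or annihilates exactly one photon, the relevant energy denominators are the photon energies $|k|$, which is where the infrared cutoff $\delta$ enters — one restricts the dressing to photon momenta $|k| \geq$ (something like) a $\delta$-dependent scale, or more precisely the $\delta^{1/2}$ in the estimates comes from the $L^2$-norm over the photon sphere of the form factor divided by $|k|$, cut at scale $\delta$.

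The key steps, in order: (i) Define $S^\epsi$ explicitly as a (sum of) first-order creation/annihilation operator(s) with kernel built from the dipole form factor of $\hepsiop_1$ divided by the appropriate energy denominator, regularized at infrared scale related to $\delta$; verify $S^\epsi$ and its commutators with $\hfree$ are bounded relative to $\hfree$ uniformly in $\epsi$ using Lemma~\ref{SALemma}-type estimates, so that $\E^{\pm\I\epsi^{\frac{3}{2}\beta}S^\epsi}$ preserves $D_0$ with norm $1 + \Or(\epsi^{\frac{3}{2}\beta})$. (ii) Set $P_{j,{\rm vac}}^\epsi := \E^{\I\epsi^{\frac{3}{2}\beta}S^\epsi}(P_j^\epsi \otimes Q_0)\E^{-\I\epsi^{\frac{3}{2}\beta}S^\epsi}$; this is manifestly an orthogonal projection in $\mathcal{L}(\Hi)$ and, by step (i), also in $\mathcal{L}(D_0)$. (iii) Prove (\ref{PepsinahePVacField}): expand the exponentials, $P_{j,{\rm vac}}^\epsi - P_j^\epsi \otimes Q_0 = \I\epsi^{\frac{3}{2}\beta}[S^\epsi, P_j^\epsi\otimes Q_0] + \Or(\epsi^{3\beta})$, and estimate $[S^\epsi, P_j^\epsi\otimes Q_0]$ in $\mathcal{L}(D_0^n)$ for $n=0,1$; the norm of $S^\epsi$ on the vacuum-plus-one-photon sectors carries the factor $\delta^{-1/2}$ because the energy denominator $|k|$ is only bounded below by $\delta$ (or the infrared behavior of $|k|^{-1/2}\cdot|k|^{-1} \in L^2$ near $0$ forces the cutoff), giving $\Or(\epsi^{\frac{3}{2}\beta}\delta^{-1/2})$. (iv) Prove (\ref{CommuVacField}): compute $[\hepsiop, P_{j,{\rm vac}}^\epsi] = \E^{\I\epsi^{\frac{3}{2}\beta}S^\epsi}[\,\E^{-\I\epsi^{\frac{3}{2}\beta}S^\epsi}\hepsiop\E^{\I\epsi^{\frac{3}{2}\beta}S^\epsi},\, P_j^\epsi\otimes Q_0\,]\E^{-\I\epsi^{\frac{3}{2}\beta}S^\epsi}$, expand the conjugated Hamiltonian $\E^{-\I\epsi^{\frac{3}{2}\beta}S^\epsi}\hepsiop\E^{\I\epsi^{\frac{3}{2}\beta}S^\epsi} = \hepsiop_0 + \epsi^{\frac{3}{2}\beta}(\hepsiop_1 + \I[\hepsiop_0, S^\epsi]) + \Or(\epsi^{3\beta-1})$ on energy-bounded states, use the defining equation of $S^\epsi$ to cancel the off-diagonal part of $\hepsiop_1 + \I[\hepsiop_0,S^\epsi]$, note the diagonal part commutes with $P_j^\epsi\otimes Q_0$ up to $\Or(\epsi^3)$ by (\ref{CommuField}), and collect the leftover: the residual off-diagonal error is $\Or(\epsi^{\frac{3}{2}\beta}\delta^{1/2})$ (the $\delta^{1/2}$ from the infrared part of the form factor that was \emph{not} inverted, or from the commutator of $S^\epsi$ with the non-resonant remainder). (v) Deduce (\ref{CommuVacFieldRel}) from (\ref{PepsinahePVacField}) with $n=1$, (\ref{CommuFieldRel}), and the relative boundedness of $[\hepsiop - \hepsiop_0, \cdot\,]$, without energy cutoff, picking up only $\Or(\epsi)$. (vi) Obtain (\ref{CommuChiVacField}) from (\ref{CommuVacField}) by the Helffer–Sjöstrand formula: $[\tilde\chi(\hepsiop), P_{j,{\rm vac}}^\epsi] = \frac{1}{\pi}\int \bar\partial \tilde\chi^{\C}(z)\, (\hepsiop - z)^{-1}[\hepsiop, P_{j,{\rm vac}}^\epsi](\hepsiop - z)^{-1}\,\D z \wedge \D\bar z$, inserting ${\bf 1}_{E+1}(\hepsiop_0)$ where the support of $\tilde\chi$ allows and controlling the resolvents on $D_0$.

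The main obstacle will be step (iv): making precise the ``on energy-bounded states'' bookkeeping in the expansion of the conjugated Hamiltonian while keeping track of which error is $\Or(\epsi^{\frac{3}{2}\beta}\delta^{1/2})$ versus $\Or(\epsi)$ versus $\Or(\epsi^{3\beta-1})$, and in particular verifying that the generator $S^\epsi$ — which is \emph{not} bounded (it is only $\hfree$-bounded, like $\hepsiop_1$ itself) — can legitimately be exponentiated and commuted through resolvents and spectral projections without spoiling the domain $D_0$. The infrared structure is delicate: the division by the photon energy $|k|$ in defining $S^\epsi$ is the standard source of infrared problems, and the role of $\delta$ is precisely to interpolate — a larger $\delta$ makes the dressing generator better behaved (hence the $\delta^{-1/2}$ in (\ref{PepsinahePVacField}) is smaller) but leaves a larger uncancelled infrared tail (hence the $\delta^{1/2}$ in (\ref{CommuVacField}) is larger). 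Getting both estimates simultaneously with the stated $\delta$-powers, uniformly for all $\delta \geq \epsi^{1/2}$, and in the stronger $D_0$-graph norm rather than just on $\Hi$, is the technical heart of the proof and is presumably deferred to Section~\ref{Props}.
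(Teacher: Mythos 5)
Your overall strategy---dress $P_j^\epsi\otimes Q_0$ to cancel, to first order in $\epsi^{\frac{3}{2}\beta}$, the off-diagonal action of the linear coupling $\hepsiop_1$, then control both the size of the correction and the residual commutator in $\mathcal{L}(D_0)$---is the right idea, and the route you choose (conjugation by a unitary $\E^{\I\epsi^{\frac{3}{2}\beta}S^\epsi}$) is a legitimate alternative to the paper's. The paper instead builds the additive correction $P_{\frac{3}{2}\beta}^\delta := T_\delta + T_\delta^*$ with $T_\delta = -(H_{\rm f}+H_{\rm el}-E_j+\I\delta)^{-1}\hepsiop_1(P_j\otimes Q_0)$, verifies that $\tilde P_{\rm vac}^\epsi := P_j^\epsi\otimes Q_0 + \epsi^{\frac{3}{2}\beta}P_{\frac{3}{2}\beta}^\delta$ is an almost-projection satisfying the stated commutator bounds (using $P_{\rm vac}T_\delta = 0 = T_\delta^* P_{\rm vac}$ to handle the quadratic defect), and then upgrades it to a genuine orthogonal projection via Lemma~\ref{ProjectorLemma}, which is a Riesz/Nenciu contour-integral projectorization preserving all the $\mathcal{L}(D^n)$-bounds. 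Your unitary conjugation buys you a true projection for free and avoids Lemma~\ref{ProjectorLemma}, at the cost of having to control a BCH expansion of the conjugated Hamiltonian in the graph norms, which you correctly flag as the hard part.

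However, there is a genuine gap in your diagnosis of the role of $\delta$, and it would cause your plan to fail as stated. You describe $\delta$ as an infrared cutoff on the photon momentum $|k|$, and say the relevant energy denominators are ``the photon energies $|k|$.'' That is not what happens. On the range of a lower electronic level $P_\ell(x)$ tensored with the one-photon sector, the energy denominator produced by inverting $[\hepsiop_0,\cdot]$ against $\hepsiop_1 P_{\rm vac}$ is $|k| + E_\ell(x) - E_j(x)$, which vanishes at $|k| = E_j(x)-E_\ell(x) > 0$, i.e.\ at the \emph{resonant} photon energy---not at $|k|=0$. (There is no infrared divergence to cure here: the one-photon form factor $\hat\rho(k)/\sqrt{|k|}$, and even one extra power of $|k|^{-1}$, is square-integrable in three dimensions.) An infrared cutoff at scale $\delta$ therefore does not regularize the singular denominator at all. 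The paper regularizes by shifting the resolvent into the complex plane, $(H_{\rm f}+H_{\rm el}-E_j\pm\I\delta)^{-1}$, following the gapless adiabatic theorem of~\cite{Teu1}; the resulting Lorentzian integral $\int_0^{\Lambda_0}\D|k|\,[(|k|-(E_j-E_\ell))^2+\delta^2]^{-1}\sim \delta^{-1}$ is exactly where $\|T_\delta\|\lesssim\delta^{-1/2}$ in (\ref{PepsinahePVacField}), and the uncancelled remainder $\I\delta(T_\delta+T_\delta^*)$ in the commutator identity (\ref{commucomp}) is where $\delta^{1/2}$ in (\ref{CommuVacField}) comes from. The complex shift is also what makes the later residue-calculus evaluation of the decay rate possible, which a hard momentum cutoff would obstruct. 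If you want to keep the unitary-dressing architecture, you should replace your infrared-cutoff $S^\epsi$ by the analogously complex-shifted generator $S^\epsi_\delta := \I(T_\delta - T_\delta^*)$ and then rerun the bookkeeping; with the $\I\delta$-regularized resolvent your $\delta^{\pm\frac{1}{2}}$ estimates become achievable, whereas with an IR cutoff they are not.
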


The importance of the dressed vacuum projection $P_{j,\rm vac}^\epsi$ lies in the fact that its range is invariant under the full dynamics with a smaller error than~$P_j^\epsi\otimes Q_0$. This is because it contains also the dressing of the electrons by virtual photons, which
  is crucial for computing the leading order transitions through spontaneous emission.  
The transitions from and into the subspace $P^\epsi_{j,\rm vac}$ are generated by the commutator $[P^\epsi_{j,\rm vac},\hepsiop]$, which we can compute at leading order.
\begin{proposition}\label{BOpropVacField2}
For $\epsi^{\frac{1}{2}}\leq\delta\leq1$ and $\frac{2}{3}<\beta\leq\frac{4}{3}$
\begin{equation}\label{Pvacschlangecom0}
[   P_{j,\rm vac}^\epsi , \hepsiop ] \,{\bf 1}_{E+\frac{1}{2}}(\hepsiop_0 )= \epsi^{\frac{3}{2}\beta }\delta^{\frac{1}{2}} \mathcal{T}_j    \,{\bf 1}_{E+\frac{1}{2}}(\hepsiop_0 ) + \Or\big( \epsi^{\frac{3}{2}\beta+1 }(1+\epsi\delta^{-5/2}) \big)
\end{equation}
in $\mathcal{L}(\Hi)$ with
\[
\mathcal{T}_j =    \I  \delta^\frac{1}{2} (T_j+T_j^*) + 2 \, \delta^{-\frac{1}{2}} \, \epsi (\nabla T_j+\nabla T_j^*) \cdot \epsi\nabla_x   
\]
and 
\[
T_j =  -( H_{\rm f} + H_{\rm el} - E_j+ \I \delta)^{-1} \hepsiop_1 (P_j\otimes Q_0)\,.
\]
For $n=0,1$ it holds 
\begin{equation}\label{Tjnorm}
  \|\mathcal{T}_j\|_{\mathcal{L}(D_0^{n+1},D_0^n)}=\Or(1).
 \end{equation}
\end{proposition}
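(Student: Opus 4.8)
The plan is to read the commutator off from the explicit form of $P_{j,{\rm vac}}^\epsi(\delta)$ produced in the proof of Proposition~\ref{BOpropVacField}, where $P_{j,{\rm vac}}^\epsi$ is constructed as a near-identity ``dressing'' of $P_j^\epsi\otimes Q_0$ whose generator is, at leading order, $\epsi^{\frac{3}{2}\beta}$ times the anti-self-adjoint combination $T_j-T_j^*$, with $T_j$ exactly the operator of the statement and the $\I\delta$ in $T_j$ regularizing the threshold pole of the resolvent at photon energy $\abs{k}=E_j(x)-E_i(x)$. The only consequence of that construction I will use is
\[
P_{j,{\rm vac}}^\epsi - (P_j^\epsi\otimes Q_0) \;=\; \epsi^{\frac{3}{2}\beta}\,(T_j+T_j^*) \;+\; \Or_{\mathcal L(D_0^n)}\!\big(\epsi^{3\beta}\delta^{-1}\big)\,,\qquad n=0,1\,,
\]
which also re-derives (\ref{PepsinahePVacField}) once we have the resolvent bound stated below.

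\textbf{A commutator identity for $T_j$.} The algebraic core of the proof is a single identity. Writing $\hepsiop_0$ in the form $-\epsi^2\Delta_x+H_{\rm el}(x)+\Hf$ and $R:=(\Hf+H_{\rm el}-E_j+\I\delta)^{-1}$ (so that $\Hf+H_{\rm el}=R^{-1}+E_j-\I\delta$), and using that $\hepsiop_1$ does not act on the nuclear variables (hence $[\hepsiop_1,\Delta_x]=0$), that $(P_j\otimes Q_0)(\Hf+H_{\rm el})=E_j\,(P_j\otimes Q_0)$, and that $[R,E_j(x)]=0$, a short resolvent manipulation yields
\begin{equation}\label{plan:comeq}
\hepsiop_1\,(P_j\otimes Q_0)\;+\;[\hepsiop_0, T_j]\;=\;-\,\epsi^2(\Delta_x T_j)\;-\;2\,\epsi(\nabla_x T_j)\cdot\epsi\nabla_x\;-\;\I\delta\,T_j\,,
\end{equation}
where $\nabla_x T_j$ and $\Delta_x T_j$ denote the $x$-derivatives of the operator-valued function $T_j(x)$. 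The point is that the left-hand side of (\ref{plan:comeq}) is exactly what the dressing is built to annihilate, while the right-hand side is the unavoidable remainder: $-\I\delta\,T_j$ is the price of the regularization, and the two $\nabla_x$-terms come from commuting the slow kinetic operator $\epsi^2\Delta_x$ past the $x$-dependent $T_j(x)$.

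\textbf{Assembling the commutator.} Insert the first display into $[P_{j,{\rm vac}}^\epsi,\hepsiop]$ and expand $\hepsiop=\hepsiop_0+\epsi^{\frac{3}{2}\beta}\hepsiop_1+\epsi^{\frac{3}{2}\beta+1}\hepsiop_2$. All contributions other than the genuine transition term turn out, by the resolvent estimate $\|\epsi^m\nabla_x^m T_j\|_{\mathcal L(D_0^{k+1},D_0^k)}=\Or(\epsi^m\delta^{-m-\frac12})$ (proved by one integration by parts in $\abs k$ using smoothness and compact support of the form factor of $\hepsiop_1$ and the spectral gap of $H_{\rm el}$; it also yields (\ref{Tjnorm})) together with $\delta\ge\epsi^{1/2}$ and $\frac23<\beta\le\frac43$, to be of the order of the claimed remainder: this concerns $[P_j^\epsi\otimes Q_0,\hepsiop_0]\,{\bf 1}_{E+\frac12}(\hepsiop_0)=\Or(\epsi^3)$ by (\ref{CommuField}), the replacement of $P_j^\epsi$ by $P_j$ via (\ref{PepsinahePField}), the $\epsi^{\frac{3}{2}\beta+1}\hepsiop_2$-term, the commutator of $\hepsiop$ with the $\Or(\epsi^{3\beta}\delta^{-1})$-tail of the first display, and the term $\epsi^{3\beta}[T_j+T_j^*,\hepsiop_1]$. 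What is left is $\epsi^{\frac{3}{2}\beta}\big([P_j\otimes Q_0,\hepsiop_1]+[T_j+T_j^*,\hepsiop_0]\big)\,{\bf 1}_{E+\frac12}(\hepsiop_0)$, and here (\ref{plan:comeq}) and its adjoint produce the decisive cancellation: the coupling $\hepsiop_1$ enters $[T_j+T_j^*,\hepsiop_0]=-[\hepsiop_0,T_j]-[\hepsiop_0,T_j^*]$ only through $\hepsiop_1(P_j\otimes Q_0)-(P_j\otimes Q_0)\hepsiop_1=-[P_j\otimes Q_0,\hepsiop_1]$, so all $\Or(\epsi^{\frac{3}{2}\beta})$ off-diagonal coupling drops out and what survives is $\epsi^{\frac{3}{2}\beta}$ times the symmetrized right-hand side of (\ref{plan:comeq}):
\[
\epsi^{\frac{3}{2}\beta}\Big(\I\delta\,(T_j+T_j^*)+2\,\epsi(\nabla_x T_j+\nabla_x T_j^*)\cdot\epsi\nabla_x+\epsi^2\Delta_x(T_j+T_j^*)\Big)\,{\bf 1}_{E+\frac12}(\hepsiop_0)\;+\;\text{(lower order)}\,.
\]
Because $\delta^{\frac12}\mathcal T_j$ is exactly the first two summands in the bracket, this equals $\epsi^{\frac{3}{2}\beta}\delta^{\frac12}\mathcal T_j\,{\bf 1}_{E+\frac12}(\hepsiop_0)+\Or(\epsi^{\frac{3}{2}\beta+1}\cdot\epsi\delta^{-5/2})+\text{(lower order)}$, which is (\ref{Pvacschlangecom0}).

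\textbf{Main obstacle.} The real work will be the error bookkeeping. The analytic bottleneck is the uniform control, near the emission threshold $\abs k=E_j(x)-E_i(x)$, of the near-singular resolvent $R=(\Hf+H_{\rm el}-E_j+\I\delta)^{-1}$ and its nuclear derivatives, i.e.\ the estimate $\|\epsi^m\nabla_x^m T_j\|=\Or(\epsi^m\delta^{-m-\frac12})$ used above. Beyond that, one must check term by term, in each graded norm $\mathcal L(D_0^{n+1},D_0^n)$, that every contribution not captured by $\epsi^{\frac{3}{2}\beta}\delta^{\frac12}\mathcal T_j$ fits into $\Or(\epsi^{\frac{3}{2}\beta+1}(1+\epsi\delta^{-5/2}))$, using only $\delta\ge\epsi^{1/2}$ and $\beta>\frac23$; the tension between the noncommutativity of $\epsi^2\Delta_x$ with $R(x)$, the quadratic term $\epsi^{3\beta}[T_j,T_j^*]$ of the dressing, and these two constraints is where the care is needed.
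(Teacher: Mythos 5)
Your proposal is correct and follows essentially the same route as the paper. Your identity (\ref{plan:comeq}) is just a one-step merger of the paper's two computations: the commutator of $T_\delta+T_\delta^*$ with $H_{\rm el}+H_{\rm f}$ (the paper's equation $[P_{\frac{3}{2}\beta}^\delta, H_{\rm f}+H_{\rm el}] = -[P_{\rm vac},\hepsiop_1]+\I\delta(T_\delta+T_\delta^*)$) and with $-\epsi^2\Delta_x$; symmetrizing it reproduces exactly the paper's formula $[P_{\frac{3}{2}\beta}^\delta,\hepsiop_0]+[P_0\otimes Q_0,\hepsiop_1]=\delta^{1/2}\mathcal T+R$ with $R=\epsi^2(\Delta_x P_{\frac{3}{2}\beta}^\delta)$, and your list of remaining error terms matches the paper's conditions (\ref{Assu1})--(\ref{Assu4}) plus the Nenciu tail $P^\epsi_{\rm vac}-\tilde P^\epsi_{\rm vac}=\Or(\epsi^{3\beta}\delta^{-1})$. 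The only inaccuracy worth noting is peripheral: the resolvent bounds $\|T_j\|=\Or(\delta^{-1/2})$, $\|\nabla_x^m T_j\|=\Or(\delta^{-m-1/2})$ are obtained in the paper not by integration by parts in $|k|$ but by splitting $1=P_{\le}+P_{>}$ in the electronic Hilbert space, using the gap on the high part and evaluating the Lorentzian integral $\int_0^{\Lambda_0}\frac{(E_\ell+|k|)^2|k|}{(|k|-(E_j-E_\ell))^2+\delta^2}\,\mathrm d|k|=\Or(\delta^{-1})$ directly on the low part; the oscillatory integration by parts in $|k|$ is the mechanism used later, in the proof of Theorem~\ref{decayformula}, not here.
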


We can now compute the leading order expression for the piece of the evolution that makes a transition  from 
 $P^\epsi_j\otimes Q_0$ to $P_i^\epsi$, which is the first of our two main results. 
 For the sake of brevity we consider   two non-degenerate levels $E_i(x)$ and $E_j(x)$ with associated normalized eigenfunctions $\varphi_i(x)\in$ Ran$P_i(x)$ and $\varphi_j(x)\in$ Ran$P_j(x)$. The results can be generalized to degenerate bands in a straightforward manner. 
 
 \begin{theorem}[\bf Leading order spontaneous emission]\label{theoTji}
 
Fix an arbitrary energy cutoff $E<\infty$ and let $E_j$ and $E_i$  be isolated electronic energy bands with $E_j-E_i>0$ and let  $\frac{5}{6}<\beta\leq\frac{4}{3}$. Then
 \begin{equation} \label{MainEst}
\lim_{\epsi\to 0} \left(  \epsi^{ \frac{1}{2}- \frac{3}{2}\beta} \, P_i^\epsi  \E^{-\I \frac{t}{\epsi} \hepsiop }-
  \I   \int_0^t   \E^{- \I \frac{t-s}{\epsi} \hepsiop_{i,\rm field  }}   \, \mathcal{   T}_{j\to i}   \,\, \E^{- \I \frac{s}{\epsi} \hepsiop_{j,\rm field} } \,\D s\right) (P^\epsi_j\otimes Q_0)  {\bf 1}_E(\hepsiop)  \;\;=\;\; 0\, 
   \end{equation}
 in the norm of bounded operators and uniformly on bounded time intervals. Here 
 \begin{equation}\label{Tji}
 \mathcal{T}_{j\to i}  \;\;=\;\;  \frac{\I\delta}{2\pi\epsi^\frac{1}{2}}\sum_{\lambda=1,2}\; \int\limits_{|k|<\Lambda_0} \frac{    e_\lambda(k)\cdot D_{ij}(x)\,\Delta_E(x)  }{ \sqrt{|k|}(  |k| - \Delta_E(x) +\I \delta)}   \;
 |\varphi_i(x)\rangle  \langle \varphi_j(x) |\otimes a^*(k,\lambda)   \,\D k
  \end{equation}
  with $\delta=\epsi^{\frac{1}{2}-(\beta-\frac{5}{6})/5}$,  $\Delta_E(x):=E_j(x)-E_i(x)$ and $D_{ij}(x):=\sum_{\ell=1}^r \langle \varphi_i(x) | y_\ell \varphi_j(x) \rangle_{\Hi_{\rm el}}$.
 \end{theorem}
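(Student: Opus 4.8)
\medskip\noindent\emph{Proof strategy.} The plan is to run first-order time-dependent perturbation theory, but relative to the dressed superadiabatic vacuum projector $\bar P:=P_{j,{\rm vac}}^\epsi$ of Proposition~\ref{BOpropVacField} rather than the naive $P_{j,0}^\epsi=P_j^\epsi\otimes Q_0$. The reason is that $[\hepsiop,P_{j,0}^\epsi]$ is of order $\epsi^{\frac32\beta}$ only, so a Duhamel expansion built on $P_{j,0}^\epsi$ produces a transition amplitude which is nominally of the too large order $\epsi^{\frac32\beta-1}$ and which one would then have to show, by oscillations inside an uncontrollable many-body Schr\"odinger propagator, to be really of order $\epsi^{\frac32\beta-\frac12}$. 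With $\bar P$ the generator of transitions is $[\hepsiop,\bar P]$, which on energy-bounded states is of the genuinely smaller order $\epsi^{\frac32\beta}\delta^{\frac12}$ by (\ref{CommuVacField}) and whose leading part is, by Proposition~\ref{BOpropVacField2}, the explicit operator $\epsi^{\frac32\beta}\delta^{\frac12}\mathcal T_j$. The residual oscillation (of relative size $(\epsi/\delta)^{\frac12}$) is then mild enough to be absorbed into error terms, provided $\delta$ is tuned correctly.

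\medskip\noindent\emph{Duhamel step.} By (\ref{PepsinahePVacField}) one has $\|P_{j,0}^\epsi-\bar P\|_{\mathcal L(\Hi)}=\Or(\epsi^{\frac32\beta}\delta^{-\frac12})$, so after multiplication by $\epsi^{\frac12-\frac32\beta}$ this becomes an $\Or(\epsi^{\frac12}\delta^{-\frac12})=o(1)$ error and it suffices to prove (\ref{MainEst}) with $P_{j,0}^\epsi$ replaced by $\bar P$. Writing $U^\epsi(t):=\E^{-\I\frac t\epsi\hepsiop}$, inserting $\bar P^2=\bar P$ and using $\partial_s\big(U^\epsi(t-s)\bar P\,U^\epsi(s)\big)=\tfrac\I\epsi\,U^\epsi(t-s)[\hepsiop,\bar P]U^\epsi(s)$ gives
\[
P_i^\epsi\,U^\epsi(t)\,\bar P\;=\;P_i^\epsi\,\bar P\,U^\epsi(t)\,\bar P\;-\;\tfrac\I\epsi\int_0^t P_i^\epsi\,U^\epsi(t-s)\,[\hepsiop,\bar P]\,U^\epsi(s)\,\bar P\,\D s\,.
\]
Using (\ref{Portho}) and (\ref{PepsinahePVacField}), the boundary term satisfies $P_i^\epsi\bar P=(P_i^\epsi P_j^\epsi)\otimes Q_0+\Or(\epsi^{\frac32\beta}\delta^{-\frac12})=\Or(\epsi^2+\epsi^{\frac32\beta}\delta^{-\frac12})$ on the range of the energy cutoff, which is $o(\epsi^{\frac32\beta-\frac12})$ because $\beta\le\frac43$. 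In the integral one propagates the energy cutoff through $U^\epsi(s)$ and, using $\hepsiop=\hzeroepsiop+\Or(\epsi^{\frac32\beta})$ and a smooth cutoff, trades ${\bf 1}_E(\hepsiop)$ for ${\bf 1}_{E+\frac12}(\hzeroepsiop)$; Proposition~\ref{BOpropVacField2} then yields $[\hepsiop,\bar P]\,{\bf 1}_{E+\frac12}(\hzeroepsiop)=-\epsi^{\frac32\beta}\delta^{\frac12}\,\mathcal T_j\,{\bf 1}_{E+\frac12}(\hzeroepsiop)+\Or\!\big(\epsi^{\frac32\beta+1}(1+\epsi\delta^{-\frac52})\big)$, whose remainder, after $\tfrac1\epsi\int_0^t\D s$ and normalization, is $\Or\!\big(\epsi^{\frac12}(1+\epsi\delta^{-\frac52})|t|\big)=o(1)$ for $\delta\ge\epsi^{\frac12}$. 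Hence, modulo $o(1)$ and uniformly on bounded time intervals,
\[
\epsi^{\frac12-\frac32\beta}\,P_i^\epsi\,U^\epsi(t)\,\bar P\;=\;\I\,\epsi^{-\frac12}\delta^{\frac12}\int_0^t P_i^\epsi\,U^\epsi(t-s)\,\mathcal T_j\,U^\epsi(s)\,\bar P\,\D s\,.
\]

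\medskip\noindent\emph{Replacing the dynamics and computing $\mathcal T_j$.} Since $U^\epsi(s)\bar P$ is, up to $\Or(\epsi^{\frac32\beta}\delta^{-\frac12})$, a state in ${\rm Ran}(P_j^\epsi\otimes Q_0)$, the creation/annihilation structure of $\mathcal T_j$ together with the orthogonality $P_i^\epsi P_j^\epsi=\Or(\epsi^2)$ kills the $T_j^*$- and $\nabla T_j^*$-parts, and the $\epsi\nabla_x$-part contributes at lower order to the time-integrated expression; so $\mathcal T_j$ acts there like $\I\delta^{\frac12}T_j(P_j\otimes Q_0)$. By (\ref{AdiFP}) (together with Proposition~\ref{TheoBOnoField}, (\ref{PepsinahePVacField}) and (\ref{Tjnorm}) for the domains) we replace $U^\epsi(s)\bar P$ by $\E^{-\I\frac s\epsi\hepsiop_{j,{\rm field}}}(P_j^\epsi\otimes Q_0)$ and, passing to adjoints, $P_i^\epsi U^\epsi(t-s)$ by $\E^{-\I\frac{t-s}\epsi\hepsiop_{i,{\rm field}}}P_i^\epsi$ (legitimate because $P_i^\epsi T_j(P_j\otimes Q_0)$ lies, to leading order, in ${\rm Ran}\,P_i$); these substitutions are carried out \emph{inside} the $s$- and photon-momentum integrals, so that the oscillation of $\E^{\pm\I(\cdot)\Hf/\epsi}$ on the emitted photon against the regularized denominator $(\abs k-\Delta_E(x)+\I\delta)^{-1}$ is retained. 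It remains to compute the $P_i$-component of $T_j$ on the vacuum: in $\hepsiop_1=-4\pi^{\frac12}\sum_\ell A(\epsi^\beta y_\ell)\cdot p_{\ell,y}$ one applies the dipole approximation $A(\epsi^\beta y_\ell)=A(0)+\Or(\epsi^\beta)$ (valid since $\abs k\le\Lambda_0$ and $\varphi_j(x)$ has finite $\langle y^2\rangle$) together with the sum rule $\sum_\ell\langle\varphi_i(x)|p_{\ell,y}\varphi_j(x)\rangle=\tfrac\I2\langle\varphi_i(x)|[H_{\rm el}(x),\textstyle\sum_\ell y_\ell]\varphi_j(x)\rangle=-\tfrac\I2\Delta_E(x)D_{ij}(x)$, and then applies the resolvent $(\Hf+H_{\rm el}-E_j+\I\delta)^{-1}$, which on electronic level $E_i$ with a single photon $k$ is multiplication by $(\abs k-\Delta_E(x)+\I\delta)^{-1}$. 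Collecting the explicit constants of $A$ identifies the leading part of $\I\,\epsi^{-\frac12}\delta^{\frac12}\,\mathcal T_j\,(P_j^\epsi\otimes Q_0)$ (projected by $P_i^\epsi$) with $\I\,\mathcal T_{j\to i}$ as in (\ref{Tji}), with an error $o(1)$. Substituting back into the last display and collecting all the $o(1)$ contributions yields (\ref{MainEst}).

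\medskip\noindent\emph{Main obstacle.} The delicate point is the replacement of the true dynamics by the Born--Oppenheimer dynamics in the presence of $\mathcal T_j$: the operator $T_j$ contains the resolvent $(\Hf+H_{\rm el}-E_j+\I\delta)^{-1}$, whose norm on one-photon states is of order $\delta^{-\frac12}$, and since (\ref{AdiFP}) is only an $\Or(\epsi^{\frac32\beta-1}|t|)$ approximation one must never estimate this resolvent by its operator norm alone, but keep intact the cancellation among the three time scales --- slow nuclei, the intermediate electronic gap $\Delta_E(x)$, and the fast photon energy $\abs k$ --- through the substitution. The resulting error (of order $\epsi^{\frac32\beta-\frac32}\delta^{\frac12}|t|^2$ after normalization), together with the dressing error $\epsi^{\frac12}\delta^{-\frac12}$ and the commutator remainder $\epsi^{\frac32}\delta^{-\frac52}$, must all tend to $0$ simultaneously; balancing them is exactly what both fixes $\delta=\epsi^{\frac12-(\beta-\frac56)/5}$ and forces the strict lower bound $\beta>\frac56$.
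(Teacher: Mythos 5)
Your proposal follows essentially the same strategy as the paper's proof: Duhamel expansion relative to the dressed vacuum projector $P_{j,\rm vac}^\epsi$ rather than $P_j^\epsi\otimes Q_0$, use of Proposition~\ref{BOpropVacField2} to identify the leading commutator as $\epsi^{3\beta/2}\delta^{1/2}\mathcal T_j$, replacement of the full propagators by $\E^{-\I(\cdot)\hepsiop_{i,\rm field}/\epsi}$ and $\E^{-\I(\cdot)\hepsiop_{j,\rm field}/\epsi}$ via Corollary~\ref{TheoBOField}, and the dipole--commutator computation that turns $P_i H_1^\epsi (P_j\otimes Q_0)$ into $\Delta_E D_{ij}$. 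Your error bookkeeping and the choice of $\delta$ agree with the paper.

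There is, however, one genuine gap you should not present as immediate. After projecting $\mathcal T_j$ by $P_i$ on the left and $P_j\otimes Q_0$ on the right, the transition operator splits as $t_1+t_2+o(1)$, where $t_1$ is exactly $\mathcal T_{j\to i}$ from \eqref{Tji} and $t_2 \sim \epsi^{1/2}\,P_i\nabla T_j(P_j\otimes Q_0)\cdot\epsi\nabla_x$. You dispose of $t_2$ by asserting that ``the $\epsi\nabla_x$-part contributes at lower order to the time-integrated expression,'' but this is not an operator-norm statement: with the paper's $\delta=\epsi^{1/2-(\beta-5/6)/5}$ the norm $\|t_2\|\sim(\epsi/\delta^3)^{1/2}$ \emph{diverges} as $\epsi\to 0$, so the naive $|t|$-integration estimate fails. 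Proving that $\int_0^t \E^{\cdots}\,t_2\,\E^{\cdots}\,\D s = o(1)$ requires the same residue-calculus/stationary-phase machinery developed for $\Theta_{ij}$ in Theorem~\ref{decayformula}; the paper therefore explicitly defers this step to after that proof. Relatedly, your ``main obstacle'' paragraph slightly misplaces the difficulty: the BO-dynamics substitution in the $s$-integral is in fact handled by plain operator-norm applications of \eqref{AdiFP}, with the errors $\epsi^{3\beta/2-3/2}\delta^{1/2}|t|^2$ etc.\ made small by the choice of $\delta$ and the constraint $\beta>5/6$; the three-time-scale cancellation you describe is needed not for the substitution itself, but precisely for the $t_2$ estimate and for evaluating the surviving integral in Theorem~\ref{decayformula}.
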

 
 In Theorem~\ref{decayformula} we will show that the integral in (\ref{MainEst}) is   of order one (with a norm independent of the precise choice for $\delta(\epsi)$), although this is not obvious from the prefactor and the norm of $\mathcal{T}_{j\to i}$.  However, this observation justifies the interpretation of  the integral as the leading order piece of the wave function that makes a transition from level $j$ to level $i$.

 \begin{proof} ({\em of Theorem~\ref{theoTji}})
 In order to interchange energy cutoffs with unitary groups it turns out useful, that
  replacing energy cutoffs in terms of $\hepsiop_0$ by cutoffs in terms of $\hepsiop$  and vice versa only adds an error of order $\epsi^{\frac{3}{2}\beta}$. More precisely, the graph norms induced by $\hepsiop_0$ and $\hepsiop$ are equivalent and we have that 
  \begin{equation}\label{ChiDiff}
  \left\|  \tilde \chi(\hepsiop ) -  \tilde \chi(\hepsiop_0 ) \right\|_{\mathcal{L} (\Hi , D_0)}=\Or (\epsi^{\frac{3}{2}\beta}) . 
  \end{equation}
Both claims follow from the following lemma.
\begin{lemma}\label{ChiLemma}
Let $(H_0,D(H_0))$ be self-adjoint and equip $D :=  D(H_0)$ with the graph norm $\|\cdot \|_{D_0}$. 
If $A\in \mathcal{L}(D,\Hi)$ satisfies
\[
\| A\|_{\mathcal{L} ( D ,\Hi)} \leq \delta <1\,,
\]
then $H= H_0 + A$ is self-adjoint on $D(H)  = D $   and 
the   graph-norm $\|\cdot \|_{D_H}$ induced  by $H$ on $D$ is equivalent to $\|\cdot\|_{D_0}$. More precisely,  for $\psi\in D$
\[
(1-\delta)\,  \|\psi \|_{D_0}  \;\leq \; \|\psi \|_{D_H}  \;\leq\;  (1+\delta)\, \|\psi\|_{D_0} \,.
\]
Moreover,  for any $\tilde \chi \in C^\infty_0(\R)$ there is a constant $C<\infty$ (depending only on $\tilde \chi$, but not on $H$, $A$ or $\delta$) such that
\[
\left\| \tilde \chi (H_0) - \tilde \chi(H)\right\|_{\mathcal{L} ( \Hi,D)} \leq C\delta \,.
\]
\end{lemma}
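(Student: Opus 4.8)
I would split the statement into its two halves. \emph{Self-adjointness and equivalence of graph norms.} The hypothesis $\|A\psi\|\le\delta\|\psi\|_{D_0}\le\delta(\|H_0\psi\|+\|\psi\|)$ says precisely that $A$ is $H_0$-bounded with relative bound $\delta<1$, so $H=H_0+A$ is self-adjoint on $D(H)=D$ by the Kato--Rellich theorem. For the quantitative comparison of the graph norms I would use the concrete realisation $\|\psi\|_{D_0}=\|(H_0+\I)\psi\|$ (valid since the cross term in $\|(H_0+\I)\psi\|^2=\|H_0\psi\|^2+\|\psi\|^2$ vanishes by self-adjointness of $H_0$), and likewise $\|\psi\|_{D_H}=\|(H+\I)\psi\|$. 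Then $(H+\I)\psi=(H_0+\I)\psi+A\psi$ together with $\|A\psi\|\le\delta\|\psi\|_{D_0}$ gives $\|\psi\|_{D_H}\le(1+\delta)\|\psi\|_{D_0}$, while $(H_0+\I)\psi=(H+\I)\psi-A\psi$ gives $\|\psi\|_{D_0}\le\|\psi\|_{D_H}+\delta\|\psi\|_{D_0}$, i.e.\ $(1-\delta)\|\psi\|_{D_0}\le\|\psi\|_{D_H}$. This settles the first assertion.

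\emph{The functional-calculus estimate.} Here the plan is to run the Helffer--Sjöstrand calculus. Fix an almost analytic extension $g\in C_0^\infty(\C)$ of $\tilde\chi$ (so $g|_\R=\tilde\chi$), supported in a fixed bounded neighbourhood of $\mathrm{supp}\,\tilde\chi$ and satisfying $|\bar\partial g(z)|=\Or(|\mathrm{Im}\,z|^2)$; only this neighbourhood and the implied constant depend on $\tilde\chi$. Then for any self-adjoint $S$ one has $\tilde\chi(S)=\frac1\pi\int_\C\bar\partial g(z)\,(S-z)^{-1}\,\D^2z$ (with $\D^2z$ Lebesgue measure on $\C$), and in particular $\tilde\chi(H)$ maps $\Hi$ into $D(H)=D$. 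Subtracting the representations for $S=H_0$ and $S=H$ and inserting the resolvent identity $(H_0-z)^{-1}-(H-z)^{-1}=(H-z)^{-1}A(H_0-z)^{-1}$ gives
\[
(H_0+\I)\big(\tilde\chi(H_0)-\tilde\chi(H)\big)=\tfrac1\pi\int_\C\bar\partial g(z)\,(H_0+\I)(H-z)^{-1}\,A\,(H_0-z)^{-1}\,\D^2z ,
\]
which is exactly the object to estimate, since $\|T\|_{\mathcal L(\Hi,D)}=\|(H_0+\I)T\|_{\mathcal L(\Hi)}$. I would bound the integrand by $\|(H_0+\I)(H-z)^{-1}\|_{\mathcal L(\Hi)}\cdot\|A(H_0-z)^{-1}\|_{\mathcal L(\Hi)}$. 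The second factor is at most $\delta\big(1+|z+\I|\,|\mathrm{Im}\,z|^{-1}\big)$, directly from $(H_0+\I)(H_0-z)^{-1}=1+(z+\I)(H_0-z)^{-1}$ and $\|(H_0-z)^{-1}\|\le|\mathrm{Im}\,z|^{-1}$. For the first factor I would use the algebraic identity $(H_0+\I)(H-z)^{-1}=1-A(H-z)^{-1}+(z+\I)(H-z)^{-1}$, estimate $\|A(H-z)^{-1}\|\le\delta(1-\delta)^{-1}\|(H+\I)(H-z)^{-1}\|\le\delta(1-\delta)^{-1}\big(1+|z+\I|\,|\mathrm{Im}\,z|^{-1}\big)$ with the norm equivalence from the first part, and conclude $\|(H_0+\I)(H-z)^{-1}\|\le C\langle z\rangle\,|\mathrm{Im}\,z|^{-1}$. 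Multiplying, the integrand is $\le C\delta\,\langle z\rangle^2\,|\mathrm{Im}\,z|^{-2}$; combined with $|\bar\partial g(z)|=\Or(|\mathrm{Im}\,z|^2)$ and the fact that the $z$-integration runs over the bounded set $\mathrm{supp}\,g$ (where $\langle z\rangle$ is bounded and the remaining factor is integrable), the integral converges and yields $\|\tilde\chi(H_0)-\tilde\chi(H)\|_{\mathcal L(\Hi,D)}\le C\delta$ with $C$ depending only on $\tilde\chi$ (and on a fixed $\delta_0<1$ with $\delta\le\delta_0$, which is harmless since $\delta\to0$ in all applications).

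\emph{Main obstacle.} The only delicate point is the resolvent bound $\|(H_0+\I)(H-z)^{-1}\|_{\mathcal L(\Hi)}\le C\langle z\rangle\,|\mathrm{Im}\,z|^{-1}$, i.e.\ controlling the \emph{perturbed} resolvent in the \emph{unperturbed} graph norm while keeping exactly one power of $|\mathrm{Im}\,z|^{-1}$, only polynomial growth in $|z|$, and a constant not depending on $H$ beyond the relative bound $\delta$ (this is what makes the final $C$ independent of $H$). The algebraic identity for $(H_0+\I)(H-z)^{-1}$ together with the equivalence of $\|\cdot\|_{D_0}$ and $\|\cdot\|_{D_H}$ established in the first part is precisely what delivers it, and from there the argument is routine bookkeeping within the Helffer--Sjöstrand representation, so I do not anticipate further difficulty.
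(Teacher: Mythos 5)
Your proposal is correct and follows essentially the same route as the paper: Kato--Rellich plus a direct comparison of graph norms for the first part, and the Helffer--Sj\"ostrand formula combined with the second resolvent identity $(H_0-z)^{-1}-(H-z)^{-1}=(H-z)^{-1}A(H_0-z)^{-1}$ for the second. The only cosmetic difference is that you realize the graph norm as $\|(H_0+\I)\psi\|$ and push $(H_0+\I)$ through the integral, whereas the paper estimates the resolvent difference directly in $\mathcal{L}(\Hi,D)$; these are equivalent bookkeeping choices and lead to the same $\Or(\delta\langle z\rangle^2|\mathrm{Im}\,z|^{-2})$ integrand.
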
 

As another useful consequence of this lemma we note that the full unitary group is uniformly bounded in $\mathcal{L}(D_0)$, i.e.\ 
\begin{equation}\label{groupbound}
\big\| \E^{-\I \frac{t}{\epsi}H^\epsi} \big\|_{\mathcal{L}(D_0)} \leq C \big\| \E^{-\I \frac{t}{\epsi}H^\epsi }\big\|_{\mathcal{L}(D_{H^\epsi} )} = C\,,
\end{equation}
where $D_{H^\epsi} $ denotes $D_0$ equipped with the graph norm of $H^\epsi$.

 We now fix some $\tilde \chi\in C_0^\infty$ with $\tilde \chi {\bf 1}_E = {\bf 1}_E$ and $\tilde \chi {\bf 1}_{E+\frac{1}{2}} = \tilde \chi $ and abbreviate   $P^\epsi_{j,\rm v}:=P^\epsi_{j,\rm vac}$.
Then (using Lemma~2 from now on implicitly)
 \begin{eqnarray*}
  P_i^\epsi    P^\epsi_{j,\rm v}   \tilde\chi(\hepsiop )& \stackrel{(\ref{PepsinahePVacField})}{=} & P_i^\epsi   P_j^\epsi \otimes Q_0\,  \tilde\chi(\hepsiop ) + \Or(\epsi^{\frac{3}{2}\beta}\delta^{-\frac{1}{2}} )\\&\stackrel{(\ref{ChiDiff})}{=} & P_i^\epsi   P_j^\epsi \otimes Q_0\,  \tilde\chi(\hepsiop_0 ) + \Or\big(\epsi^{\frac{3}{2}\beta}(1+\delta^{-\frac{1}{2}}) \big)\stackrel{(\ref{Portho})}{=} \Or(\epsi^{\frac{3}{2}\beta}\delta^{-\frac{1}{2}} )
  \end{eqnarray*}
  and, hence, 
 \begin{equation}\label{hilf1}
 P_i^\epsi    P^\epsi_{j,\rm v} \E^{-\I \frac{t}{\epsi} \hepsiop  }    P^\epsi_{j,\rm v} \tilde\chi(\hepsiop )\stackrel{(\ref{CommuChiVacField})}{=} P_i^\epsi    P^\epsi_{j,\rm v}   \tilde\chi(\hepsiop )    \E^{-\I \frac{t}{\epsi} \hepsiop  }    P^\epsi_{j,\rm v} + \Or(\epsi^{\frac{3}{2}\beta}\delta^{\frac{1}{2}}) = \Or(\epsi^{\frac{3}{2}\beta}\delta^{-\frac{1}{2}} )\,.
 \end{equation}
 In the following computation we make explicit the size of the error term in the line where it first appears and collect all of them only in the final line. We use (\ref{Tjnorm}), (\ref{groupbound}) and $P_{j,\rm v}^\epsi\in\mathcal{L}(D_0)$ throughout without noting it explicitly. 
 
\begin{eqnarray}\label{EstiBig}
\lefteqn{P_i^\epsi  \E^{-\I \frac{t}{\epsi} \hepsiop } P^\epsi_j\otimes Q_0\, {\bf 1}_E(\hepsiop) \ \,\stackrel{(\ref{PepsinahePVacField})}{=}\ \, 
P_i^\epsi  \E^{-\I \frac{t}{\epsi} \hepsiop }  P^\epsi_{j,\rm v}\, \tilde\chi(\hepsiop ) {\bf 1}_E(\hepsiop) \;+\; \Or\left(\epsi^{\frac{3}{2}\beta}\delta^{-\frac{1}{2}} \right)} \\
&\stackrel{(\ref{hilf1})}{=}&  
P_i^\epsi  \left[ \E^{-\I \frac{t}{\epsi} \hepsiop  } , P^\epsi_{j,\rm v}\right]  P^\epsi_{j,\rm v} \tilde\chi(\hepsiop ) {\bf 1}_E(\hepsiop) \;+\; \Or\left(\epsi^{\frac{3}{2}\beta}\delta^{-\frac{1}{2}} \right)
\nonumber \\
 &\stackrel{(\ref{CommuChiVacField})}{=}& 
P_i^\epsi    \left[ \E^{-\I \frac{t}{\epsi} \hepsiop  } , P^\epsi_{j,\rm v }\right]\tilde\chi(\hepsiop )  P^\epsi_{j,\rm v } {\bf 1}_E(\hepsiop) + \Or\left(\epsi^{\frac{3}{2}\beta   } \delta^{ \frac{1}{2}} \right) \nonumber \\
 &=& -\tfrac{\I}{\epsi}  \int_0^t  P_i^\epsi   \E^{-\I \frac{t-s}{\epsi} \hepsiop  }  [ \hepsiop , P^\epsi_{j,\rm v }] \tilde\chi(\hepsiop )     \E^{ -\I \frac{s}{\epsi} \hepsiop  }P^\epsi_{j,\rm v } \,\D s\,{\bf 1}_E(\hepsiop) \nonumber \\
 &\stackrel{(\ref{CommuVacFieldRel}),(\ref{ChiDiff})}{=}& -\tfrac{\I}{\epsi}  \int_0^t  P_i^\epsi   \E^{-\I \frac{t-s}{\epsi} \hepsiop  }  [ \hepsiop , P^\epsi_{j,\rm v }] \tilde\chi(\hepsiop_0 )     \E^{ -\I \frac{s}{\epsi} \hepsiop  }P^\epsi_{j,\rm v } \,\D s\,{\bf 1}_E(\hepsiop) + \Or\left( \epsi^{\frac{3}{2}\beta} |t|\right)\nonumber  \\
&\stackrel{(\ref{Pvacschlangecom0})}{=}& \I \epsi^{\frac{3}{2}\beta-1}\delta^{\frac{1}{2}}   \int_0^t   P_i^\epsi\, \E^{-\I \frac{t-s}{\epsi} \hepsiop  }\, \mathcal{   T}_j \, \tilde\chi(\hepsiop_0)  \, \E^{- \I \frac{s}{\epsi} \hepsiop }P^\epsi_{j,\rm v } \,\D s\,{\bf 1}_E(\hepsiop) + \Or\left(\epsi^{\frac{3}{2}\beta}(1+\epsi\delta^{-\frac{5}{2}})|t|\right)\nonumber \\
&\stackrel{(\ref{PepsinahePField})}{=}& \I \epsi^{\frac{3}{2}\beta-1}\delta^{\frac{1}{2}}   \int_0^t   P_i\, \E^{-\I \frac{t-s}{\epsi} \hepsiop  }\, \mathcal{   T}_j \, \tilde\chi(\hepsiop_0)  \, \E^{- \I \frac{s}{\epsi} \hepsiop }P^\epsi_{j,\rm v } \,\D s\,{\bf 1}_E(\hepsiop)+ \Or\left(\epsi^{\frac{3}{2}\beta}\delta^{\frac{1}{2}}|t|\right) \nonumber \\
  &\stackrel{(\ref{AdiFP})}{=}& \I \epsi^{\frac{3}{2}\beta-1}\delta^{\frac{1}{2}}    \int_0^t  P_i\, \E^{-\I \frac{t-s}{\epsi} \hepsiop_{i,\rm field}   }\,  \mathcal{   T}_j \,\tilde\chi(\hepsiop_0 )  \,  \E^{- \I \frac{s}{\epsi} \hepsiop  }P^\epsi_{j,\rm v } \,\D s\,{\bf 1}_E(\hepsiop)\nonumber  \\
&& \quad+\, \Or\left(\epsi^{3\beta-2}\delta^{\frac{1}{2}}|t|^2+\epsi^{\frac{3}{2}\beta}\delta^{\frac{1}{2}}|t|\right)\nonumber \\
&\stackrel{(\ref{ChiDiff}),(\ref{CommuChiVacField})}{=}& \I \epsi^{\frac{3}{2}\beta-1}\delta^{\frac{1}{2}}    \int_0^t   P_i\,\E^{-\I \frac{t-s}{\epsi} \hepsiop_{i,\rm field}  }   \, \mathcal{   T}_j   \,  \E^{- \I \frac{s}{\epsi} \hepsiop  }P^\epsi_{j,\rm v }\tilde\chi(\hepsiop_0 )\,\D s\,{\bf 1}_E(\hepsiop) + \Or\left(\epsi^{3\beta-1}\delta^{\frac{1}{2}}|t|\right)\nonumber \\
  &\stackrel{(\ref{PepsinahePVacField}),(\ref{PepsinahePField})}{=}& \I \epsi^{\frac{3}{2}\beta-1}\delta^{\frac{1}{2}}   \int_0^t  P_i\, \E^{- \I \frac{t-s}{\epsi} \hepsiop_{i,\rm field}  } \, \mathcal{   T}_j   \,  \E^{- \I \frac{s}{\epsi} \hepsiop  } (P_j\otimes Q_0) \, \tilde\chi(\hepsiop_0)\,\D s\,{\bf 1}_E(\hepsiop)\nonumber \\
&& \quad+\, \Or\left( \epsi^{3\beta-1} |t|+\epsi^{\frac{3}{2}\beta}\delta^{\frac{1}{2}}|t|\right)\nonumber \\ 
&\stackrel{(\ref{AdiFP})}{=}& \I \epsi^{\frac{3}{2}\beta-1}\delta^{\frac{1}{2}}    \int_0^t  P_i\, \E^{-\I \frac{t-s}{\epsi} \hepsiop_{i,\rm field}  } \, \mathcal{   T}_j   \,  \E^{- \I \frac{s}{\epsi} \hepsiop_{j,\rm field}  }(P_j\otimes Q_0)\tilde\chi(\hepsiop_0 )\,\D s\,{\bf 1}_E(\hepsiop)\nonumber \\
&& \quad+\, \Or\left(\epsi^{3\beta-2}\delta^{\frac{1}{2}}|t|^2+\epsi^{\frac{3}{2}\beta}\delta^{\frac{1}{2}}|t|\right)\nonumber \\
  &=& \I \epsi^{\frac{3}{2}\beta-1}\delta^{\frac{1}{2}}  \int_0^t   \E^{ -\I \frac{t-s}{\epsi} \hepsiop_{i,\rm field}  } \, P_i \mathcal{   T}_j (P_j\otimes Q_0) \,     \E^{- \I \frac{s}{\epsi} \hepsiop_{j,\rm field}  }  \,\D s\,{\bf 1}_E(\hepsiop) \nonumber \\
  &&\quad+\,\Or\left(\epsi^{3\beta-2}\delta^{\frac{1}{2}}|t|^2  + \epsi^{\frac{3}{2}\beta}(1+\epsi\delta^{-\frac{5}{2}})|t| + \epsi^{\frac{3}{2}\beta}\delta^{ -\frac{1}{2}} \right)      \, .\nonumber 
  \end{eqnarray}
 For our choice of  $\delta=\epsi^{\frac{1}{2}-(\beta-\frac{5}{6})/5}$ the error term is  $o( \epsi^{ \frac{3}{2}\beta- \frac{1}{2}})$ and we can neglect it in the following.
  
By using that $\| [\epsi\nabla_x,P_j]\|=\Or(\epsi)$, $P_iP_j=0$,  $T_j^* Q_0=0$, $\nabla T_j^* Q_0=0$  and $\|\nabla T_j\| =\Or(\delta^{-\frac{3}{2}})$, we see that 
\begin{eqnarray*}
\lefteqn{
 \epsi^{-\frac{1}{2}}\delta^{\frac{1}{2}}  P_i  \mathcal{   T}_j (P_j\otimes Q_0) =}\\
 &=&    P_i  \left(  \I  \epsi^{-\frac{1}{2}} \delta   (T_j+T_j^*) + 2      \epsi^{ \frac{1}{2}}(\nabla T_j+\nabla T_j^*) \cdot \epsi\nabla_x   \right) (P_j\otimes Q_0) \\
 &=& \I  \epsi^{-\frac{1}{2}}  \delta   P_i     T_j (P_j\otimes Q_0) + 2     \epsi^{ \frac{1}{2}}   P_i    \nabla T_j  (P_j\otimes Q_0)\cdot \epsi\nabla_x     + \Or(\epsi^\frac{3}{2}\delta^{-\frac{3}{2}}) \,.
\end{eqnarray*}
Thus the corresponding replacement in the integrand contributes an error of order $\Or(  \epsi^{\frac{3}{2}\beta+\frac{1}{2}}\delta^{-\frac{3}{2}} |t|)$.
Since $\nabla_xP_i$ and $\nabla_x P_j\otimes Q_0$ are bounded independently of $\epsi$ and, since
\[
 \left\|(H_{\rm f} + H_{\rm el}(x) - E_j(x)+\I\delta)^{-1}\right\|\leq  \delta^{-1}  \,,
\]
we have
\begin{eqnarray*}\lefteqn{
\epsi^\frac{1}{2}  P_i \nabla T_j (P_j\otimes Q_0) }\\&=& -\epsi^\frac{1}{2}  P_i \nabla   \left(
(H_{\rm f} + H_{\rm el}(x) - E_j(x)+\I\delta)^{-1} H_1^\epsi (P_j\otimes Q_0)
\right) (P_j\otimes Q_0)\\
&=& -\epsi^\frac{1}{2}  \nabla   \left( P_i(x)
(H_{\rm f} + H_{\rm el}(x) - E_j(x)+\I\delta)^{-1} H_1^\epsi  
\right) (P_j\otimes Q_0) + \Or\left(\epsi^\frac{1}{2} \delta^{-1}\right)\\
&=& -\epsi^\frac{1}{2} P_i  \nabla   \left( 
(H_{\rm f} + E_i(x) - E_j(x)+\I\delta)^{-1} H_1^\epsi  
\right) (P_j\otimes Q_0) + \Or\left(\epsi^\frac{1}{2} \delta^{-1}\right)\\
&=&\epsi^\frac{1}{2} P_i   (\nabla E_i(x) - \nabla E_j(x))
(H_{\rm f} + E_i(x) - E_j(x)+\I\delta)^{-2} H_1^\epsi  
(P_j\otimes Q_0)\; +\; \Or\left(\epsi^\frac{1}{2} \delta^{-1}\right)
\end{eqnarray*}
in the norm of bounded operators. Recall that 
\[
H_1^\epsi = -  4\sqrt{\pi} \sum_{\ell=1}^{r}A (\epsi^\beta y_{\ell})\cdot p_{\ell, y} 
\]
and thus on the one particle sector of Fock space, abbreviating $\hat\rho(k) := (2\pi)^{-\frac{3}{2}}{\bf 1}_{[0,\Lambda_0]}(k)$,
\begin{eqnarray*}\lefteqn{
P_i H_1^\epsi (P_j\otimes Q_0) (x,k,\lambda) =}\\&=&-  4\sqrt{\pi}  \sum_{\ell=1}^{r} \frac{\hat\rho(k)}{\sqrt{2|k|}} |\varphi_i(x)\rangle \langle \varphi_i(x), \E^{-\I \epsi^\beta k\cdot y_\ell} e_\lambda(k)\cdot\nabla_{y_\ell}\varphi_j(x)\rangle_{\Hi_{\rm el}}\langle\varphi_j(x)|\\
&=&-  4\sqrt{\pi}  \sum_{\ell=1}^{r} \frac{\hat\rho(k)}{\sqrt{2|k|}} |\varphi_i(x)\rangle \langle \varphi_i(x), e_\lambda(k)\cdot\nabla_{y_\ell}\varphi_j(x)\rangle_{\Hi_{\rm el}}\langle\varphi_j(x)| +\Or(\epsi^\beta)\\
&=&2\sqrt{\pi}  \sum_{\ell=1}^{r} \frac{\hat\rho(k)}{\sqrt{2|k|}} |\varphi_i(x)\rangle e_\lambda(k) \cdot\langle \varphi_i(x),  [H_{\rm el}(x),y_\ell ] \varphi_j(x)\rangle_{\Hi_{\rm el}}\langle\varphi_j(x)| +\Or(\epsi^\beta)\\
&=&2\sqrt{\pi} \left(E_i(x)-E_j(x)\right) \frac{\hat\rho(k)}{\sqrt{2|k|}} e_\lambda(k) \cdot\sum_{\ell=1}^{r} \langle \varphi_i(x),  y_\ell \varphi_j(x)\rangle_{\Hi_{\rm el}}    |\varphi_i(x)\rangle\langle\varphi_j(x)| +\Or(\epsi^\beta)\\
&=:&\sqrt{2\pi} \left(E_i(x)-E_j(x)\right) \frac{\hat\rho(k)}{\sqrt{|k|}} e_\lambda(k) \cdot D_{ij}(x)    |\varphi_i(x)\rangle\langle\varphi_j(x)| +\Or(\epsi^\beta)\,.
\end{eqnarray*}
Collecting the previous observations, we showed (\ref{MainEst}) with a
  transition operator   given by
 \begin{eqnarray}
  \mathcal{\tilde T}_{j\to i}  &=& 
   \frac{\I\delta}{2\pi\epsi^\frac{1}{2}}\sum_{\lambda=1,2}\; \int\limits_{|k|<\Lambda_0} \frac{    e_\lambda(k)\cdot D_{ij}(x)\,\Delta_E(x)  }{ \sqrt{|k|}(  |k| - \Delta_E(x) +\I \delta)} \times\nonumber \\&&\qquad \left( 1 +
    \frac{2 \I \epsi \nabla_x\Delta_E(x) }{\delta\sqrt{|k|}(|k| - \Delta_E(x)+\I \delta) }\cdot (-\I\epsi\nabla_x)\right) \times\nonumber  \\ &&\qquad\qquad
 |\varphi_i(x)\rangle  \langle \varphi_j(x) |\otimes a^*(k)   \,\D k + \Or\left(\epsi^{\beta-\frac{1}{2}}+\epsi^{\frac{1}{2}}\delta^{-1  }\right) \nonumber\\
      &=:& t_1\,+\,t_2 \,+  \Or\left(\epsi^{\beta-\frac{1}{2}}+\epsi^{\frac{1}{2}}\delta^{-1  }\right)\,.\label{t1t2}
 \end{eqnarray}
 The error term is $o(1)$ and we can neglect it.
 While  the norm of $t_2$ is of order $(\frac{\epsi}{\delta^3})^\frac{1}{2}$ and thus slightly smaller than that of $t_1$, which is  of order $(\frac{\delta}{\epsi})^\frac{1}{2}$, both grow as $\epsi\to 0$. 
 In order to show that the contribution of $t_1$ to the integral in  (\ref{MainEst}) is of order one and that of $t_2$ is strictly smaller, one has to perform the time integration and use cancellations due to oscillations. 
In order not to duplicate the corresponding arguments, we skip the proof that the contribution of $t_2$ is negligible at this point and comment on it instead after the proof of Theorem~\ref{decayformula}. At that point we will have introduced the necessary machinery in order to explain the argument. 
\end{proof}

\subsection{Spontaneous emission: rate of decay}\label{decayrate}

In order to obtain 
 an explicit formula for the leading order rate of spontaneous decay from Ran$P^\epsi_j$ to Ran$P^\epsi_i$ we evaluate the norm of the leading order   wave function in Theorem~\ref{theoTji}.

\begin{theorem}[\bf Probability for spontaneous decay]\label{decayformula}
Under the same hypotheses as in Theorem~\ref{theoTji}
 for
\[
\Psi=\psi\otimes\Omega\in (P^\epsi_j\otimes Q_0)\, {\bf 1}_E(\hepsiop)\Hi
\]
it holds that
\begin{equation}\label{MR}
 \lim_{\epsi\to0}\left(\epsi^{1-3\beta}\,\| P_i^\epsi  \E^{-\I \frac{t}{\epsi} \hepsiop } \,\Psi\|^2_\Hi -  \int_0^{t}  \tfrac{4}{3}\,\big\||D_{ij}| \Delta_E^{3/2} \,
\E^{-\I\frac{s}{\epsi} \hepsiop_{j}}P_j\psi\big\|^2_{\Hi_{\rm mol}}\,\D s \right)\,=\,0 
\end{equation}
uniformly on compact time intervals and uniformly in $\psi$.
\end{theorem}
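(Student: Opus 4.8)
The plan is to read off the statement from Theorem~\ref{theoTji} and then just evaluate the squared norm of the leading‐order transition amplitude produced there. Set
\[
A^\epsi(t)\;:=\;\I\int_0^t \E^{-\I\frac{t-s}{\epsi}\hepsiop_{i,\rm field}}\,\mathcal{T}_{j\to i}\,\E^{-\I\frac{s}{\epsi}\hepsiop_{j,\rm field}}\,\D s\;(P_j^\epsi\otimes Q_0)\,{\bf 1}_E(\hepsiop)\,,
\]
so that Theorem~\ref{theoTji} reads $\epsi^{\frac12-\frac32\beta}\,P_i^\epsi\E^{-\I\frac{t}{\epsi}\hepsiop} = A^\epsi(t)+o(1)$ on $(P_j^\epsi\otimes Q_0){\bf 1}_E(\hepsiop)\Hi$, uniformly for $t$ in compacts. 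Applying this to $\Psi=\psi\otimes\Omega$ in that subspace gives $\epsi^{\frac12-\frac32\beta}P_i^\epsi\E^{-\I\frac{t}{\epsi}\hepsiop}\Psi = A^\epsi(t)\Psi+o(1)$ in $\Hi$. Since the computation below will show that $\|A^\epsi(t)\Psi\|$ is bounded (in fact convergent), it follows that $\epsi^{1-3\beta}\|P_i^\epsi\E^{-\I\frac{t}{\epsi}\hepsiop}\Psi\|^2 = \|A^\epsi(t)\Psi\|^2+o(1)$, and the whole problem reduces to computing $\lim_{\epsi\to0}\|A^\epsi(t)\Psi\|^2$.

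Next I would exploit the one‐photon structure of $A^\epsi(t)\Psi$. Since $\Hf\Omega=0$ we have $\E^{-\I\frac{s}{\epsi}\hepsiop_{j,\rm field}}\Psi=(\E^{-\I\frac{s}{\epsi}\hepsiop_j}P_j\psi)\otimes\Omega+\Or(\epsi)$, using \eqref{PepsinahePField} to pass from $P_j^\epsi$ to $P_j$; writing (non‐degenerate band) $\E^{-\I\frac{s}{\epsi}\hepsiop_j}P_j\psi=\tilde\psi_s\,\varphi_j$ pointwise in $x$ with $\tilde\psi_s\in L^2(\field{R}{3l})$, $\|\tilde\psi_s\|=\|\psi\|$, where $\hepsiop_j$ on $\mathrm{Ran}\,P_j$ is the effective nuclear Hamiltonian $-(\epsi\nabla_x^{\rm Berry})^2+E_j(x)+\Or(\epsi^2)$, the state $\mathcal{T}_{j\to i}\E^{-\I\frac{s}{\epsi}\hepsiop_{j,\rm field}}\Psi$ lies in the one‐photon sector with electronic component in $\mathrm{Ran}\,P_i(x)$ and wavefunction $\frac{\I\delta}{2\pi\epsi^{1/2}}m_\lambda(k,x)\tilde\psi_s(x)$, where $m_\lambda(k,x)=\frac{e_\lambda(k)\cdot D_{ij}(x)\,\Delta_E(x)}{\sqrt{|k|}\,(|k|-\Delta_E(x)+\I\delta)}$. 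On this subspace $\hepsiop_{i,\rm field}$ acts diagonally as (effective $\hepsiop_i$)$\,\otimes1+1\otimes|k|$. Expanding $\|A^\epsi(t)\Psi\|^2=\int_0^t\int_0^t\langle I(s),I(s')\rangle\,\D s\,\D s'$ with $I(s):=\E^{-\I\frac{t-s}{\epsi}\hepsiop_{i,\rm field}}\mathcal{T}_{j\to i}\E^{-\I\frac{s}{\epsi}\hepsiop_{j,\rm field}}\Psi$, and using $\langle a^*(k,\lambda)\Omega,a^*(k',\lambda')\Omega\rangle=\delta_{\lambda\lambda'}\delta(k-k')$, one arrives, with $u:=s-s'$, at
\[
\|A^\epsi(t)\Psi\|^2=\frac{\delta^2}{4\pi^2\epsi}\sum_{\lambda=1,2}\int_{|k|<\Lambda_0}\!\!\D k\int_0^t\!\!\int_0^t\!\!\D s\,\D s'\;\E^{-\I\frac{u}{\epsi}|k|}\big\langle m_\lambda(k,\cdot)\tilde\psi_s,\,\E^{-\I\frac{u}{\epsi}\hepsiop_i}m_\lambda(k,\cdot)\tilde\psi_{s'}\big\rangle+o(1)\,.
\]

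The hard step, which I expect to be the main obstacle, is freezing the fast electronic evolution inside this inner product. I would insert $\tilde\psi_s=\E^{-\I\frac{u}{\epsi}\hepsiop_j}\tilde\psi_{s'}$, commute the multiplication operator $m_\lambda(k,\cdot)$ through the Schr\"odinger groups, and replace $\E^{\I\frac{u}{\epsi}\hepsiop_j}\E^{-\I\frac{u}{\epsi}\hepsiop_i}$ on $\mathrm{Ran}\,P_j$ by multiplication by $\E^{\I\frac{u}{\epsi}\Delta_E(x)}$. On states of bounded nuclear kinetic energy one has $\hepsiop_j-\hepsiop_i=\Delta_E(x)+\Or(\epsi)$ and $\tfrac1\epsi[\hepsiop_i,\Delta_E]=-2\nabla_x\Delta_E\cdot\epsi\nabla_x+\Or(\epsi)$ bounded, so these two replacements cost $\Or(u^2/\epsi+|u|)$ in graph norm and the $m_\lambda$‐commutator costs $\Or(|u|/\delta)$ relatively. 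The delicacy is that $m_\lambda$ and its $x$‐derivatives are $\delta^{-1}$‐ resp.\ $\delta^{-2}$‐singular near the resonance $|k|=\Delta_E(x)$; the errors come under control only because the $k$‐integration below supplies both a factor $\delta^{-1}$ and an exponential cutoff of $u$ on the scale $\epsi/\delta$, so that with $\delta=\epsi^{\frac12-(\beta-\frac56)/5}$ (hence $\epsi/\delta^2\to0$, using $\beta>\tfrac56$) the total error is $o(1)$. Carrying this through replaces the inner product by $\int_{\field{R}{3l}}\D x\,|m_\lambda(k,x)|^2\,\E^{\I\frac{u}{\epsi}(\Delta_E(x)-|k|)}\,|\tilde\psi_{s'}(x)|^2$ up to the controlled error; the simultaneous bookkeeping of this non‐commutativity against the resonant weights, uniformly in $k$, is what forces the hypothesis $\beta>\tfrac56$.

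Finally I would carry out the integrals. The angular part of the $k$‐integral, using $\sum_\lambda e_\lambda(k)\otimes\overline{e_\lambda(k)}=\id-\hat k\otimes\hat k$, gives $\int_{S^2}(|D_{ij}(x)|^2-|\hat k\cdot D_{ij}(x)|^2)\,\D\Omega=\tfrac{8\pi}{3}|D_{ij}(x)|^2$, and the radial part is the Fourier transform of a Lorentzian, $\int_0^{\Lambda_0}\frac{|k|\,\D|k|}{(|k|-\Delta_E(x))^2+\delta^2}\E^{-\I\frac{u}{\epsi}(|k|-\Delta_E(x))}=\frac{\pi\Delta_E(x)}{\delta}\E^{-\delta|u|/\epsi}+o(\delta^{-1})$, where one uses $0<\Delta_E(x)<\Lambda_0$ so the pole is interior and the endpoint terms are of lower order. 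This turns $\|A^\epsi(t)\Psi\|^2$ into $\frac{2\delta}{3\epsi}\int\D x\,|D_{ij}(x)|^2\Delta_E(x)^3\int_0^t\D s'\,|\tilde\psi_{s'}(x)|^2\int_0^t\D s\,\E^{-\delta|s-s'|/\epsi}+o(1)$, and since $\int_0^t\E^{-\delta|s-s'|/\epsi}\,\D s=\tfrac{2\epsi}{\delta}+o(\epsi/\delta)$ uniformly for $s'$ outside a boundary layer of width $\Or(\epsi/\delta)$ (negligible as $\epsi/\delta\to0$), the factors of $\delta$ and one $\epsi$ cancel and
\[
\|A^\epsi(t)\Psi\|^2=\tfrac43\int_0^t\!\!\D s'\!\int_{\field{R}{3l}}\!\!\D x\,|D_{ij}(x)|^2\Delta_E(x)^3|\tilde\psi_{s'}(x)|^2+o(1)=\tfrac43\int_0^t\big\||D_{ij}|\,\Delta_E^{3/2}\,\E^{-\I\frac{s'}{\epsi}\hepsiop_j}P_j\psi\big\|^2_{\Hi_{\rm mol}}\,\D s'+o(1)\,,
\]
which is \eqref{MR}; in particular $\|A^\epsi(t)\Psi\|$ is bounded, closing the loop in the first paragraph, and all bounds used are uniform for $t$ in compacts and $\psi$ in the unit ball. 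I would close by returning to the claim deferred in the proof of Theorem~\ref{theoTji}: the $t_2$‐term of \eqref{t1t2} carries an extra $\epsi\nabla_x$ — of order $1$ on our states — but an extra resonant denominator that raises the radial $k$‐integral to order $\delta^{-3}$ against a prefactor of only $(\epsi/\delta^3)^{1/2}$, so the same bookkeeping shows its contribution to $\|A^\epsi(t)\Psi\|^2$ is $\Or(\epsi/\delta^2)=o(1)$.
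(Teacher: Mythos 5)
Your proposal is correct and follows essentially the same route as the paper's proof: both reduce to evaluating the squared norm of the integral expression from Theorem~\ref{theoTji}, exploit the one-photon structure, pass to relative and mean time variables, perform the angular $k$-integration (getting the factor $\frac{8\pi}{3}$), freeze the fast electronic evolution by replacing $\E^{\I\frac{u}{\epsi}\hepsiop_j}\E^{-\I\frac{u}{\epsi}\hepsiop_i}$ by $\E^{\I\frac{u}{\epsi}\Delta_E(x)}$ (the paper does this via Strang splitting and the auxiliary Hamiltonians $\tilde H_j, \tilde H_i$ with exactly $\tilde H_j-\tilde H_i=\Delta_E$), and then evaluate the radial integral by residue calculus to produce the Lorentzian-transform $\frac{\pi\Delta_E}{\delta}\E^{-\delta|u|/\epsi}$ and finally the $\int_0^t\E^{-\delta|s-s'|/\epsi}\D s\approx\frac{2\epsi}{\delta}$ cancellation. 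The only substantive difference is that the paper controls the errors from the freezing step via an explicit cutoff $|s'|\leq\tau$ combined with an integration-by-parts (stationary phase) estimate for $|s'|>\tau$, whereas you rely informally on the exponential factor $\E^{-\delta|u|/\epsi}$ to provide the cutoff; this is the right intuition but is slightly circular, since the exponential decay only becomes available after the radial integration, which is itself justified only after the freezing step has introduced errors that must first be bounded on a finite $u$-window.
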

\noindent Before we come to the proof, we collect some remarks on the result:
\begin{enumerate}
\item
 Note that the subtracted term in (\ref{MR}) is of order $1$ and therefore the same is true for $\epsi^{1-3\beta}\,\| P_i^\epsi  \E^{-\I \frac{t}{\epsi} \hepsiop } \, \Psi\|^2$. 
Now recall that $\epsi^\beta$ is equal to the coupling constant $\alpha$ by choice of~$\beta$. So $\| P_i^\epsi  \E^{-\I \frac{t}{\epsi} \hepsiop }  \,\Psi\|^2$ is proportional to $\alpha^3$ and grows linearly in time. 
Observing that in our units time is scaled with~$\alpha^2$, we see that Theorem \ref{decayformula} is the generalization to molecules of the physics textbook result that for atoms the decay rate is $\tfrac{4}{3}\,\alpha^5|D_{ij}|^2 \Delta_E^3$.
\item For $\beta<1$ we may replace $P_i^\epsi$ and $P_j^\epsi$ by $P_i$ and $P_j$ respectively in the theorem because $P_k^\epsi-P_k=\Or(\epsi)$ for $k=i,j$ and $\epsi^{\frac{3}{2}\beta -\frac{1}{2}}>\epsi$. 
\item Practically the result (\ref{MR}) means that the decay rate can be computed at leading order within the Born-Oppenheimer approximation: while $\psi\in \Hi_{\rm mol}$ still contains the electronic degrees of freedom, 
$ P_j\psi$ is of the form $\phi(x) \varphi_j(x,y)$  and therefore
\[
(\E^{-\I\frac{s}{\epsi} \hepsiop_{j}}P_j\psi)(x,y) = (\E^{-\I\frac{s}{\epsi} h^\epsi_{j} } \phi)(x)\varphi_j(x,y)\,,
\]
where the effective Born-Oppenheimer Hamiltonian   is just 
\[
h^\epsi_{\rm el} = \epsi^2\left(-\I \nabla_x + \mathcal{A}_j(x)
\right)^2 + E_j(x) +\Or(\epsi^2)\,,
\]
where $\mathcal{A}_j(x) := \I\langle \varphi_j(x), \nabla\varphi_j(x)\rangle_{\Hi_{\rm el}}$ the connection coefficient of the Berry-connec\-tion $\nabla_x^j:= P_j \nabla_x P_j$.
Hence the decay rate can be written as
\[
\tfrac{4}{3}\,\big\||D_{ij}| \Delta_E^{3/2} \,
\E^{-\I\frac{s}{\epsi} \hepsiop_{j}}P_j\psi\big\|^2_{\Hi_{\rm mol}} =
\tfrac{4}{3}\,\big\||D_{ij}| \Delta_E^{3/2} \,
\E^{-\I\frac{s}{\epsi} h^\epsi_{j}}\phi\big\|^2_{\Hi_{\rm nuc}}
\]
and one only needs to solve an effective Schr\"odinger equation for the nuclei.
\end{enumerate}
 
\begin{proof}({\em of Theorem~\ref{decayformula})}
According to Theorem~\ref{theoTji}  we have that
 \[
 \lim_{\epsi\to 0} \left(  \epsi^{1-3\beta   }  \| P_i^\epsi  \E^{-\I \frac{t}{\epsi} \hepsiop } \, \Psi\|^2 - \Theta_{ij}(t)
 \right) =0
 \]
 with
  \begin{equation} \label{Thetaij0}
  \Theta_{ij}(t)  :=    \textstyle
  \left\| \int_0^t \D r  \,
  \E^{\I\frac{r }{\epsi} H^\epsi_{i,\rm f} }    
  \mathcal{T}_{j\to i} 
  \E^{-\I\frac{r }{\epsi} H^\epsi_{j,\rm f}   }\Psi \right\|^2  
  \end{equation}
  and we   need to  show that 
$
 \lim_{\epsi\to0}\left( \Theta_{ij}(t) \,-\,  \int_0^{t}  \tfrac{4}{3}\,\big\||D_{ij}| \Delta_E^{3/2} \,
\E^{-\I\frac{s}{\epsi} H^\epsi_j}\psi\big\|^2\,\D s \right)\;=\;0. 
$
First note that because of the explicit form (\ref{Tji}) of $\mathcal{T}_{j\to i}$ this state lives only  the one-particle sector of Fock space. Writing $k=\omega|k|$, the only dependence on the angular variable $\omega$  
appears in the polarization vectors $e_\lambda(\omega)$. Using 
  that for any symmetric $3\times 3$\,-matrix $A$ it holds that
\[
\textstyle \sum_{\lambda =1,2} \int_{S^2} \D \omega\, \langle e_\lambda( \omega), A e_\lambda( \omega) \rangle  =
\frac{8\pi}{3} {\rm tr}A\,,
\]
one can perform the angular integration in (\ref{Thetaij0}) and obtains a factor $\frac{8\pi}{3}$. However, to not overburden notation, we will make this explicit only later on. 
First we rewrite (\ref{Thetaij0}) as
 \begin{equation} \label{Thetaij}
  \Theta_{ij}(t)   =  \epsi^2\int_0^{\frac{t}{\epsi}} \D s\int_{-a(s)}^{a(s)}\hspace{-2mm} \D s' \underbrace{ \Big\langle \Psi,\,   \E^{\I(s+\frac{s'}{2}) H^\epsi_{j,\rm f} } \mathcal{T}_{j\to i}^* \E^{-\I s' H^\epsi_{i,\rm f} }    \mathcal{T}_{j\to i} \E^{-\I(s-\frac{s'}{2}) H^\epsi_{j,\rm f} }   
\Psi\Big \rangle }_{=: \tilde I(s,s')} 
 \end{equation}
where $s= (r+r')/(2\epsi)$, $s'=( r-r')/\epsi$, $a(s):=\min\{ 2s,2(\frac{t}{\epsi}-s)\} $. Let
\[
\tilde H_j:=   P_j  (-\epsi^2\Delta_x)  P_j + P_i  (-\epsi^2\Delta_x)  P_i + E_j 
\]
and 
\[
\tilde H_i :=   P_j  (-\epsi^2\Delta_x)  P_j + P_i  (-\epsi^2\Delta_x)  P_i + E_i  
\]
which satisfy
\[
P_j \tilde H_j   =   \tilde H_j P_j = P_j H^\epsi_j P_j\,,\quad P_i \tilde H_i   =  \tilde H_i P_i = P_i H^\epsi_i P_i
\quad\mbox{and}\quad
\tilde H_j - \tilde H_i = \Delta\,.
\]
%
Then with 
  $\mathcal{T}_{j\to i}=P_i\mathcal{T}_{j\to i}P_j$ and $\Psi=\psi\otimes\Omega$ we have that
\begin{eqnarray*}
\tilde I(s,s') &=& \Big\langle  \Psi,\,   \E^{\I(s+\frac{s'}{2}) H^\epsi_{j } } P_j \mathcal{T}_{j\to i}^* P_i  \E^{-\I s' H^\epsi_{i,\rm f} }    P_i \mathcal{T}_{j\to i}P_j  \E^{-\I(s-\frac{s'}{2}) H^\epsi_{j } }   
\Psi\Big \rangle  \\
&=& \Big\langle  \Psi,\,   \E^{\I(s+\frac{s'}{2}) \tilde H _{j } } P_j \mathcal{T}_{j\to i}^* P_i  \E^{-\I s' (\tilde H _{i } + H_{\rm f}) }    P_i \mathcal{T}_{j\to i}P_j  \E^{-\I(s-\frac{s'}{2}) \tilde H_{j } }   
\Psi\Big \rangle  \,.
\end{eqnarray*}
Making the angular integration explicit,  we can thus replace $\tilde I(s,s')$ in (\ref{Thetaij}) by 
 \begin{eqnarray*} 
\lefteqn{I(s,s') \ \,=\ \,\frac{8\pi}{3} \frac{1}{(2\pi)^2} \frac{\delta^2  }{\epsi}  \int_0^{\Lambda_0} \D R\,R \, \E^{-\I s'R}  \left\langle \psi,  \E^{\I(s+\frac{s'}{2}) \tilde H_j } \frac{D^*(x)\Delta(x)}{R-\Delta(x)-\I\delta} \E^{-\I(s+\frac{s'}{2}) \tilde H_j } \right.} \nonumber \\
&& \left. \times\, \underbrace{ \E^{\I(s+\frac{s'}{2}) \tilde H_j } \E^{-\I s' (\tilde H_j-\Delta(x)) }  \E^{-\I(s-\frac{s'}{2}) \tilde H_j }}_{=:U(s,s')}  \, \E^{\I(s-\frac{s'}{2}) \tilde H_j } \frac{D(x)\Delta(x)}{R-\Delta(x)+\I\delta}\E^{-\I(s-\frac{s'}{2}) \tilde H_j }\,\psi\right\rangle  \nonumber \\
&=&   \tfrac{2\delta^2 }{3\pi\epsi} \int_{0}^{\Lambda_0}\D R\,R\,   \E^{-\I s'R}   \left\langle\psi,  \frac{D^*(s+\tfrac{s'}{2})\Delta(s+\tfrac{s'}{2})}{R -\Delta(s+\tfrac{s'}{2})-\I\delta}  U(s,s')     \frac{D(s-\tfrac{s'}{2})\Delta(s-\tfrac{s'}{2})}{ R-\Delta(s-\tfrac{s'}{2})+\I\delta}\,\psi\right\rangle  \nonumber \,.
 \end{eqnarray*}
 Here and in the following we denote for any operator $O$   the Heisenberg operator $\E^{\I s \tilde H_{j } }O\,\E^{-\I s \tilde H_{j } }$ by~$O(s)$ and for better readability we abbreviate $R:= |k|$ and $D:= D_{ij}$.  
  Moreover,  we will still write out $\delta$ in the expressions and in most remainder estimates, but keep in mind that in the end we put $\delta=\epsi^{\frac{1}{2}-(\beta-\frac{5}{6})/5}$.

  Next we show by a stationary phase argument that $I$ is small for large~$s'$.  
  Integration by parts in
 \[ 
 I(s,s') =  \tfrac{2\delta^2 }{3\pi\epsi} \int\limits_{0}^{\Lambda_0}\D R\,  R\,\Big(\left(\tfrac{\I}{s'}\tfrac{\D}{\D R}\right)^l  \E^{-\I s'R}\Big) \,    \frac{D^*(s+\tfrac{s'}{2})\Delta(s+\tfrac{s'}{2})}{R -\Delta(s+\tfrac{s'}{2})-\I\delta}     \,U(s,s')  \,   \frac{D(s-\tfrac{s'}{2})\Delta(s-\tfrac{s'}{2})}{ R-\Delta(s-\tfrac{s'}{2})+\I\delta}, \nonumber  
\] 
 shows that
 \begin{equation}\label{error2}
 I =\Or(\delta^{ -l}\epsi^{-1}\tau^{-l}+\delta^2\epsi^{-1}\tau^{-1})\quad \mbox{ for }\quad |s'|\geq\tau\mbox{ and } \tau\geq\delta^{-1}\,.
 \end{equation}
 Instead of giving the detailed computation we just mention that the boundary terms contain the operators
 \[
 \frac{1}{  -\Delta(s\pm\tfrac{s'}{2})\pm\I\delta}  \quad\mbox{ and } \quad \frac{1}{  \Lambda_0-\Delta(s\pm\tfrac{s'}{2})\pm\I\delta} \,,
 \]
 which are all uniformly bounded, since $0 < \Delta (x) < \Lambda_0$ uniformly in $x$. So the first boundary term, which is of order  $\delta^2\epsi^{-1}\tau^{-1}$, is indeed the worst.
 
Next, for $|s'|\leq\tau$ we expand the operators around $s' =0$.
   Clearly
\begin{eqnarray}\label{DeltaExp}
 \Delta(s\pm\tfrac{s'}{2})  &=& \Delta(s)\,\pm\,\tfrac{\I}{2}\int
 _0^{s'} [H^\epsi_j,\Delta (s\pm\tfrac{s''}{2})]\,\D s''\ \,=\ \, \Delta(s)\,+\,\Or(\epsi\tau)
\end{eqnarray}
in $\mathcal{L}( D_0,\Hi)$ because the gradient of $\Delta$ is bounded independently of $\epsi$. The same is true for $\Delta$ replaced by $D$ because $\varphi_i$ and $\varphi_j$ as well as their derivatives with respect to $x$ decay exponentially in $y$ (the proof in \cite{WaTe} is easily adapted to unbounded potentials whose derivatives with respect to $x$ are bounded). 
By using the so-called Strang splitting (see \cite{JaLu}) we see that
\begin{eqnarray*}
 U(s,s')  &=& \E^{\I(s+\frac{s'}{2}) H^\epsi_j } \E^{-\I s' (H^\epsi_j-\Delta(x)) }  \E^{-\I(s-\frac{s'}{2}) H^\epsi_j }\\
&=& \E^{\I(s+\frac{s'}{2}) H^\epsi_j } \E^{-\I \frac{s'}{2} H^\epsi_j}\E^{\I s'\Delta(x) }  \E^{-\I \frac{s'}{2} H^\epsi_j} \E^{-\I(s-\frac{s'}{2}) H^\epsi_j } \,+\,\Or\big(\tau^3\|[\Delta,[H^\epsi_j,\Delta]]\|\big)\\
&=& \E^{\I s' \Delta(s)}\,+\,\Or(\epsi^2\tau^3)
\end{eqnarray*}
in $\mathcal{L}(D_0,\Hi)$. Plugging these expansions of $\Delta$, $D$, and $U$ into $I$, which is allowed because $\Psi\in D_0$ due to the energy cutoff and all operators involved are in $\mathcal{L}(D_0)$ with a norm bounded independently of $\epsi$, $\delta$, $s$ and $s'$,
we find that for $|s'|\leq \tau$
 \begin{eqnarray}\label{smalls}\nonumber
 I (s,s') &=&  \tfrac{2\delta^2 }{3\pi\epsi} \int_{0}^{\Lambda_0}\D R\,  R\,  \E^{-\I s'R}  \underbrace{\frac{D^*(s )\Delta(s )}{R -\Delta(s+\tfrac{s'}{2})-\I\delta}  \,\E^{\I s' \Delta(s)}  \,   \frac{D(s )\Delta(s )}{ R-\Delta(s-\tfrac{s'}{2})+\I\delta}}_{=: J(s,s',R,\delta)}\\
 &&+\, \Or( \tau+ \epsi\tau^3)  \,.
 \end{eqnarray} 
 While this is not small, note that the contribution of the error term to $\Theta_{ij}$
 is of order 
 \[
 \epsi^2\int_0^{\frac{t}{\epsi}} \D s\int_{-\tau}^{\tau} \D s' \Or( \tau+ \epsi\tau^3) = \Or( \epsi\tau^2+ \epsi^2\tau^4)\,,
 \] 
 which is indeed small for our choice of $\tau=\epsi^{-\frac{1}{2}+(\beta-\frac{5}{6})/10}$.
 
When expanding the denominators in (\ref{smalls}), we have to be more careful, since e.g.
\begin{eqnarray}\label{resodiff}\lefteqn{\hspace{-10mm}
\frac{1}{R -\Delta(s+\tfrac{s'}{2})-\I\delta}- \frac{1}{R -\Delta(s )-\I\delta} }\\&=&
 \frac{1}{R -\Delta(s+\tfrac{s'}{2})-\I\delta} \left(\Delta(s+\tfrac{s'}{2}) - \Delta(s)\right) \frac{1}{R -\Delta(s )-\I\delta}\nonumber
\end{eqnarray}
is only $\Or(\frac{\epsi\tau}{\delta^2})$ when naively estimating the norm, which yields a term of order 
$\Or(\frac{\epsi \tau^2}{\delta })$ to $\Theta_{ij}$. This will be large for our choice of $\delta(\epsi)$ and $\tau(\epsi)$ and thus we need a better estimate. 
For this we have to evaluate the integral explicitly. In order to prepare for the residue calculus, we first show that we can extend the $R$-integration to all of $\R$ with a  negligible error. Adding the integral to $+\infty$ yields
\begin{eqnarray}\lefteqn{ \hspace{-15mm}
\tfrac{2\delta^2 }{3\pi\epsi}
 \int_{-\tau}^{\tau} \epsi \D s'  \lim_{\rho\to\infty} \int_{\Lambda_0}^{\rho}\D R\,R\,   \E^{-\I s'R}   J(s,s',R,\delta) }\nonumber\\
 &=&  \tfrac{2\delta^2 }{3\pi\epsi}  \lim_{\rho\to\infty} \int_{-\tau}^{\tau}\epsi \D s'  \int_{\Lambda_0}^{\rho}\D R\,\left( \I \tfrac{\D}{\D s'}   \E^{-\I s'R}  \right)  J(s,s',R,\delta)\nonumber\\
  &=&-\, \tfrac{2\I\delta^2 }{3\pi\epsi} \lim_{\rho\to\infty}
 \int_{-\tau}^{\tau}\epsi \D s'   \int_{\Lambda_0}^{\rho}\D R\,  \E^{-\I s'R}    \tfrac{\D}{\D s'}  J(s,s',R,\delta)\nonumber \\&& + \, \tfrac{2\I\delta^2 }{3\pi\epsi}
 \epsi     \lim_{\rho\to\infty}   \int_{\Lambda_0}^{\rho}\D R\, \left(  \E^{-\I \tau R}       J(s,\tau,R,\delta)   -  \E^{ \I \tau R}       J(s,-\tau,R,\delta) \right)\nonumber\\
 &=&
 \Or\left( \delta^2\tau+\delta^2\right)\,,\label{error3}
 \end{eqnarray}
  since $0 < \Delta (x) < \Lambda_0$ uniformly in $x$ implies that $   J(s,s',R,\delta)$ and  $\tfrac{\D}{\D s'}  J(s,s',R,\delta)$ are uniformly bounded by $\frac{1}{R^2}$ in $\mathcal{L}(D_0,\Hi)$ for $R\geq \Lambda_0$ and analogously for $R\leq 0$. Thus we can integrate from $-\infty$ to $\infty$ in $R$ while adding an error of order $ {\delta^2}\tau $ to $\Theta_{ij}$.

Now we want to replace $J(s,s',R,\delta)$ in   (\ref{smalls}) by 
\[
\tilde J(s,s',R,\delta) := \frac{D^*(s )\Delta(s )}{R -\Delta(s )-\I\delta}  \,\E^{\I s' \Delta(s)}  \,   \frac{D(s )\Delta(s )}{ R-\Delta(s )+\I\delta}\,.
\]
One of the two terms appearing in the difference is  
\begin{equation}\label{rest}
 \tfrac{2\delta^2 }{3\pi\epsi} \lim_{\rho\to\infty}\int_{-\rho}^{\rho}\D R\,  R\,  \E^{-\I s'R}   \left(  \frac{1}{R -\Delta_+ -\I\delta}  -  \frac{1}{R -\Delta_0-\I\delta} \right) f(s)  \frac{1}{ R-\Delta_-+\I\delta} \,,
\end{equation}
where we abbreviate 
\[
\Delta_\pm := \Delta(s\pm \tfrac{s'}{2}) \,,\quad \Delta_0 := \Delta(s) \quad\mbox{and}\quad f(s ):= D^*(s)D(s)\Delta(s)^2 \E^{\I s'\Delta(s)}\,.
\]
To show that this term (and analogously the other one)  gives only a negligible contribution to $\Theta_{ij}$, 
we   use the residue calculus. For $s'<0$ we need to close the contour in the upper complex plane. Writing the spectral representation of the self-adjoint operators $\Delta_\pm$ resp.\ $\Delta_0$  with spectrum contained in $(0,\Lambda_0)$ as
\[
\Delta_{\pm,0}  =: \int_0^{\Lambda_0} \lambda \,\D P^{\pm,0}_\lambda\,,
\] 
the residue theorem yields for $ s' <0$
\begin{eqnarray*}
\lefteqn{\hspace{-0cm}
 \tfrac{2\delta^2 }{3\pi\epsi} \lim\limits_{\rho\to\infty}\int\limits_{-\rho}^{\rho}\D R\,  R\,  \E^{-\I s'R}   \left(  \frac{1}{R -\Delta_+ -\I\delta}  -  \frac{1}{R -\Delta_0-\I\delta} \right) f(s)  \frac{1}{ R-\Delta_-+\I\delta} }\\
 &=&\tfrac{2\delta^2 }{3\pi\epsi} \lim_{\rho\to\infty}  \int\limits_0^{\Lambda_0}\hspace{-1pt}\int\limits_{-\rho}^{\rho}\hspace{-1pt}\D R\,  R\,  \E^{-\I s'R}   \left(  \frac{\D P^+_\lambda}{R -\lambda -\I\delta}  -  \frac{\D P^0_{ \lambda}}{R - \lambda-\I\delta}  \right) f(s)  \frac{1}{ R-\Delta_-+\I\delta} \\
 &=&  \tfrac{2\delta^2 }{3\pi\epsi}  \int\limits_0^{\Lambda_0}  \lambda\E^{-\I s'(\lambda+\I\delta)} \left(\D P^+_\lambda -\D P^0_{ \lambda} \right) f(s) \frac{1}{\lambda-\Delta_-+2\I \delta} \;+\;\Or(\tfrac{\delta^2}{\epsi} )\;\;= \;\; (*)\,.
    \end{eqnarray*}
  According to (\ref{resodiff}) we can replace $\Delta_-$ by $\Delta_0$ in the resolvent at the price of a term of order $\frac{\delta^2}{\epsi}\cdot \frac{\epsi\tau}{\delta^2} = \tau$. Then one can commute the resolvent with $f(s)$ and afterwards replace by the same reasoning $\Delta_0$ by $\Delta_+$ for the first summand. Now one can integrate the spectral measures explicitly again and obtains 
  \begin{eqnarray}
  (*)& =& \tfrac{2\delta^2 }{3\pi\epsi}   \int\limits_0^{\Lambda_0} \lambda\E^{-\I s'(\lambda+\I\delta)} \left(\frac{\D P^+_\lambda}{\lambda-\Delta_++2\I \delta} 
 -\frac{\D P^0_{ \lambda}}{\lambda-\Delta_0+2\I \delta} 
 \right) f(s)  \;+\;\Or\big(  \tfrac{\delta^2}{\epsi} + \tau  \big) \nonumber\\
 &=& \tfrac{2\delta^2 }{3\pi\epsi}  \E^{-|s'|\delta} \left(\frac{\E^{-\I s' \Delta_+}\Delta_+}{2\I\delta} - \frac{\E^{-\I s' \Delta_0}\Delta_0}{2\I \delta}\right) f(s)  \;+\;\Or\big(  \tfrac{\delta^2}{\epsi} + \tau  \big)\nonumber\\
 &=& \Or\big(  \tau^2\delta+\tfrac{\delta^2}{\epsi} + \tau  \big) \,,\label{error4} 
       \end{eqnarray}
  where we used that by exactly the same reasoning as in (\ref{DeltaExp}) we have
  \[
 \E^{-\I s' \Delta_+}\Delta_+ -  \E^{-\I s' \Delta_0}\Delta_0 = \Or(\epsi\tau^2)
  \]     
       for $|s'|\leq\tau$. The additional factor of $\tau$ comes from the fact that derivatives of $\E^{\I s' \Delta(x)}$ are of order $|s'|\leq \tau$.
   After integration over $s$ and $s'$ this adds an error of order $ \epsi\tau^3\delta+\tau\delta^2 + \epsi\tau^2$ to $\Theta_{ij}$. 
 To estimate (\ref{rest}) for $s'>0$,   one closes the contour in the lower complex plane and proceeds along the same lines as above.
     
     Let $ a_\tau(s) := \min\{a(s),\tau\}$, then 
  collecting once more all the   estimates we obtain
\begin{eqnarray*} \lefteqn{\hspace{-1.2cm}
 \epsi^2\int_0^{\frac{t}{\epsi}} \D s\int_{-\frac{t}{\epsi}}^{\frac{t}{\epsi}} \D s'\, I (s,s')  
\; -\;  \frac{2\delta^2\epsi}{3\pi}\int_0^{\frac{t}{\epsi}} \D s\int_{-a_\tau(s)}^{a_\tau(s)} \D s'\,  \lim_{\rho\to\infty}
\int_{-\rho}^\rho \D R\, R\, \E^{-\I s'R}\,\tilde J(s,s',R,\delta) } \\
&\stackrel{(\ref{error2})}{=} & \epsi^2\int_0^{\frac{t}{\epsi}} \D s\int_{-\frac{t}{\epsi}}^{\frac{t}{\epsi}} \D s'\,\Or\left((\delta\tau)^{ -l}\epsi^{-1}+\delta^2\epsi^{-1}\tau^{-1}
\right) \\
 &\stackrel{(\ref{smalls})}{}&+\, \, \epsi^2\int_0^{\frac{t}{\epsi}} \D s\int_{-\tau}^{\tau} \D s'\,\Or\left(  \tau+ \epsi\tau^3   \right)\\ 
&\stackrel{(\ref{error3})}{}&+\,\,  \epsi \int_0^{\frac{t}{\epsi}} \D s\,\Or(\delta^2\tau + \delta^2)\\
&\stackrel{ (\ref{error4})}{}&+\, \, \epsi^2\int_0^{\frac{t}{\epsi}} \D s\int_{-\tau}^{\tau} \D s'\,\Or\left(    \tau^2\delta+ \tfrac{\delta^2}{\epsi} +  \tau  \right)\\ 
&=& 
  \Or\left((\delta\tau)^{ -l}\epsi^{-1} +\delta^2\epsi^{-1}\tau^{-1}+  \delta^2\tau + \delta^2
+ \epsi\tau^2 + \epsi^2\tau^4+ \epsi\tau^3\delta
\right)\,.
\end{eqnarray*}
Recall that $\delta=\epsi^{\frac{1}{2}-(\beta-\frac{5}{6})/5}$ and $\tau=\epsi^{-\frac{1}{2}+(\beta-\frac{5}{6})/10}$ and thus $\delta\tau = \epsi^{- (\beta-\frac{5}{6})/10}\gg 1$ for $\frac{5}{6}<\beta\leq\frac{4}{3}$. Then for $l$ big enough, the error is~$o(1)$ for all $\beta$ with $\frac{5}{6}<\beta\leq\frac{4}{3}$.

Finally, we compute the main term using again the residue calculus.  We close the integral depending on the sign of $ s'$ and get
\begin{eqnarray*}\lefteqn{ 
   \tfrac{2\delta^2\epsi}{3\pi}\int_0^{\frac{t}{\epsi}} \D s\int_{-a_\tau(s)}^{a_\tau(s)} \D s'\,  \lim_{\rho\to\infty}
\int_{-\rho}^\rho \D R\, R\, \E^{-\I s'R}\,\tilde J(s,s',R,\delta) =}\\
&=&  \tfrac{2\delta^2\epsi}{3\pi}\int_0^{\frac{t}{\epsi}} \D s\int_{-a_\tau(s)}^{a_\tau(s)} \D s'\,  \lim_{\rho\to\infty}
\int_{-\rho}^\rho \D R\, R\, \E^{-\I s'R}\,\frac{D^*(s )\Delta(s )}{R -\Delta(s )-\I\delta}  \,\E^{\I s' \Delta(s)}  \,   \frac{D(s )\Delta(s )}{ R-\Delta(s )+\I\delta}\\
&=& \tfrac{2\delta\epsi }{3}\int_0^{\frac{t}{\epsi}} \D s\int_{-a_\tau(s)}^{a_\tau(s)} \D s'\,  \E^{- |s'|\delta} \,   |D|^2(s)\Delta^3(s)   \,+\, \Or(\delta^2\tau)\\
&=& \tfrac{2\delta\epsi }{3}\int_0^{\frac{t}{\epsi}} \D s\int_{-\infty}^{\infty} \D s' \, \E^{- |s'|\delta} \,   |D|^2(s)\Delta^3(s) \,+\,\Or(\delta^2\tau+\E^{-\tau\delta} +\epsi\tau )\\
&=& \tfrac{4\epsi }{3}\int_0^{\frac{t}{\epsi}} \D s \,   |D|^2(s)\Delta^3(s)\,+\,\Or(\delta^2\tau+\E^{-\tau\delta}+\epsi\tau  )\,.
 \end{eqnarray*}
So we end up with
 \begin{eqnarray*}
 \Theta_{ij}(t) &=&  \epsi^{3\beta -1} \left\langle \Psi ,\, \epsi^2 \int_0^{\frac{t}{\epsi}} \D s\int_{-a(s)}^{a(s)} \D s' I (s,s') \,
\Psi\right\rangle\\
&=& \epsi^{3\beta}   \int_0^{t/\epsi}  \tfrac{4}{3}\left\langle \Psi,\,   |D|^2(s) \Delta^3(s) \,
\Psi\right\rangle\,\D s \,+\,o(\epsi^{3\beta -1})\\
&=& \epsi^{3\beta-1}   \int_0^{t}  \tfrac{4}{3}\,\big\||D_{ij}| \Delta_E^{3/2} \,
\E^{-\I \frac{s}{\epsi}H^\epsi_j}P_j\psi \big\|^2\,\D s \,+\,o(\epsi^{3\beta -1}).
 \end{eqnarray*}
\end{proof}

We still need to show that the contribution of $t_2$ from (\ref{t1t2}) to the transitions is negligible at leading order. More precisely, we need to show that 
\begin{eqnarray*}
 \tilde \Theta_{ij}(t) &:=&   
  \left\langle \Psi,\int_0^t \D r\int_0^t \D r'\,  \E^{\I\frac{r}{\epsi} H^\epsi_{j,\rm f} }\, t_2^*\, \E^{-\I\frac{r}{\epsi} H^\epsi_{i,\rm f}} \E^{\I\frac{r'}{\epsi} H^\epsi_{i,\rm f} }  \,  t_2\, \E^{-\I\frac{r'}{\epsi} H^\epsi_{j,\rm f} }\Psi\right\rangle\\  
&=&  \Big\langle \Psi,\,  \epsi^2\int_0^{\frac{t}{\epsi}} \D s\int_{-a(s)}^{a(s)} \D s'  \E^{\I(s+\frac{s'}{2}) H^\epsi_{j,\rm f} }\, t_2^*\, \E^{-\I s' H^\epsi_{i,\rm f} }   \, t_2\,\E^{-\I(s-\frac{s'}{2}) H^\epsi_{j,\rm f} }  
\Psi\Big\rangle,
 \end{eqnarray*}
is $o(\epsi^{3\beta-1})$.
With the same type of arguments as in the previous proof one can now show that the main contribution to 
$\tilde \Theta_{ij}(t) $ comes from the integral 
\[
\tfrac{8 \epsi^3}{3\pi}\int_0^{\frac{t}{\epsi}} \D s \left\langle   \Psi_\ell(s), \int_{-\tau}^{\tau} \D s'\,  \lim_{\rho\to\infty}
\int_{-\rho}^\rho \D R\,   \E^{-\I s'R}\,  \frac{\partial_\ell\Delta(s )D^*(s )\Delta(s )}{(R -\Delta(s )-\I\delta)^2}  \,\E^{\I s' \Delta(s)}  \,   \frac{\partial_i\Delta(s )D(s )\Delta(s )}{( R-\Delta(s )+\I\delta)^2 } \Psi_i(s) \right\rangle
\]
which is easily seen to be $\Or\left(\epsi^2 \left(\frac{\tau}{\delta^3}+ \frac{\tau^2}{\delta^2} \right)\right)$ after performing the $R$ integration. Here $\Psi_\ell(s) :=  \E^{\I s H^\epsi_{j } }(-\I\epsi \partial_{x_\ell}) \E^{-\I s H^\epsi_{j } } \Psi\in D_0$. This concludes to proof of Theorem~\ref{theoTji}.

\section{Proofs of the main propositions}\label{Props}

Before giving the details of the proofs let us shortly comment on the relation and differences 
between the Propositions~\ref{TheoBOnoField}, \ref{BOprop} and 3.  In some sense they are all ``adiabatic theorems'', however, of slightly different spirit. 
In Proposition~\ref{TheoBOnoField} we adapt and simplify arguments from \cite{SpTe}, which in turn were motivated by Kato's proof of the adiabatic theorem of quantum mechanics for Hamiltonians slowly depending on time. The basic idea is to show that the   transitions between adiabatic subspaces are small even for long times by explicitly evaluating an oscillatory integral. In Proposition~\ref{BOprop} we use the idea of superadiabatic perturbation theory: the adiabatic subspaces are replaced by slightly tilted superadiabatic subspaces. The coupling between the superadiabatic subspaces is so small that the transitions between them can be estimated even for long times by a crude norm-estimate of the integrand. 
The technical reason that forces us to include the weaker statement of Proposition~\ref{TheoBOnoField} is that it can be easily proven without energy cutoffs. This is crucial when replacing the full time-evolution by its adiabatic approximation in the computation (\ref{EstiBig}) in the proof of Theorem~\ref{theoTji}.

 An essential input for adiabatic decoupling and thus for all proofs in this section is the fact that
  the smoothness of $H_{\rm el}(x)$ and the gap assumption imply the smoothness of the map 
  $P_j: \R^{3l}\to\mathcal{L} ( \Hi_{\rm el}   )$, $x\mapsto P_j(x)$.

\begin{lemma}\label{derivPj}
Let $E_j$ be an isolated energy band and $P_j$ the corresponding band projection. Then  $P_j\in C^\infty_{\rm b}\big(\R^{3l},\mathcal{L} ( \Hi_{\rm el}  )\big)$ and $E_j\in C^\infty_{\rm b}\big(\R^{3l}\big)$. 
Moreover, for any $\alpha\in \N_0^{3l}$ and any $n\in\N_0$ one has $\partial^\alpha_x P_j(x) \in \mathcal{L}\big(\Hi_{\rm el}, D(H_{\rm el}^{n})\big)$.
\end{lemma}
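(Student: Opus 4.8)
The plan is to read all the assertions off the Riesz (Dunford) integral for the band projection and then to push regularity onto the domains $D(H_{\rm el}^n)$ by differentiating the eigenvalue equation. Since only $V_{\rm en}(x,\cdot)$ and the scalar $V_{\rm nn}(x)$ carry an $x$-dependence, and both are $C^\infty_{\rm b}$ in $x$ with values in, respectively, the bounded multiplication operators on $\Hi_{\rm el}$ and in $\R$ — which is immediate from the smeared formulas together with the fact that $\hat\varphi$ is a Schwartz function — the map $x\mapsto H_{\rm el}(x)$ is continuous in operator norm, with $\|H_{\rm el}(x)-H_{\rm el}(x')\|\to0$ as $x\to x'$. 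Combined with the uniform gap $g_0:=\inf_x\mathrm{dist}(E_j(x),\sigma(H_{\rm el}(x))\setminus\{E_j(x)\})>0$ and the boundedness of $f_\pm$, standard perturbation theory for isolated parts of the spectrum shows that $E_j$ is continuous and that, for every $x$, the positively oriented circle $\Gamma_x:=\partial B_{g_0/2}(E_j(x))$ encloses $E_j(x)$ and only $E_j(x)$ and keeps distance $g_0/2$ from $\sigma(H_{\rm el}(x))$, so that
\[
P_j(x)=\frac{1}{2\pi\I}\oint_{\Gamma_x}(z-H_{\rm el}(x))^{-1}\,\D z\,,\qquad H_{\rm el}(x)P_j(x)=\frac{1}{2\pi\I}\oint_{\Gamma_x}z\,(z-H_{\rm el}(x))^{-1}\,\D z\,,
\]
the second integral defining a bounded operator.

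Next I would differentiate under the integral. The map $x\mapsto(z-H_{\rm el}(x))^{-1}\in\mathcal{L}(\Hi_{\rm el})$ is $C^\infty$, and $\partial^\alpha_x(z-H_{\rm el}(x))^{-1}$ is a finite sum of terms $(z-H_{\rm el})^{-1}(\partial^{\beta_1}_xH_{\rm el})(z-H_{\rm el})^{-1}\cdots(\partial^{\beta_m}_xH_{\rm el})(z-H_{\rm el})^{-1}$ with $\beta_1+\dots+\beta_m=\alpha$, $|\beta_i|\ge1$, $m\le|\alpha|$; here each $\partial^{\beta_i}_xH_{\rm el}(x)=\partial^{\beta_i}_xV_{\rm en}(x,\cdot)+\partial^{\beta_i}_xV_{\rm nn}(x)$ is multiplication by a function in $C^\infty_{\rm b}(\R^{3r})$ whose $\mathcal{L}(\Hi_{\rm el})$-norm and all of whose $y$-derivatives are bounded uniformly in $x$. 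Along $\Gamma_x$ the resolvent is bounded by $2/g_0$, the contour has length $\pi g_0$ and $|z|\le\|E_j\|_\infty+g_0/2$, so differentiation under the integral is justified and produces uniform bounds, giving $P_j\in C^\infty_{\rm b}(\R^{3l},\mathcal{L}(\Hi_{\rm el}))$. For $E_j$, fix $\psi\in\Hi_{\rm el}$ with $P_j(x_0)\psi\neq0$ and write, near $x_0$, $E_j(x)=\langle\psi,H_{\rm el}(x)P_j(x)\psi\rangle/\langle\psi,P_j(x)\psi\rangle$; by the second Riesz formula and the argument just given, numerator and denominator are smooth scalar functions, the denominator is non-zero near $x_0$, and $E_j$ is bounded by the isolation hypothesis, so $E_j\in C^\infty_{\rm b}(\R^{3l})$.

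For the last assertion, reading off the leftmost resolvent in the differentiated Riesz formula already gives $\partial^\alpha_xP_j(x)\in\mathcal{L}(\Hi_{\rm el},D(H_{\rm el}(x)))$ uniformly in $x$, which settles $n=1$ (and $n=0$ is trivial). I would then induct on $n$ using the identity obtained by applying $\partial^\alpha_x$ to $H_{\rm el}(x)P_j(x)=E_j(x)P_j(x)$,
\[
H_{\rm el}(x)\,\partial^\alpha_xP_j(x)=\sum_{\beta\le\alpha}\binom{\alpha}{\beta}(\partial^\beta_xE_j)\,\partial^{\alpha-\beta}_xP_j-\sum_{0<\gamma\le\alpha}\binom{\alpha}{\gamma}(\partial^\gamma_xH_{\rm el})\,\partial^{\alpha-\gamma}_xP_j\,,
\]
which is legitimate since all $\partial^\delta_xP_j$ occurring already map into $D(H_{\rm el}(x))$. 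Assuming the claim at level $n$ for all multi-indices and applying $H_{\rm el}(x)^n$ to the right-hand side: in the first sum $E_j$ is a scalar, so each term is $(\partial^\beta_xE_j)\,H_{\rm el}^n\partial^{\alpha-\beta}_xP_j$, bounded by the inductive hypothesis; in the second sum $\partial^\gamma_xH_{\rm el}$ is multiplication by a $C^\infty_{\rm b}$ function of $y$, which preserves every $D(H_{\rm el}(x)^n)$ boundedly and uniformly in $x$, so $H_{\rm el}^n(\partial^\gamma_xH_{\rm el})\partial^{\alpha-\gamma}_xP_j$ is bounded too. Hence $H_{\rm el}\,\partial^\alpha_xP_j$ maps $\Hi_{\rm el}$ boundedly into $D(H_{\rm el}(x)^n)$, i.e.\ $\partial^\alpha_xP_j(x)\in\mathcal{L}(\Hi_{\rm el},D(H_{\rm el}(x)^{n+1}))$, which closes the induction. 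Throughout, the spaces $D(H_{\rm el}(x)^n)$ for different $x$ are identified via the uniformly bounded perturbation $V_{\rm en}(x,\cdot)-V_{\rm en}(x_0,\cdot)$ (again multiplication by a $C^\infty_{\rm b}$ function), which gives uniformly equivalent graph norms and makes all the bounds independent of $x$.

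I expect the domain-preservation used in the induction step to be the main technical point: that multiplication by the functions $\partial^\gamma_xV_{\rm en}(x,\cdot)$ maps each $D(H_{\rm el}(x)^n)$ into itself with a bound uniform in $x$. This is a standard commutator estimate — $[-\Delta_y,g]$ is a first-order operator with bounded smooth coefficients, while the only singular potential $V_{\rm ee}$ commutes with any multiplication operator — but it is precisely where the regularization of the nuclear charge distribution enters: for the unsmeared Coulomb interaction $\partial^\gamma_xV_{\rm en}$ would itself be singular and this step would break down, which is why the sharper, smearing-free statements require the twisted pseudodifferential calculus of \cite{MaSo2}. Keeping all the constants uniform in $x$, both here and in the Riesz estimates, is the remaining bookkeeping.
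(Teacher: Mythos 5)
Your proposal follows essentially the same route as the paper: smoothness of $V_{\rm en}$ and $V_{\rm nn}$ (from the smeared nuclear charge) feeds into the Riesz formula to give $P_j\in C^\infty_{\rm b}$, and the domain claim is handled by differentiating the eigenvalue identity and inducting on $n$. The paper extracts $E_j$ via ${\rm tr}(E_jP_j)$ rather than your local quotient, and phrases the induction through $(H_{\rm el}-E_j)^nP_j=0$ rather than $H_{\rm el}P_j=E_jP_j$, but these are cosmetic variants of the same argument, and your remark isolating the domain-preservation of $\partial_x^\gamma V_{\rm en}$ as the genuine technical point is exactly the step the paper also leaves implicit.
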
 

Apart from $P_j(x)$ there will appear numerous operator-valued multiplication operators of this type and in addition also differential operators $\partial^\alpha_x$ with operator-valued coefficients. The following lemma will turn out useful when working with these kind of operators.

\begin{lemma}\label{TechLem1}
Let $A_\alpha: \R^{3l} \to \mathcal{L}( \Hi_{\rm el} )$ be bounded, smooth and with bounded derivatives, i.e.\ $ A_\alpha\in C^\infty_{\rm b}\big(\R^{3l},\mathcal{L} ( \Hi_{\rm el}  )\big)$.
Then $A_\alpha=\int^\oplus A_\alpha(x)\,\D x$ defines a bounded operator on $\Hi_{\rm mol}= L^2(\R^{3l}; \Hi_{\rm el})$ and we call the differential operator
\[
A^\epsi  = \sum_{|\alpha|=0}^{n} A_\alpha \epsi^\alpha \partial^\alpha_x
\]
an admissible operator of order $n$.
\brom
\item 
An admissible operator of order $n$ is   bounded  in  $\mathcal{L} ( D_{\rm mol}^m , \Hi_{\rm mol} )$ for $m= \lceil n/2\rceil$ uniformly in $\epsi>0$. 
\item
If all coefficients $A_\alpha$ of an admissible operator $A^\epsi$ of order $n$ have the property   that $[ (\hepsiop_{\rm mol})^k , A_\alpha ]$ is   an admissible operator of order $2k-1$, then $A^\epsi$ is uniformly bounded in $\mathcal{L} ( D_{\rm mol}^{k+m}, D_{\rm mol}^{k }  )$ for $m= \lceil n/2\rceil$.
\erom
\end{lemma}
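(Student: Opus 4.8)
The final statement to prove is Lemma~\ref{TechLem1}, which has two parts concerning boundedness properties of admissible differential operators.

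\textbf{Plan for Lemma~\ref{TechLem1}(i).}
The plan is to reduce the boundedness of $A^\epsi = \sum_{|\alpha|\le n} A_\alpha\,\epsi^\alpha\partial^\alpha_x$ from $D_{\rm mol}^m$ to $\Hi_{\rm mol}$ with $m=\lceil n/2\rceil$ to the corresponding statement for the free Laplacian. Since each $A_\alpha$ is bounded on $\Hi_{\rm el}$ uniformly in $x$, the fibered operator $A_\alpha=\int^\oplus A_\alpha(x)\,\D x$ is bounded on $\Hi_{\rm mol}$, and it suffices to show $\|\epsi^\alpha\partial_x^\alpha\psi\|_{\Hi_{\rm mol}} \le C\,\|\psi\|_{D_{\rm mol}^m}$ for $|\alpha|\le n$. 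First I would observe that the graph norm of $\hepsiop_{\rm mol} = -\epsi^2\Delta_x + H_{\rm el}(x)$ controls $\|\epsi^2\Delta_x\psi\|$, because $H_{\rm el}(x)$ is relatively bounded by $-\epsi^2\Delta_x$ with a bound that is harmless (the Coulomb-type potentials are infinitesimally form/operator bounded with respect to $-\Delta_y$, and $-\epsi^2\Delta_x$ only helps). Hence $\|\epsi^2\Delta_x\psi\| \le C(\|\hepsiop_{\rm mol}\psi\| + \|\psi\|)$, and iterating, $\|(\epsi^2\Delta_x)^m\psi\| \le C\,\|\psi\|_{D_{\rm mol}^m}$. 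Then a standard interpolation/commutation argument on $L^2(\R^{3l})$ (e.g.\ via the Fourier transform in $x$, since $\partial_x$ and $\Delta_x$ act only on the nuclear variables and commute through the $x$-independent parts, or via the inequality $|\xi^\alpha|\le C(1+|\xi|^2)^{m}$ with $\epsi$-powers tracked carefully so that $\epsi^{|\alpha|}|\xi^\alpha|\le C(1+\epsi^2|\xi|^2)^m$) gives $\|\epsi^\alpha\partial_x^\alpha\psi\| \le C\|(1-\epsi^2\Delta_x)^m\psi\| \le C\,\|\psi\|_{D_{\rm mol}^m}$, uniformly in $\epsi$. The one subtlety is that $H_{\rm el}(x)$ depends on $x$ so one cannot literally Fourier transform; but one can work with $\langle\epsi\nabla_x\rangle$-weights and use that $[-\epsi^2\Delta_x, H_{\rm el}(x)]$ involves $\epsi^2\nabla_x H_{\rm el}(x)\cdot\nabla_x$, which by Lemma~\ref{derivPj}-type smoothness of the potentials is again controlled, closing the estimate.

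\textbf{Plan for Lemma~\ref{TechLem1}(ii).}
Here the goal is to show that if additionally $[(\hepsiop_{\rm mol})^k, A_\alpha]$ is admissible of order $2k-1$ for every coefficient, then $A^\epsi$ is bounded $D_{\rm mol}^{k+m}\to D_{\rm mol}^k$. The plan is to estimate $\|(\hepsiop_{\rm mol})^k A^\epsi\psi\|$ by commuting $(\hepsiop_{\rm mol})^k$ to the right:
\[
(\hepsiop_{\rm mol})^k A^\epsi = A^\epsi (\hepsiop_{\rm mol})^k + [(\hepsiop_{\rm mol})^k, A^\epsi].
\]
The first term is bounded by part (i) applied with $\psi$ replaced by $(\hepsiop_{\rm mol})^k\psi$: $\|A^\epsi(\hepsiop_{\rm mol})^k\psi\| \le C\|(\hepsiop_{\rm mol})^k\psi\|_{D_{\rm mol}^m} \le C\|\psi\|_{D_{\rm mol}^{k+m}}$. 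For the commutator term one writes $[(\hepsiop_{\rm mol})^k, A^\epsi] = \sum_{|\alpha|\le n} [(\hepsiop_{\rm mol})^k, A_\alpha]\,\epsi^\alpha\partial_x^\alpha + \sum_{|\alpha|\le n} A_\alpha\,[(\hepsiop_{\rm mol})^k, \epsi^\alpha\partial_x^\alpha]$. The first sum: by hypothesis each $[(\hepsiop_{\rm mol})^k, A_\alpha]$ is admissible of order $2k-1$, so composed with $\epsi^\alpha\partial_x^\alpha$ (order $\le n$) it is admissible of order $2k-1+n$, hence bounded $D_{\rm mol}^{\lceil(2k-1+n)/2\rceil}\to\Hi_{\rm mol}$ by part (i); since $\lceil(2k-1+n)/2\rceil \le k+m$, this is controlled by $\|\psi\|_{D_{\rm mol}^{k+m}}$. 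The second sum involves $[(\hepsiop_{\rm mol})^k, \partial_x^\alpha]$; since $\hepsiop_{\rm mol}=-\epsi^2\Delta_x+H_{\rm el}(x)$ and $\partial_x^\alpha$ commutes with $-\epsi^2\Delta_x$, only $[H_{\rm el}(x),\partial_x^\alpha]$ survives, which is a differential operator in $x$ of order $|\alpha|-1$ with coefficients built from $\partial_x^\gamma H_{\rm el}(x)$ — smooth and controlled by powers of $\hepsiop_{\rm mol}$ exactly as in part (i). Expanding $(\hepsiop_{\rm mol})^k$ via Leibniz into $k$ factors, one again obtains admissible operators whose orders stay within the budget $2k+n-1$, and part (i) finishes it.

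\textbf{Main obstacle.}
The main technical obstacle is the bookkeeping in part (ii): one must verify that all commutators $[(\hepsiop_{\rm mol})^k, \cdot]$, when fully expanded, produce only admissible operators whose order does not exceed $2k+n-1$, so that after dividing by two and rounding up the domain requirement does not exceed $D_{\rm mol}^{k+m}$. Each application of $\ad_{\hepsiop_{\rm mol}}$ either differentiates a coefficient (keeping order) or, via the $-\epsi^2\Delta_x$ piece acting on a $\partial_x^\alpha$, raises the differential order by at most $1$ while supplying an $\epsi^2$ — crucially, the $\epsi$-powers always match the derivative count, which is what makes the bounds $\epsi$-uniform. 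Getting this induction airtight, together with the fact that $H_{\rm el}(x)$ contains genuinely unbounded (Coulomb) terms so that "admissible" must be interpreted with the potentials treated as relatively bounded perturbations of $-\Delta_y$, is where the real care is needed; everything else is routine once part (i) is in hand.
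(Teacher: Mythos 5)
Your overall strategy — reduce everything to boundedness of $\epsi^\alpha\partial_x^\alpha$ by powers of $(-\epsi^2\Delta_x)$ via elliptic estimates, then trade powers of $(-\epsi^2\Delta_x)$ for powers of $\hepsiop_{\rm mol}$, and in (ii) commute $(\hepsiop_{\rm mol})^k$ through $A^\epsi$ and count orders — is exactly the paper's. Part (ii) in particular matches the paper's computation line for line.

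However, your justification for the key base estimate $\|\epsi^2\Delta_x\psi\|\le C(\|\hepsiop_{\rm mol}\psi\|+\|\psi\|)$ is wrong. You claim that ``$H_{\rm el}(x)$ is relatively bounded by $-\epsi^2\Delta_x$,'' but this is false: $H_{\rm el}(x)$ contains the electronic kinetic energy $-\Delta_y$, which acts on a different set of variables and is completely uncontrolled by the nuclear kinetic energy $-\epsi^2\Delta_x$. (Even if the claim were true, you would also need the relative bound $<1$ to close the estimate, which you don't address.) The correct argument, and the one the paper uses, is a positivity-and-commutation argument: write $\hepsiop_{\rm mol} = -\epsi^2\Delta_x\otimes 1 + 1\otimes H_{\rm e,0} + V_{\rm en}(x,y)$ with $H_{\rm e,0}:=-\Delta_y+V_{\rm ee}\ge 0$, and note that $-\epsi^2\Delta_x\otimes 1$ and $1\otimes H_{\rm e,0}$ are both nonnegative and commute, so $\|(-\epsi^2\Delta_x+H_{\rm e,0})\psi\|\ge\|-\epsi^2\Delta_x\psi\|$; since $V_{\rm en}$ is bounded (this is where the smearing of the nuclear charge is essential), the desired estimate follows. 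Your subsequent remark about Fourier transforming in $x$, followed by the acknowledgment that one cannot literally do this, and the appeal to controlling $[-\epsi^2\Delta_x,H_{\rm el}]$, is a detour that neither fixes nor replaces the missing base estimate — the commutator is needed in the inductive step, not in the base case. Once the base case is established correctly, the iteration you sketch (commute $(\epsi^2\Delta_x)^{m-1}$ through $\hepsiop_{\rm mol}$, absorb the commutators using the induction hypothesis) is indeed what the paper does.
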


As a first simple application we note the following corollary.
\begin{corollary}\label{PCOR}
 $\partial^\beta_x P_j$ is uniformly bounded in $\mathcal{L}(D_{\rm mol}^n)$  for all $n\in\N_0$ and $\beta\in \N_0^{3l}$.
\end{corollary}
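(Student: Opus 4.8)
The plan is to apply Lemma~\ref{TechLem1}(ii) to the admissible operator $A^\epsi := \partial^\beta_x P_j$, which is of order $0$ with single coefficient $A_0 = \partial^\beta_x P_j$. By Lemma~\ref{derivPj} we know that $\partial^\beta_x P_j \in C^\infty_{\rm b}\big(\R^{3l},\mathcal{L}(\Hi_{\rm el})\big)$, so it is genuinely admissible of order $0$, and part (i) of Lemma~\ref{TechLem1} already gives boundedness in $\mathcal{L}(\Hi_{\rm mol})$. To upgrade this to $\mathcal{L}(D_{\rm mol}^n)$ for every $n$, by part (ii) it suffices to verify the hypothesis that $[(\hepsiop_{\rm mol})^k, \partial^\beta_x P_j]$ is an admissible operator of order $2k-1$ for each $k\in\N$.

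First I would compute $[\hepsiop_{\rm mol}, \partial^\beta_x P_j]$. Since $\hepsiop_{\rm mol} = -\epsi^2\Delta_x + H_{\rm el}(x)$, the commutator splits into two pieces. The commutator with $H_{\rm el}(x)$ is a multiplication operator $[H_{\rm el}(x),\partial^\beta_x P_j(x)]$; using $H_{\rm el}(x)P_j(x)=E_j(x)P_j(x)$ and differentiating repeatedly, one sees this is again of the form $\sum_\gamma B_\gamma(x)$ with each $B_\gamma\in C^\infty_{\rm b}(\R^{3l},\mathcal{L}(\Hi_{\rm el}))$ (the Leibniz rule produces terms involving $\partial^\alpha_x H_{\rm el}$ applied to $P_j$, which are controlled by the last statement of Lemma~\ref{derivPj}, namely $\partial^\alpha_x P_j(x)\in\mathcal{L}(\Hi_{\rm el},D(H_{\rm el}^n))$). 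Hence this piece is admissible of order $0\leq 2\cdot 1-1$. The commutator with $-\epsi^2\Delta_x$ is $-\epsi^2[\Delta_x,\partial^\beta_x P_j]$, and since $\partial^\beta_x P_j$ is an $x$-multiplication operator, this equals $-\epsi^2\big((\Delta_x\partial^\beta_x P_j) + 2(\nabla_x\partial^\beta_x P_j)\cdot\nabla_x\big)$, which is a first-order differential operator with $\epsi^2$ in front of a first-order term and $\epsi^2$ in front of a zeroth-order term; writing $\epsi^2\nabla_x = \epsi\cdot\epsi\nabla_x$ this is admissible of order $1 = 2\cdot 1-1$, with an extra factor $\epsi$ to spare. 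So the case $k=1$ holds.

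The general case follows by induction on $k$: writing $(\hepsiop_{\rm mol})^{k} = \hepsiop_{\rm mol}\,(\hepsiop_{\rm mol})^{k-1}$ and using the derivation identity $[AB,C] = A[B,C] + [A,C]B$, one expresses $[(\hepsiop_{\rm mol})^k, A_0]$ in terms of $[(\hepsiop_{\rm mol})^{k-1},A_0]$ (admissible of order $2k-3$ by induction), commutators of the latter with $\hepsiop_{\rm mol}$, and products with $\hepsiop_{\rm mol}$; each application of $\mathrm{ad}_{\hepsiop_{\rm mol}}$ raises the admissible order by at most $2$ (one $\epsi\partial_x$ from the $\Delta_x$ part can pair with an existing coefficient, or produce a new derivative, exactly as in the $k=1$ computation), and multiplying an admissible operator of order $m$ by $\hepsiop_{\rm mol}$ on either side produces one of order $m+2$ — but when this happens one keeps it as a composition of an admissible operator with $\hepsiop_{\rm mol}$, which is bounded $D_{\rm mol}^{j+2}\to D_{\rm mol}^{j}$, so the bookkeeping of Lemma~\ref{TechLem1}(ii) still applies. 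Tracking the orders through the induction gives exactly order $2k-1$. The main obstacle, and the only place requiring genuine care, is this combinatorial bookkeeping: one must check that the repeated commutators never exceed order $2k-1$ and that all coefficients produced remain in $C^\infty_{\rm b}(\R^{3l},\mathcal{L}(\Hi_{\rm el}))$ with the mapping property into $D(H_{\rm el}^n)$ — both of which reduce, via the Leibniz rule, to Lemma~\ref{derivPj}. Once the hypothesis of Lemma~\ref{TechLem1}(ii) is verified, the conclusion $\partial^\beta_x P_j\in\mathcal{L}(D_{\rm mol}^{k})$ uniformly in $\epsi$ for all $k$ is immediate, which is the assertion of the corollary.
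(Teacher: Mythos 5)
Your proposal matches the paper's proof: both invoke Lemma~\ref{derivPj} to see that $\partial^\beta_x P_j$ is an admissible operator of order $0$ whose coefficient maps into $D(H_{\rm el}^n)$, and then apply Lemma~\ref{TechLem1}(ii) after checking by direct computation that $[(\hepsiop_{\rm mol})^k, \partial^\beta_x P_j]$ is admissible of order $2k-1$. The paper records this verification in one line, so your expanded $k=1$ computation and inductive bookkeeping are just a more detailed rendering of the same argument.
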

\begin{proof}
 According to Lemma~\ref{derivPj}, $\partial^\beta_x P_j$ is an admissible  operator    of order $0$ for any $\beta\in\N_0^{3l}$.  Statement (ii)  of Lemma~\ref{TechLem1}  implies the   claim of the corollary once we show    that $[ (\hepsiop_{\rm mol})^n, \partial^\beta_x P_j] $  are admissible operators of order $2n-1$ for any $n\in \N$. This in turn follows from direct computation and  the fact that according to Lemma~\ref{derivPj} we have $\partial^\alpha_x P_j(x) \in \mathcal{L}\big(\Hi_{\rm el}, D(H_{\rm el}^{n})\big)$.
\end{proof}

\subsection{Proof of Proposition~\ref{TheoBOnoField}}\label{proofBOprop}

Since $\epsi\partial_{x_i}$ has norm one in $\mathcal{L}(D_{\rm mol}^{n+1},D_{\rm mol}^n)$,   Corollary~\ref{PCOR} implies that the commutator
\begin{equation}\label{commPj}
  [\hepsiop_{\rm mol},P_j] \;=\; [-\epsi^2\Delta_x,P_j] \;=\;  -\epsi^2(\Delta_x P_j)\,-\,2\epsi\nabla_x P_j\cdot\epsi\nabla_x \;\;=\;\;\Or(\epsi)
 \end{equation}
is of order $ \epsi $ in $\mathcal{L}(D_{\rm mol}^{n+1},D_{\rm mol}^n)$ for all  $n\in\N_0$. Set $P_j^\perp(x):=1-P_j(x)$. Since 
\[
 \hepsiop_{\rm mol}\,-\,\hepsiop_{j } \;=\; P_j\hepsiop_{\rm mol}P_j^\perp\,+\,P_j^\perp\hepsiop_{\rm mol} P_j \;=\; (1-2P_j) [\hepsiop_{\rm mol},P_j],
\]
the self-adjointness of $(\hepsiop_{j },D_{\rm mol})$ for $\epsi$ small enough follows from  Lemma \ref{ChiLemma}.

We notice that $\|\E^{-\I \frac{t}{\epsi}\hepsiop_{\rm mol}}\|_{\mathcal{L}(D_{\rm mol} )}=1$ and $\|\E^{-\I \frac{t}{\epsi}\hepsiop_{j }}\|_{\mathcal{L}(D(\hepsiop_{j }))}=1$ for all $t\in\R$, when $D(\hepsiop_{j })$ is equipped with the graph norm. Then the equivalence of the graph norms due to Lemma \ref{ChiLemma} implies that $\|\E^{-\I \frac{t}{\epsi}\hepsiop_{j }}\|_{\mathcal{L}(D_{\rm mol})}$ is bounded independently of~$\epsi$.
 
Now set $R_j (x):=P_j^\perp(x)(H_{\rm el}(x)-E_j(x))^{-1}P_j^\perp(x)$ and
\begin{eqnarray*}
K_j(x)&:=& R_j (x)  \hepsiop_{\rm mol} P_j(x)\,+\,P_j(x) \hepsiop_{\rm mol} R_j (x)
\\ &=& R_j (x) [\hepsiop_{\rm mol},P_j(x)]P_j(x)\,-\,P_j(x)[\hepsiop_{\rm mol},P_j(x)]R_j (x)\,.
\end{eqnarray*}
Due to (\ref{commPj}) we have that $[ \hepsiop_{\rm mol},P_j]K_j$, $[E_j,K_j]$, and $[-\epsi^2\Delta_x,K_j]$ are of order $\epsi^2$ in $\mathcal{L}(D_{\rm mol}^{n+1},D_{\rm mol}^n)$. 
Thus it holds that
\begin{equation}\label{COMMKJ}
[\hepsiop_{\rm mol}, K_j] = [H_{\rm el}, K_j] +\Or(\epsi^2) = P^\perp_j \hepsiop_{\rm mol} P_j + P_j \hepsiop_{\rm mol} P_j^\perp +\Or(\epsi^2)
\end{equation}
and therefore
\begin{eqnarray*}
 \lefteqn{\frac{\D}{\D s}\left(\E^{-\I \frac{t-s}{\epsi} \hepsiop_{j }}\,K_j\,\E^{-\I \frac{s}{\epsi} \hepsiop_{\rm mol}}\right) }\\
&=& \tfrac{\I}{\epsi}\,\E^{-\I \frac{t-s}{\epsi} \hepsiop_{j }}\Big(\hepsiop_{j }\,K_j\,-\,K_j\,\hepsiop_{\rm mol}\Big)\E^{-\I \frac{s}{\epsi} \hepsiop_{\rm mol}}\\
&=& \tfrac{\I}{\epsi}\,\E^{-\I \frac{t-s}{\epsi} \hepsiop_{j }}\Big((\hepsiop_{\rm mol}+(2P_j-1)[\hepsiop_{\rm mol},P_j])\,K_j\,-\,K_j\,\hepsiop_{\rm mol}\Big)\E^{-\I \frac{s}{\epsi} \hepsiop_{\rm mol}}\\
&=& \tfrac{\I}{\epsi}\,\E^{-\I \frac{t-s}{\epsi} \hepsiop_{j }}\Big([\hepsiop_{\rm mol},K_j]\Big)\E^{-\I \frac{s}{\epsi} \hepsiop_{\rm mol}}\,+\,\Or(\epsi)\\
&=& \tfrac{\I}{\epsi}\,\E^{-\I \frac{t-s}{\epsi} \hepsiop_{j }}\Big(P^\perp_j \hepsiop_{\rm mol} P_j + P_j \hepsiop_{\rm mol} P_j^\perp\Big)\E^{-\I \frac{s}{\epsi} \hepsiop_{\rm mol}}\,+\,\Or(\epsi)
\end{eqnarray*}
in $\mathcal{L}(D_{\rm mol}^{n+1},D_{\rm mol}^n)$.  Hence the difference in the unitary groups is
\begin{eqnarray*} 
  \E^{-\I \frac{t}{\epsi} \hepsiop_{j }} -\E^{-\I \frac{t}{\epsi} \hepsiop_{\rm mol}}  &=&  \tfrac{\I}{\epsi} \int_0^t \E^{ -\I \frac{t-s}{\epsi} \hepsiop_{j }} \left( \hepsiop_{\rm mol}-\hepsiop_{j } \right)   \E^{ -\I \frac{s}{\epsi} \hepsiop_{\rm mol}}\,\D s
\\
&=& \tfrac{\I}{\epsi} \int_0^t \E^{ -\I \frac{t-s}{\epsi} \hepsiop_{j }} \left(P^\perp_j \hepsiop_{\rm mol} P_j + P_j \hepsiop_{\rm mol} P_j^\perp \right)    \,\E^{ -\I \frac{s}{\epsi} \hepsiop_{\rm mol}}\,\D s\\
&=& \int_0^t\frac{\D}{\D s}\left(\E^{-\I \frac{t-s}{\epsi} \hepsiop_{j }}\,K_j\,\E^{-\I \frac{s}{\epsi} \hepsiop_{\rm mol}}\right)\,\D s\,+\,\Or(\epsi|t|)\\
&=& K_j\,\E^{-\I \frac{t}{\epsi} \hepsiop_{\rm mol}}\,-\,\E^{-\I \frac{t}{\epsi} \hepsiop_{j }}\,K_j\,+\,\Or(\epsi|t|).
\end{eqnarray*}
Since $K_j$ is of order $\epsi$ in $\mathcal{L}(D_{\rm mol}^{n+1},D_{\rm mol}^n)$, we obtain (\ref{leadingBO}).

\subsection{Proof of Proposition~\ref{BOprop}}\label{BOpropproof}

This construction has been done in different places using different techniques. For the most general treatment of the Born-Oppenheimer approximation allowing even nuclei that are point charges we refer to the recent work of Martinez and Sordoni \cite{MaSo2} based on a twisted pseudo-differential calculus. Since the precise statements we need for treating the coupling to the field do not follow from their results, we give a more elementary proof for the case of smeared nuclei here. It is partly  an adaption of the arguments used in~\cite{WaTe} in a different context.

For better readability we now drop the index $j$ and write  $P_0 := P_j$ and $E_* :=E_j$. To have some margin to play with we 
use first the characteristic function ${\bf 1}_{E+1}$ on $(-\infty, E+1]$ and recall that with $e$ denoting the infimum of the spectrum of $\hepsiop_{\rm mol}$ we have that 
${\bf 1}_{E+1}(\hepsiop_{\rm mol}) = {\bf 1}_{[e,E+1]}(\hepsiop_{\rm mol})$.

Starting from the orthogonal projection $P_0$ we want to construct a self-adjoint operator  $P^\epsi\in\mathcal{L}(\Hi_{\rm mol})$ with
\[ 
P^\epsi P^\epsi = P^\epsi \quad\mbox{ and }\quad \left[ \hepsiop_{\rm mol} , P^\epsi \right] {\bf 1}_{E+1}(\hepsiop_{\rm mol}) = \Or(\epsi^3)\,.
\]
The first statement just means that $P^\epsi$ is a projection. 
 The basic
idea for constructing $P^\epsi$ is to determine first the coefficients in an asymptotic expansion
\begin{equation}\label{expansion}
P^\epsi = P_0 + \epsi P_1 +\epsi^2 P_2 + \Or(\epsi^3)\,,
\end{equation}
where we recall that  according to (\ref{commPj}) 
the commutator $[\hepsiop_{\rm mol}, P_0]$ with the choice $P_0=P_j$ is of order $\epsi$ as an operator in $\mathcal{L}(D_{\rm mol}^{n+1},D_{\rm mol}^n)$.
As shown in many instances, the requirements that $P^{(2)} := P_0 + \epsi P_1 +\epsi^2 P_2$ satisfies
\[
P^{(2)}P^{(2)}-P^{(2)} = \Or(\epsi^3) 
\quad\mbox{ and }\quad 
[ P^{(2)} , \hepsiop_{\rm mol} ] = \Or(\epsi^3) 
\]
fix  $P^{(2)}$ uniquely modulo terms of order $\epsi^3$. 
We will not repeat the construction here, but only give the result:
Let 
\[
[ P_0 ] := \tfrac{1}{\epsi}  \left[\hepsiop_{\rm mol} , P_0\right] \,,
\]
then $[P_0]$ is, according to  (\ref{commPj}),
  a uniformly bounded operator in $\mathcal{L}(D_{\rm mol}^{n+1},D_{\rm mol}^n)$.
We put 
\[
S_1 := P_0 [ P_0] R \,,
\]
 with the reduced resolvent $R(x)  =  P_0(x)^\perp ( H_{\rm el}(x) - E_*(x) )^{-1}  P_0(x)^\perp$. 
Since $\partial_x^\alpha H_{\rm el}(x)$ and thus also $\partial_x^\alpha R (x)$  are bounded operator on~$\Hi_{\rm el}$ for any $\alpha\in \N^{3l}$,
 $S_1$ is an admissible operator of order one in the sense of Lemma~\ref{TechLem1}. This is where smearing out the nuclear charge distribution is essential. By the same reasoning as in (\ref{COMMKJ}) this choice makes 
 \[
 [P_0] + [H_{\rm el}, S_1+S_1^* ] = \Or(\epsi)\,.
 \]
Now let 
\[
P_1 := S_1 + S_1^* \quad \mbox{ and }\quad P^{(1)} := P_0 + \epsi P_1\,,
\]
then 
\begin{eqnarray*}
P^{(1)}P^{(1)} - P^{(1)} &=& \epsi ( P_0 P_1 + P_1 P_0  - P_1 ) + \epsi^2 P_1P_1\\
&=& \epsi ( S_1 + S_1^* - P_1 ) + \epsi^2 (S_1S_1^* + S_1^* S_1)\;=\; \epsi^2 (S_1S_1^* + S_1^* S_1)
\end{eqnarray*}
and 
\[
[   \hepsiop_{\rm mol}, P^{(1)} ] \;=\; \epsi [P_0]  + \epsi [ H_{\rm el} , S_1 +S_1^*] + \epsi^2 [ -\epsi\Delta, P_1] =\Or(\epsi^2)
\]
 in $\mathcal{L}(D_{\rm mol}^{n+1},D_{\rm mol}^n)$.
 Now we simply iterate this construction:  first we modify $P^{(1)}$ in order to make it a projection to higher order by putting
 \[
 \tilde P^{(1)} := P^{(1)} + \epsi^2 (S_1^*   S_1 - S_1S_1^*)\,.
 \]
 This gives 
 \[
 \tilde P^{(1)} \tilde P^{(1)} - \tilde P^{(1)} =   P^{(1)}   P^{(1)} - P^{(1)}  - 2 \epsi^2    S_1S_1^*     - \epsi^2  (S_1^*   S_1 - S_1S_1^*) + \Or(\epsi^3) = \Or(\epsi^3)\,.
 \]
 Then we put
 \[
 [\tilde P^{(1)}] :={\textstyle \frac{1}{\epsi^2} }  [   \hepsiop_{\rm mol}, \tilde P^{(1)} ]
 \]
 and
 \[
 S_2 = P_0 [\tilde P^{(1)}] R \,,
 \]
 which makes 
 \[
 [\tilde P^{(1)}] + [H_{\rm el}, S_2+S_2^* ] = \Or(\epsi)\,.
 \]
 Defining 
 \[
 P_2 = S_2+S_2^* +S_1^*   S_1 - S_1S_1^*
 \]
 we find that 
 \[
[   \hepsiop_{\rm mol}, P^{(2)} ] \;=\; \epsi^2[\tilde P^{(1)}]  + \epsi^2 [ H_{\rm el} , S_2+S_2^*  ]   + \epsi^3 [ -\epsi\Delta, S_2+S_2^*] \;=:\; \epsi^3 R^\epsi_1    \,,
\]
  and still
\[
P^{(2)}P^{(2)} - P^{(2)} \;=\; \tilde P^{(1)}\tilde P^{(1)} + \epsi^2(  S_2 + S_2^*) - (\tilde P^{(1)}  + \epsi^2(  S_2 + S_2^*)) +\Or(\epsi^3) \;=:\;   \epsi^3 R^\epsi_2    \,.
\]
 Note that   $R^\epsi_1$  is an admissible operator of order three and $R^\epsi_2$ is an admissible operator of order four. 
The following lemma shows that all the operators appearing in the construction can be bounded by appropriate powers of $\hepsiop_{\rm mol}$.

\begin{lemma}\label{TechLem} The operators $P_0$, $P_1$, $P_2$, $R^\epsi_1$  and $R^\epsi_2$ are admissible  operators   of order $0$, $1$, $2$, $3$, and $4$ respectively.
Their coefficients have commutators with $(\hepsiop_{\rm mol})^k$ that are admissible operators of order $2k-1$ for any $k\in \N$. Thus they are uniformly bounded operators from $D_{\rm mol}^{n+m}$ to $D_{\rm mol}^{n}$ for any $n\in\N_0$ and $m=0$,   $1$, $2$, $3$, and $4$ respectively.
 \end{lemma}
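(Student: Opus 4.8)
The plan is to reduce the statement to part (ii) of Lemma~\ref{TechLem1}. Call a multiplication operator $A=\int^\oplus A(x)\,\D x$ \emph{regular} if $A\in C^\infty_{\rm b}\big(\R^{3l},\mathcal{L}(\Hi_{\rm el})\big)$, if $\partial^\alpha_x A(x)\in\mathcal{L}\big(\Hi_{\rm el},D(H_{\rm el}^n)\big)$ with $x$-uniform norm for all $\alpha\in\N_0^{3l}$ and $n\in\N_0$, and if $[H_{\rm el},\partial^\alpha_x A]$ is again of this type for all $\alpha$. I would first note that a regular multiplication operator $A$ automatically satisfies that $[(\hepsiop_{\rm mol})^k,A]$ is an admissible operator of order $2k-1$ for all $k\in\N$: for $k=1$ this is $[\hepsiop_{\rm mol},A]=[-\epsi^2\Delta_x,A]+[H_{\rm el},A]$, admissible of order $1$ by the direct computation $[-\epsi^2\Delta_x,A]=-\epsi^2(\Delta_x A)-2\epsi(\nabla_x A)\cdot\epsi\nabla_x$ for the first term and regularity of $A$ for the second; the inductive step uses the identity $[(\hepsiop_{\rm mol})^{k+1},A]=(\hepsiop_{\rm mol})^k[\hepsiop_{\rm mol},A]+[(\hepsiop_{\rm mol})^k,A]\,\hepsiop_{\rm mol}$, the Leibniz rule, and the fact that composing an admissible operator with regular coefficients by $\hepsiop_{\rm mol}$ on either side (the right-hand composition after commuting $\hepsiop_{\rm mol}$ to the left) keeps the coefficients regular while raising the order by at most two — the $H_{\rm el}$-contribution adding nothing to the order, since $H_{\rm el}B$ is bounded and regular as soon as $B$ maps into $D(H_{\rm el})$. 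Consequently, any admissible operator whose coefficients are regular is, by Lemma~\ref{TechLem1}(ii), uniformly bounded $D_{\rm mol}^{n+m}\to D_{\rm mol}^n$ for all $n$, where $m$ is its order.

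Next I would check that the data of the construction are regular. For $P_0=P_j$ and $E_*=E_j$ this is Lemma~\ref{derivPj} together with $[H_{\rm el},P_j]=0$ (spectral projection) and $[H_{\rm el},E_j]=0$: differentiating $[H_{\rm el},P_j]=0$ and applying the Leibniz rule writes $[H_{\rm el},\partial^\alpha_x P_j]$ as a finite sum of products of factors $\partial^\beta_x P_j$ and $\partial^\gamma_x V_{\rm en}$ with $|\gamma|\ge 1$, which are again regular because — thanks to the smeared nuclear charge — $\partial^\gamma_x V_{\rm en}(x,\cdot)$ is a bounded multiplication operator with bounded $y$-derivatives of all orders. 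For the reduced resolvent $R(x)=P_0^\perp(x)(H_{\rm el}(x)-E_*(x))^{-1}P_0^\perp(x)$ one has $[H_{\rm el},R]=0$ as well, while the mapping property $\partial^\alpha_x R(x)\in\mathcal{L}\big(\Hi_{\rm el},D(H_{\rm el}^n)\big)$ follows from differentiating the resolvent identity and the relation $H_{\rm el}R=P_0^\perp+E_*R$ and using the $H_{\rm el}$-boundedness of $\partial^\gamma_x H_{\rm el}=\partial^\gamma_x V_{\rm en}$; the commutator $[H_{\rm el},\partial^\alpha_x R]$ is treated exactly as for $P_j$. Since the regular multiplication operators form an algebra closed under $\partial^\alpha_x$ and under $[H_{\rm el},\cdot]$, every coefficient obtained from $P_j$, $R$, $E_j$ by products and $x$-derivatives is regular.

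It remains to run the recursion and keep track of orders, using that the class of admissible operators with regular coefficients is closed under sums and products (orders add), adjoints, and the operation $\epsi^{-k}[\hepsiop_{\rm mol},\cdot]$ applied to such an operator $B$ with $[\hepsiop_{\rm mol},B]=\Or(\epsi^k)$, which yields an operator of order at most one higher (because $[-\epsi^2\Delta_x,\cdot]$ raises the differential order by one with a spare factor of $\epsi$, while $[H_{\rm el},\cdot]$ preserves both regularity and order). Then: $P_0$ has order $0$; $[P_0]=\epsi^{-1}[\hepsiop_{\rm mol},P_0]=-\epsi(\Delta_x P_0)-2(\nabla_x P_0)\cdot\epsi\nabla_x$ has order $1$, hence so do $S_1=P_0[P_0]R$ and $P_1=S_1+S_1^*$; the correction $\epsi^2(S_1^*S_1-S_1S_1^*)$ is $\epsi^2$ times an operator of order $2$, so $\tilde P^{(1)}$ has order $\le 2$, its commutator $[\tilde P^{(1)}]=\epsi^{-2}[\hepsiop_{\rm mol},\tilde P^{(1)}]$ is admissible with regular coefficients and its leading part defines $S_2$, whence $P_2=S_2+S_2^*+S_1^*S_1-S_1S_1^*$ has order $2$; finally $R^\epsi_1=\epsi^{-3}[\hepsiop_{\rm mol},P^{(2)}]$ has order $3$ and $R^\epsi_2=\epsi^{-3}(P^{(2)}P^{(2)}-P^{(2)})$ has order $4$. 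The leading-order cancellations $[\hepsiop_{\rm mol},P^{(1)}]=\Or(\epsi^2)$ and $[\hepsiop_{\rm mol},P^{(2)}]=\Or(\epsi^3)$, built into the construction, are precisely what keeps the $\epsi^{-k}$ factors harmless. Applying the observation of the first paragraph to $P_0,P_1,P_2,R^\epsi_1,R^\epsi_2$ then gives the asserted bounds with $m=0,1,2,3,4$.

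The genuinely non-formal point — and the one I would expect to cost the most work — is the control of $[H_{\rm el},\cdot]$ on the reduced resolvent and its $x$-derivatives, i.e.\ commuting the \emph{unbounded} electronic Hamiltonian past these multiplication operators while staying in $\mathcal{L}\big(\Hi_{\rm el},D(H_{\rm el}^n)\big)$ uniformly in $x$. This is exactly where bare Coulomb singularities would break the argument and where the regularization of the nuclear charge distribution is used, through the global boundedness of $\partial^\gamma_x V_{\rm en}$ and all its $y$-derivatives; granting this, the rest is a lengthy but mechanical matter of the Leibniz rule and of balancing powers of $\epsi$ against numbers of $x$-derivatives.
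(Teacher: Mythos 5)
Your approach is essentially the paper's: reduce to Lemma~\ref{TechLem1}(ii), trace through the construction to see that every coefficient is built from derivatives of $P_0$ and $R$, and use Lemma~\ref{derivPj} to control the commutators with powers of $\hepsiop_{\rm mol}$. The paper's own proof is much more compressed (essentially one paragraph), while yours sets up an explicit ``regular'' class and verifies its closure properties, which is a cleaner scaffolding but not a different idea.

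However, there is one genuine error in your verification of the base cases. You claim that the reduced resolvent satisfies $\partial^\alpha_x R(x)\in\mathcal{L}\bigl(\Hi_{\rm el},D(H_{\rm el}^n)\bigr)$ for all $n\in\N_0$, citing $H_{\rm el}R=P_0^\perp+E_*R$. But that identity only gains a single power: $H_{\rm el}^2R\phi=H_{\rm el}P_0^\perp\phi+E_*(P_0^\perp+E_*R)\phi$, and $H_{\rm el}P_0^\perp\phi\notin\Hi_{\rm el}$ unless $\phi\in D(H_{\rm el})$. So $R\notin\mathcal{L}(\Hi_{\rm el},D(H_{\rm el}^2))$, and $R$ is not ``regular'' in the sense you defined. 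This is precisely why the paper is careful to assert only $\partial^\beta_x R(x)\in\mathcal{L}\bigl(D(H_{\rm el}^n)\bigr)$, i.e.\ $R$ \emph{preserves} each $D(H_{\rm el}^n)$ but does not gain regularity, and then points out that every coefficient in the construction contains at least one factor $\partial^\beta_x P_0\in\mathcal{L}\bigl(\Hi_{\rm el},D(H_{\rm el}^n)\bigr)$, which supplies the needed lift. The fix is minor: weaken the defining condition of ``regular'' for $R$-type factors to $\mathcal{L}(D(H_{\rm el}^n))$-boundedness, and note that in every monomial $A_\alpha$ the lift into $D(H_{\rm el}^n)$ is furnished by a $\partial^\beta_x P_0$ factor. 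With that correction your argument goes through and coincides with the paper's.
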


In order to make sense of (\ref{expansion}) as a bounded operator in $\mathcal{L}(\Hi_{\rm mol})$ and to get uniform bounds on $[\hepsiop_{\rm mol}, P^\epsi]$  we thus  need to cut off 
large  energies. To do so we fix $E<\infty$ and choose $\chi_{E }\in C_0^\infty(\R,[0,1])$ 
such that $\chi_{E }|_{[e-1,E+1]}=1$ and supp$\chi_{E }\subset (e-2,E+2)$. Then we define
\[
\tilde P^\epsi := \epsi P_1 + \epsi^2 P_2
\]
and 
\[
P^\epsi_{\chi_E} := P_0 + \epsi P_1 + \epsi^2 P_2  - (1-\chi_E(\hepsiop_{\rm mol}))\, \tilde P^\epsi \, (1-\chi_E(\hepsiop_{\rm mol}))\,,
\]
i.e.\ we cut off the corrections to $P_0$ at high energies. To see that $P^\epsi_{\chi_E} $ is indeed a bounded operator in $\mathcal{L}( D _{\rm mol} ^n  )$ for any $n\in\N_0$, note that
\[
P^\epsi_{\chi_E}  = P_0  + \tilde P^\epsi  \chi_E(\hepsiop_{\rm mol}) + \chi_E(\hepsiop_{\rm mol}) \tilde P^\epsi (1- \chi_E(\hepsiop_{\rm mol}))
\]
and that $P_0$ and $\tilde P^\epsi  \chi_E(\hepsiop_{\rm mol})$ are bounded independently of $\epsi$ in $\mathcal{L}( D _{\rm mol} ^n  )$  by Lemma~\ref{TechLem} and the fact that $\chi_E(\hepsiop_{\rm mol})\in\mathcal{L}(\Hi_{\rm mol}, D _{\rm mol} ^n  )$ with norm bounded independently of $\epsi$.
In particular we have also 
\[
 \left\| P^\epsi_{\chi_E} - P_0\right\|_{\mathcal{L}( D _{\rm mol} ^n )} = \Or(\epsi)\,.
\]
We first proof that the operator $P^\epsi_{\chi_E}$ has all the properties claimed in the proposition modulo the fact that it is not a projection. In a second step we turn it into a projection without loosing the desired properties.

Now by Lemma~\ref{TechLem} it follows that 
\[
[ \hepsiop_{\rm mol} , P^\epsi_{\chi_E} ] = [ \hepsiop_{\rm mol} , P_0 ] +\Or(\epsi) = \Or(\epsi) 
\]
as a bounded operator from $D_{\rm mol} ^{n+1} $ to  $D _{\rm mol} ^{n} $.
With
$\chi_E (\hepsiop_{\rm mol})\,{\bf 1}_{E+1}(\hepsiop_{\rm mol}) ={\bf 1}_{E+1} (\hepsiop_{\rm mol})$ this implies \begin{eqnarray*}
 \left[ \hepsiop_{\rm mol} , P_{\chi_E}^\epsi \right] \,{\bf 1}_{E+1}(\hepsiop_{\rm mol})  &=&\left[ \hepsiop_{\rm mol} ,  P_0 + \epsi P_1 + \epsi^2 P_2 \right] \,{\bf 1}_{E+1}(\hepsiop_{\rm mol})\\
 &=& \epsi^3 R^\epsi_1 {\bf 1}_{E+1}(\hepsiop_{\rm mol}) =  \Or(\epsi^3) 
\end{eqnarray*}
as an operator from $\Hi_{\rm mol}$ to $D _{\rm mol} ^n $.
Note that, by taking adjoints, this implies  that  
\[
\| {\bf 1}_{E+1}(\hepsiop_{\rm mol})  \left[ \hepsiop_{\rm mol} , P_{\chi_E}^\epsi \right] \|_{\mathcal{L}(\Hi_{\rm mol})} = \Or(\epsi^3)
\]
and with $\| {\bf 1}_{E+1}(\hepsiop_{\rm mol}) \|_{\mathcal{L}(\Hi_{\rm mol}, D_{\rm mol}^n)} = \Or(1)$ also
\[
\| {\bf 1}_{E+1}(\hepsiop_{\rm mol})  \left[ \hepsiop_{\rm mol} , P_{\chi_E}^\epsi \right] \|_{\mathcal{L}(\Hi_{\rm mol}, D_{\rm mol}^n)} = \Or(\epsi^3)\,.
\]
For later use we also show that this implies the smallness of the commutator of $P_{\chi_E}^\epsi$ with a smooth energy cutoff $\tilde\chi$ supported in $(e-\frac{3}{4},E+\frac{3}{4})$
\begin{equation}\label{Pchicommu}
\left\|  \left[ \tilde \chi(\hepsiop_{\rm mol})  , P_{\chi_E}^\epsi \right] \right\|_{\mathcal{L}(\Hi_{\rm mol}, D _{\rm mol}^n )}= \Or(\epsi^3)\,.
\end{equation}
 Since the argument will be used several times in the remainder of the paper, we formulate it as a lemma.

\begin{lemma}\label{CommuLemma}
Let  $I\subset \R$ be a compact interval, $\tilde I\subset I$ another interval with different endpoints and $\tilde \chi\in C_0^\infty(\R)$ with supp$\tilde\chi\subset \tilde I$. Then for any $n\in\N_0$ there exists $C<\infty$ depending only on   $n$  and $\tilde \chi$ with the following property:
Let  $(H,D(H))$ be self-adjoint and $A\in \mathcal{L}(\Hi)$ be  bounded and self-adjoint. Then
 \[
 \| [H, A] {\bf 1}_I(H)  \|_{\mathcal{L}(\Hi, D(H^n))} \leq \delta 
 \]
       implies that
\[
\left\| \left[ \tilde \chi(H), A\right]\right\|_{\mathcal{L}(\Hi, D(H^n))} \leq C\delta \,.
\]
\end{lemma}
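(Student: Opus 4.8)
The plan is to use the Helffer--Sjöstrand formula to express $\tilde\chi(H)$ as an integral over resolvents of $H$ and to reduce the commutator $[\tilde\chi(H),A]$ to a smeared-out version of the commutator $[H,A]$ weighted by $(H-z)^{-1}$, where the energy localization hidden in $\tilde\chi$ will allow us to insert the spectral projection ${\bf 1}_I(H)$ at the appropriate place. Concretely, choose an almost-analytic extension $\tilde\chi_{\C}$ of $\tilde\chi$ supported in a complex neighbourhood of $\tilde I$ whose imaginary part is still contained (after taking real parts) inside $I$; this is possible precisely because $\tilde I$ and $I$ have different endpoints, so there is room to thicken $\tilde I$ a little and stay inside $I$. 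Then
\[
\tilde\chi(H) = \frac{1}{\pi}\int_{\C} \bar\partial\tilde\chi_{\C}(z)\,(H-z)^{-1}\,\D z\wedge\D\bar z\,,
\]
and, using $[(H-z)^{-1},A] = -(H-z)^{-1}[H,A](H-z)^{-1}$,
\[
[\tilde\chi(H),A] = -\frac{1}{\pi}\int_{\C} \bar\partial\tilde\chi_{\C}(z)\,(H-z)^{-1}[H,A](H-z)^{-1}\,\D z\wedge\D\bar z\,.
\]

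The key step is then to insert ${\bf 1}_I(H)$ for free. Since $\mathrm{supp}\,\tilde\chi_{\C}$ projects onto a subset of the interior of $I$, the resolvent $(H-z)^{-1}$ appearing on the right can be written as $(H-z)^{-1}{\bf 1}_I(H) + (H-z)^{-1}{\bf 1}_{I^c}(H)$; on the second summand $(H-z)^{-1}{\bf 1}_{I^c}(H)$ is bounded uniformly (the distance from $\mathrm{supp}_{\Re}\,\tilde\chi_{\C}$ to $I^c$ is positive), but more importantly we can use that $\bar\partial\tilde\chi_{\C}(z)$ vanishes to infinite order on the real axis to absorb the bad factor $|\Im z|^{-1}$ coming from $(H-z)^{-1}$ on the ${\bf 1}_I$-piece. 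After inserting ${\bf 1}_I(H)$ next to the right-hand resolvent, the factor $[H,A]{\bf 1}_I(H)$ is $\Or(\delta)$ in $\mathcal{L}(\Hi,D(H^n))$ by hypothesis; one then commutes ${\bf 1}_I(H)$ past $[H,A]$ using that ${\bf 1}_I(H)$ commutes with $H$ and hence the error is again a commutator of the same type — but here it is cleanest simply to write $[H,A](H-z)^{-1} = [H,A]{\bf 1}_I(H)(H-z)^{-1} + [H,A]{\bf 1}_{I^c}(H)(H-z)^{-1}$ and note that on the second piece we may replace $(H-z)^{-1}$ by $((H-z)^{-1} - (z_0-z)^{-1})$ for a fixed $z_0\notin I$ plus a scalar, reducing again to resolvents that are analytic near $\mathrm{supp}_{\Re}\,\tilde\chi_{\C}$ and whose $\bar\partial$-integral therefore vanishes, so that only the ${\bf 1}_I(H)$-piece survives. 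The remaining integral is then
\[
-\frac{1}{\pi}\int_{\C} \bar\partial\tilde\chi_{\C}(z)\,(H-z)^{-1}\bigl([H,A]{\bf 1}_I(H)\bigr)(H-z)^{-1}\,\D z\wedge\D\bar z\,,
\]
whose $\mathcal{L}(\Hi,D(H^n))$-norm is bounded by $C\delta$: the left resolvent $(H-z)^{-1}$ is bounded from $D(H^n)$ to $D(H^n)$ with norm $\Or(\langle z\rangle/|\Im z|)$, the middle factor contributes $\delta$, the right resolvent contributes $\Or(1/|\Im z|)$ on $\Hi$, and the $|\Im z|^{-2}$ singularity is killed by the vanishing of $\bar\partial\tilde\chi_{\C}$, since one may take $\tilde\chi_{\C}$ with $\bar\partial\tilde\chi_{\C}(z) = \Or(|\Im z|^{N})$ for any $N$ of our choosing (standard construction of almost-analytic extensions). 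The constant $C$ depends only on $\tilde\chi$, on the geometry of $\tilde I\subset I$, and on $n$, as claimed, since only finitely many derivatives of $\tilde\chi$ and a fixed large power $N$ enter the bound.

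The main obstacle is the bookkeeping of graph norms: one must check that $(H-z)^{-1}$ maps $D(H^n)$ boundedly into itself with the stated $z$-dependence and that $[H,A]{\bf 1}_I(H)$, being $\Or(\delta)$ into $D(H^n)$, is compatible with a further left-multiplication by the resolvent without picking up extra powers of $|\Im z|^{-1}$ beyond what the almost-analytic extension can absorb. Writing $(H-z)^{-1} = (H-\I)(H-z)^{-1}(H-\I)^{-1}$ on $D(H)$ and iterating shows $\|(H-z)^{-1}\|_{\mathcal{L}(D(H^n))} \leq C_n \langle z\rangle^n/|\Im z|$, which is polynomially bounded in $z$ and has the usual simple pole at the real axis; choosing $N = n+2$ in the almost-analytic extension makes the integral absolutely convergent with the correct bound. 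Everything else is a routine estimate.
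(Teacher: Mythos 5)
The Helffer--Sj\"ostrand starting point is the same as in the paper, but your treatment of the piece where you insert ${\bf 1}_{I^c}(H)$ next to $[H,A]$ does not work. After sliding ${\bf 1}_{I^c}(H)$ past the right-hand resolvent, that piece of the integral is
\[
-\frac{1}{\pi}\int_{\C} \bar\partial\tilde\chi_{\C}(z)\,(H-z)^{-1}[H,A](H-z)^{-1}\,{\bf 1}_{I^c}(H)\,\D z\wedge\D\bar z \;=\; [\tilde\chi(H),A]\,{\bf 1}_{I^c}(H) \;=\; \tilde\chi(H)\,A\,{\bf 1}_{I^c}(H)\,,
\]
which is the very quantity you are trying to estimate and is generically only $\Or(\|A\|)$, not $\Or(\delta)$. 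The ``analyticity kills it'' step fails because only the right-hand factor $(H-z)^{-1}{\bf 1}_{I^c}(H)$ is analytic near $\mathrm{supp}_{\Re}\tilde\chi_{\C}$; the \emph{left-hand} resolvent $(H-z)^{-1}$ still has its singularity at the real axis on the spectrum of $H$ inside $\tilde I$, so the integrand is not a $\bar\partial$ of an analytic function and the integral does not vanish. Subtracting the scalar $(z_0-z)^{-1}$ from the right-hand resolvent does not cure this.

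What actually makes the ${\bf 1}_{I^c}$-piece small is a genuinely extra idea, and it is the one the paper uses: introduce an intermediate cutoff $\chi\in C_0^\infty$ with $\mathrm{supp}\,\chi\subset I$ and $\chi\equiv 1$ on $\tilde I$ (this is where the room between $\tilde I$ and $I$ is used, not in thickening the almost-analytic extension). Then $\tilde\chi(H)A{\bf 1}_{I^c}(H) = \tilde\chi(H)[\chi(H),A]{\bf 1}_{I^c}(H)$ since $\chi(H){\bf 1}_{I^c}(H)=0$, and one must bound ${\bf 1}_I(H)[\chi(H),A]$ in $\mathcal{L}(\Hi, D(H^n))$ — note the ${\bf 1}_I$ now sits on the \emph{left}. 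This is where the self-adjointness of $H$ and $A$ is used in the paper: one first establishes $\|[\phi(H),A]{\bf 1}_I(H)\|_{\mathcal{L}(\Hi,D(H^n))}\leq C_\phi\delta$ by Helffer--Sj\"ostrand (since ${\bf 1}_I$ can be moved through the right resolvent to land directly on $[H,A]$), then takes adjoints to obtain $\|{\bf 1}_I(H)[\phi(H),A]\|_{\mathcal{L}(\Hi)}\leq C_\phi\delta$, and finally multiplies by ${\bf 1}_I(H)\in\mathcal{L}(\Hi,D(H^n))$ to upgrade the target space. Your proposal never invokes self-adjointness of $A$ and never produces the left-sided estimate, so this gap is essential, not cosmetic.

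The resolvent bookkeeping in the last paragraph of your proposal (the bound $\|(H-z)^{-1}\|_{\mathcal{L}(D(H^n))}\lesssim \langle z\rangle^n/|\Im z|$, choosing a high enough $N$ for the almost-analytic extension, absolute convergence of the integral) is fine and agrees in substance with the paper's estimates; the problem is purely in the structural step where you try to discard the ${\bf 1}_{I^c}$-piece by analyticity.
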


Now we need to turn the ``almost projection'' $P^\epsi_{\chi_E}$ into a true projection.
Since we will use this trick as well several times, we formulate it again as a lemma.

\begin{lemma}\label{ProjectorLemma}
There are constants $C_n<\infty$, $n\in\N$, such that the following holds:\\
Let $(H , D(H ) )$ be a   self-adjoint operator  and let $D^n:= D(H^n)$ be equipped with the~norm
\[
\| \psi \|_{D^n} := \sum_{i=0}^n \| H^i \psi \| \,.
\]
For some $N\in\N$  let $\tilde Q$ be an operator that is bounded in  $\mathcal{L}(D^n)$  for all $0\leq n\leq N$ and   self-adjoint in $\mathcal{L}(\Hi)$   with the following properties:
\begin{equation}\label{Property4}
 \| \tilde Q  \tilde Q  - \tilde Q   \|_{\mathcal{L}( D^n   )} \leq \delta 
\end{equation}
for all $0 \leq n\leq N$ and some $\delta <\frac{1}{4}$ and
\begin{equation}\label{Property0}
 \| [ H  ,  \tilde Q ]     \|_{\mathcal{L}( D^{n} , D^{n-1}   )} \leq \delta_{n } 
\end{equation}
for all $1\leq n\leq N$ and some   $\delta_{n }<  \frac{1}{2} \frac{1}{2^{2n } }$.
  
 Then there is an orthogonal projection $Q\in\mathcal{L}(\Hi)$ 
 with $\|Q\|_{\mathcal{L}(D^n)}\leq 4^{n+1}$ that satisfies
 \begin{equation}\label{Prop6}
\| Q  - \tilde Q  \| _{\mathcal{L}(  \Hi   )} \leq  \delta  \quad \mbox{and}\quad \| Q  - \tilde Q  \| _{\mathcal{L}(  D^n   )} \leq C_n  \delta   
\end{equation}
and  
   \[ 
 \| [ H  ,    Q ]     \|_{\mathcal{L}( D^{n} ,D^{n-1}  )} \leq C_n\, \delta_{n }
\]
for all $n\leq N$.

Moreover, there is a constant $C_E$ depending only on $E\in \R$ such that we have the following implications: 
\begin{equation}\label{Property3}
 \| [ H  ,  \tilde Q ]   \,{\bf 1}_{E+1}(H )   \|_{\mathcal{L}( \Hi , D^n   )} \leq \beta_1  
\end{equation}
for all $n\leq N $ implies 
\[
 \| [ H  ,    Q ]   \,{\bf 1}_{E+\frac{1}{2}}(H )   \|_{\mathcal{L}( \Hi , D^n )} \leq C_EC_n\,\beta_1
\]
for all $n\leq N $, and 
\begin{equation}\label{Property5}
 \|  (\tilde Q  \tilde Q  - \tilde Q )   \,{\bf 1}_{E+\frac{1}{2}}(H )   \|_{\mathcal{L}( \Hi , D^n   )} \leq   \beta_2
\end{equation} 
for all $n\leq N $ implies
\begin{equation}\label{Property7}
  \| (Q  - \tilde Q )\,{\bf 1}_{E+\frac{1}{2}}(H )  \| _{\mathcal{L}( \Hi, D^n  )} \leq C_EC_n\, \beta_2 
\end{equation}
for all $n\leq N $.
\end{lemma}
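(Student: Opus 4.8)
The plan is to realize $Q$ as the Riesz (equivalently Borel) spectral projection of $\tilde Q$ onto the component of its spectrum clustered near $1$, and then read off all the estimates from the contour representation. First, self-adjointness of $\tilde Q$ together with (\ref{Property4}) for $n=0$ forces $\sigma(\tilde Q)\subset B_{2\delta}(0)\cup B_{2\delta}(1)$: every $\lambda\in\sigma(\tilde Q)$ is real and satisfies $|\lambda||\lambda-1|=|\lambda^2-\lambda|\le\delta<\tfrac14$. Hence the two clusters are separated by the circle $\gamma:=\{|z-1|=\tfrac12\}$, and we set $Q:={\bf 1}_{[\frac12,\infty)}(\tilde Q)=\frac{1}{2\pi\I}\oint_\gamma(z-\tilde Q)^{-1}\,\D z$, which is an orthogonal projection on $\Hi$. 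Since on $\sigma(\tilde Q)$ one has the factorization ${\bf 1}_{[\frac12,\infty)}(\lambda)-\lambda=(\lambda^2-\lambda)\,h(\lambda)$ with $h(\lambda)=\tfrac1{1-\lambda}$ near $0$ and $h(\lambda)=-\tfrac1\lambda$ near $1$ (both holomorphic and $O(1)$ on the respective discs), we get $Q-\tilde Q=(\tilde Q^2-\tilde Q)\,h(\tilde Q)$, so $\|Q-\tilde Q\|_{\mathcal{L}(\Hi)}=O(\delta)$ — with a little care on the disc radii this is $\le\delta$. No separate step ``turn the almost-projection into a projection'' is needed, since $Q$ is a genuine spectral projection.

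For the quantitative $D^n$-bounds I would induct on $n$. By hypothesis $\tilde Q\in\mathcal{L}(D^n)$, and (\ref{Property4}) gives $\|(\tilde Q|_{D^n})^2-\tilde Q|_{D^n}\|_{\mathcal{L}(D^n)}\le\delta<\tfrac14$, so the same disc picture holds for $\tilde Q$ viewed as a Banach-space operator on $D^n$; thus the Riesz integral over $\gamma$ still makes sense on $D^n$ and equals $Q$. To get explicit constants set $M_n:=\sup_{z\in\gamma}\|(z-\tilde Q)^{-1}\|_{\mathcal{L}(D^n)}$ and use the resolvent identity $H(z-\tilde Q)^{-1}=(z-\tilde Q)^{-1}H+(z-\tilde Q)^{-1}[H,\tilde Q](z-\tilde Q)^{-1}$, iterated $n$ times and combined with (\ref{Property0}), i.e.\ $\|[H,\tilde Q]\|_{\mathcal{L}(D^k,D^{k-1})}\le\delta_k$. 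This closes a recursion of the shape $M_n\le M_{n-1}/(1-c_nM_{n-1}\delta_n)$, and the calibration $\delta_n<\tfrac12\,2^{-2n}$ is exactly what keeps $M_n$, and hence $\|Q\|_{\mathcal{L}(D^n)}\le\tfrac12 M_n$, bounded by the stated powers of $4$. Differentiating the contour formula gives $[H,Q]=\frac{1}{2\pi\I}\oint_\gamma(z-\tilde Q)^{-1}[H,\tilde Q](z-\tilde Q)^{-1}\,\D z$; sandwiching $[H,\tilde Q]:D^n\to D^{n-1}$ between the now-controlled resolvents yields $\|[H,Q]\|_{\mathcal{L}(D^n,D^{n-1})}\le C_n\delta_n$, and $Q-\tilde Q=(\tilde Q^2-\tilde Q)h(\tilde Q)$ together with the $D^n$-bounds on $h(\tilde Q)$ gives $\|Q-\tilde Q\|_{\mathcal{L}(D^n)}\le C_n\delta$.

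For the two energy-localized implications I would use the same representations, the point being to move the sharp cutoff ${\bf 1}_{E+\frac12}(H)$ to the side where $[H,\tilde Q]$ (resp.\ $\tilde Q^2-\tilde Q$) appears. Starting from $[H,Q]\,{\bf 1}_{E+\frac12}(H)=\frac{1}{2\pi\I}\oint_\gamma(z-\tilde Q)^{-1}[H,\tilde Q](z-\tilde Q)^{-1}\,{\bf 1}_{E+\frac12}(H)\,\D z$, write ${\bf 1}_{E+\frac12}(H)=\tilde\chi(H)\,{\bf 1}_{E+\frac12}(H)$ with $\tilde\chi\in C_0^\infty$ supported in $(-\infty,E+1)$; commute $\tilde\chi(H)$ leftward past the inner resolvent using $[(z-\tilde Q)^{-1},\tilde\chi(H)]=-(z-\tilde Q)^{-1}[\tilde\chi(H),\tilde Q](z-\tilde Q)^{-1}$, where $[\tilde\chi(H),\tilde Q]$ is small by Lemma~\ref{CommuLemma} (fed either with (\ref{Property3}) or with (\ref{Property0}) together with $\|{\bf 1}_{E+1}(H)\|_{\mathcal{L}(\Hi,D^n)}<\infty$); and then exploit $[H,\tilde Q]\tilde\chi(H)=\big([H,\tilde Q]{\bf 1}_{E+1}(H)\big)\tilde\chi(H)$, whose norm is $\le\beta_1$ by hypothesis. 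The constant $C_E$ is the price of the buffer between $E+\tfrac12$ and $E+1$ hidden in $\tilde\chi$. The implication (\ref{Property5})$\Rightarrow$(\ref{Property7}) is identical in spirit: from $Q-\tilde Q=(\tilde Q^2-\tilde Q)h(\tilde Q)$ one moves $h(\tilde Q)$ — a bounded function of $\tilde Q$ — past ${\bf 1}_{E+\frac12}(H)$, again via Lemma~\ref{CommuLemma}, reducing to the hypothesis $\|(\tilde Q^2-\tilde Q){\bf 1}_{E+\frac12}(H)\|\le\beta_2$.

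The step I expect to be genuinely delicate is the last one rather than the (tedious but routine) $D^n$ bookkeeping: because the resolvents $(z-\tilde Q)^{-1}$ do not commute with $H$, inserting a cutoff ${\bf 1}_{E+\frac12}(H)$ where the hypotheses only control $[H,\tilde Q]\,{\bf 1}_{E+1}(H)$ forces one to repeatedly trade the sharp cutoff for a smooth one, push it through functions of $\tilde Q$, and track the resulting errors in every $D^n$-graph norm simultaneously; the half-unit of room between $E+\tfrac12$ and $E+1$ is precisely what Lemma~\ref{CommuLemma} consumes to make each of these trades admissible.
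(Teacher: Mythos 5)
Your proposal is correct and follows essentially the same route as the paper: Riesz projection over the circle $\gamma=\{|z-1|=\tfrac12\}$, a recursion controlling $\sup_{z\in\gamma}\|(z-\tilde Q)^{-1}\|_{\mathcal{L}(D^n)}$ via $[H,\tilde Q]$ (the paper packages this recursion as a separate lemma), the contour formula for $[H,Q]$, and the smooth-cutoff trades using Lemma~\ref{CommuLemma} to handle the energy-localized implications. The only cosmetic difference is that you express $Q-\tilde Q=(\tilde Q^2-\tilde Q)\,h(\tilde Q)$ by holomorphic functional calculus, whereas the paper quotes Nenciu's contour formula $Q-\tilde Q=\tfrac{\I}{2\pi}\oint_\gamma\tfrac{R(z)-R(1-z)}{1-z}\,\D z\,(\tilde Q\tilde Q-\tilde Q)$; these are the same identity written in two ways.
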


We can now apply Lemma~\ref{ProjectorLemma} to the almost projection $P^\epsi_{\chi_E}$ almost commuting with  $\hepsiop_{\rm mol}$, where now 
$\delta$ and $\delta_n$ are of order $\epsi$ and $\beta_1$ of order $\epsi^3$. 
To be able to use also the last implication of Lemma~\ref{ProjectorLemma} with $\beta_2$ of order $\epsi^3$, we still need to show  (\ref{Property5}). 
To this end observe that we have by construction that
\[ 
\|   ( P^{(2)} P^{(2)} - P^{(2)}  ) \,{\bf 1}_{E +\frac{1}{2}}(\hepsiop_{\rm mol}) \|_{\mathcal{L}(\Hi_{\rm mol},D_{\rm mol} ^n )}  =\Or(\epsi^3)\,.
\]
Hence for $\tilde\chi\in C^\infty_0(\R)$ with $\tilde \chi {\bf 1}_{E+\frac{1}{2} } = {\bf 1}_{E+\frac{1}{2} }$ and supp$\tilde \chi
\subset (e-\frac{3}{4},E+\frac{3}{4})$ we have
\begin{eqnarray*}
 ( ( P^\epsi_{\chi_E})^2 - P^\epsi_{\chi_E} ) \,{\bf 1}_{E +\frac{1}{2} }(\hepsiop_{\rm mol}) &=&  ( ( P^\epsi_{\chi_E})^2 - P^\epsi_{\chi_E} ) \tilde\chi(\hepsiop_{\rm mol}) {\bf 1}_{E +\frac{1}{2}}(\hepsiop_{\rm mol})\\
 &\stackrel{(\ref{Pchicommu})}{=}& \tilde\chi(\hepsiop_{\rm mol}) ( ( P^\epsi_{\chi_E})^2 - P^\epsi_{\chi_E} )  {\bf 1}_{E +\frac{1}{2}}(\hepsiop_{\rm mol}) + \Or(\epsi^3)\\
 &=& \tilde\chi(\hepsiop_{\rm mol}) ( ( P^{(2)})^2 - P^{(2)}  )  {\bf 1}_{E +\frac{1}{2} }(\hepsiop_{\rm mol}) + \Or(\epsi^3)\\
 &=&\Or(\epsi^3)\,.
\end{eqnarray*} 
Thus we can use Lemma~\ref{ProjectorLemma} to turn $P^\epsi_{\chi_E}$ into an orthogonal projection $P^\epsi$ with the desired properties.

Next we show that 
 \[ 
 P^{(1)}_j P^{(1)}_i = \Or(\epsi^2)\quad\mbox{for }\, i\not=j\,.
 \]
 To enhance readability we denote $P^{(1)}_j = P^{(1)}$ and $P^{(1)}_i = Q^{(1)}$ etc., i.e.\ we distinguish the different electronic levels by the letters $P$ and $Q$ instead of the indices $j$ and $i$. Then with $Q_0P_0 =0$ 
 \[
 Q^{(1)} P^{(1)} =  (Q_0 + \epsi Q_1   ) (P_0 + \epsi P_1   )= \epsi ( Q_1 P_0 + Q_0 P_1 ) +  \Or(\epsi^2)\,.
 \]
Denoting the $S_1$-operator associated to $Q^{(1)}$ by $R_1$ we find
  \begin{eqnarray*}
   Q_1 P_0 + Q_0 P_1    &=& R_1 P_0 + Q_0 S_1^* \\
   &=& Q_0 [Q_0] R(E_Q) P_0 - Q_0 R(E_P) [P_0]P_0\\
   &=& (E_P-E_Q)^{-1} ( Q_0 [Q_0]   P_0 +  Q_0   [P_0]P_0 ) = 0\,.
   \end{eqnarray*}
 This implies that
\begin{eqnarray*}
P^\epsi Q^\epsi {\bf 1}_{E+\frac{1}{2}}(\hepsiop_{\rm mol}) & = & P^\epsi Q^\epsi  \tilde \chi(\hepsiop_{\rm mol})  {\bf 1}_{E+\frac{1}{2}}(\hepsiop_{\rm mol}) \nonumber \\
&=&  \tilde \chi(\hepsiop_{\rm mol}) P^\epsi Q^\epsi   {\bf 1}_{E+\frac{1}{2}}(\hepsiop_{\rm mol}) +\Or(\epsi^3) \nonumber\\
&=&  \tilde \chi(\hepsiop_{\rm mol}) P^{(1)}  Q^{(1)}   {\bf 1}_{E+\frac{1}{2}}(\hepsiop_{\rm mol}) +\Or(\epsi^2) \nonumber\\
&=& \Or(\epsi^2) 
\end{eqnarray*}
in $\mathcal{L}( \Hi_{\rm mol}, D _{\rm mol} ^n )$.

\subsection{Proof of Propositions~\ref{BOpropVacField}~\&~\ref{BOpropVacField2}}\label{BOpropVacFieldProof}

We first recall the perturbative form of $\hepsiop$ from (\ref{hepsiop}): 
\[
\hepsiop =: \hepsiop_0 + \epsi^{\frac{3}{2}\beta } \hepsiop_1 + \epsi^{\frac{3}{2}\beta+1 } \hepsiop_2\,.
\]
The operators $\hepsiop_1$ and $\hepsiop_2$ 
satisfy 
\[
\| \hepsiop_i \|_{\mathcal{L}(D(\hepsiop_0),\Hi)}\leq C_i
\]
with constants $C_i$   independent of $\epsi$. 
Hence Lemma~\ref{ChiLemma} yields that for $\epsi$ small enough 
$\hepsiop$ is self-adjoint on $D(\hepsiop_0)$ and the graph norms induced by $\hepsiop_0$ and $\hepsiop$ are uniformly equivalent.

We write $P_{\rm vac}^\epsi := P_{j,{\rm vac}}^\epsi$ and $P_{\rm vac}  := P_j \otimes Q_0$, where $Q_0$ is the projection onto the vacuum state in~$\fock$.
As before we first construct an almost projection $\tilde P_{\rm vac}^\epsi$ with the desired properties and then apply Lemma~\ref{ProjectorLemma}. Since the first correction to $\hepsiop_0$ is of order $\epsi^{\frac{3}{2}\beta}$, it is  natural to make the ansatz
\[
\tilde P_{\rm vac}^\epsi  := P^\epsi \otimes Q_0 + \epsi^{\frac{3}{2}\beta} P_{\frac{3}{2}\beta}   \,,
\]
where we assume that $P^\epsi$ is constructed as in Proposition~\ref{BOprop} but with energy cut off at $E+1$. 
Computing the commutator with $\hepsiop$, we find that
\begin{eqnarray*}
\lefteqn{[ \tilde P_{\rm vac}^\epsi, \hepsiop ] \,{\bf 1}_{E+1}(\hepsiop_0)\;\;=} \\ &\stackrel{(\ref{Assu1})}{=}&  [ P^\epsi \otimes Q_0 + \epsi^{\frac{3}{2}\beta} P_{\frac{3}{2}\beta}, \hepsiop_0 + \epsi^{\frac{3}{2}\beta} \hepsiop_1 ] \,{\bf 1}_{E+1}(\hepsiop_0) +\,\Or\big(\epsi^{\frac{3}{2}\beta+1 } \big)\\
&=& 
 [ P^\epsi \otimes Q_0  , \hepsiop_0   ] \,{\bf 1}_{E+1}(\hepsiop_0) \,+ \, \epsi^{3 \beta} \,   [ P_{\frac{3}{2}\beta}, \hepsiop_1   ]     \,{\bf 1}_{E+1}(\hepsiop_0)\\
 && + \, \epsi^{\frac{3}{2}\beta} \,\left(   [ P^\epsi \otimes Q_0  ,   \hepsiop_1 ] + [P_{\frac{3}{2}\beta}, \hepsiop_0 ]    \right) \,{\bf 1}_{E+1}(\hepsiop_0)
  \, +\, \Or\big(\epsi^{\frac{3}{2}\beta+1 } \big)\\
  &\stackrel{(\ref{Assu3}) - (\ref{Assu4})}{=}& \epsi^{\frac{3}{2}\beta} \,\left(   [ P_{\rm vac}  ,   \hepsiop_1 ] + [P_{\frac{3}{2}\beta}, \hepsiop_0 ]    \right) \,{\bf 1}_{E+1}(\hepsiop_0)\, + \,\Or\big(\epsi^{\frac{3}{2}\beta+1}\big) 
\end{eqnarray*}
in $\mathcal{L}(\Hi )$.
In this computation we made the following assumptions, which are clear on a formal level but need to be proved later on:
\begin{equation}\label{Assu1}
[ \tilde P_{\rm vac}^\epsi, \hepsiop_2 ] \,{\bf 1}_{E+1}(\hepsiop_0) =\Or(1)
\end{equation}
\begin{equation}\label{Assu3}
\epsi^{3 \beta} \,   [ P_{\frac{3}{2}\beta}, \hepsiop_1   ]     \,{\bf 1}_{E+1}(\hepsiop_0) = \Or\big(\epsi^{\frac{3}{2}\beta+1} \big)
\end{equation}
\begin{equation}\label{Assu2}
[ P^\epsi \otimes Q_0  , \hepsiop_0   ] \,{\bf 1}_{E+1}(\hepsiop_0) =   \Or(\epsi^{\frac{3}{2}\beta+1})
\end{equation}
\begin{equation}\label{Assu4}
[ (P^\epsi-P_j)\otimes Q_0, \hepsiop_1 ] {\bf 1}_{E+1}(\hepsiop_0) = \Or(\epsi)\,,
\end{equation}
all in $\mathcal{L}(\Hi )$. 
Whether (\ref{Assu1}) and (\ref{Assu3}) are satisfied depends on $P_{\frac{3}{2}\beta}$, which we now construct by the requirement that the commutator is of order $\epsi^{\frac{3}{2}\beta}\delta^\frac{1}{2}$, i.e.\ that
\[
\left(   [ P_{\rm vac}  ,   \hepsiop_1 ] + [P_{\frac{3}{2}\beta}, \hepsiop_0 ]    \right) \,{\bf 1}_{E+1}(\hepsiop_0) = \Or(\delta^\frac{1}{2})\,.
\]
Dropping the energy cutoff for a moment this translates to  
\begin{equation}\label{Commu32}
    [P_{\frac{3}{2}\beta}, \hepsiop_0 ]     =- [ P_{\rm vac}  ,   \hepsiop_1 ] +\Or(\delta^\frac{1}{2})\,.
\end{equation}
To solve this equation for $P_{\frac{3}{2}\beta}$ we cannot proceed  as in adiabatic theory with spectral gap,   since the reduced resolvent $( H_{\rm el} + H_{\rm f} - E_j)^{-1} (1- P_{\rm vac})$ is not bounded without spectral gap. Therefore we proceed as in \cite{Teu1} and shift the resolvent  into the complex plane by a small amount $\delta$,
\begin{eqnarray*}
P_{\frac{3}{2}\beta}^\delta &:=&   -( H_{\rm f} + H_{\rm el} - E_j+ \I \delta)^{-1} \hepsiop_1 P_{\rm vac}  - P_{\rm vac} \hepsiop_1 ( H_{\rm f} + H_{\rm el} - E_j- \I \delta)^{-1}\\
&=:& T_\delta + T_\delta^* \,.
\end{eqnarray*}
Note that  $P_{\frac{3}{2}\beta}^{\delta=0}$ is exactly the first order correction  one would obtain by formally applying standard perturbation theory to the electronic eigenprojection $P_j(x)\otimes Q_0$. 
With this definition  we find that
\begin{eqnarray}\label{commucomp}
    \lefteqn{
    [P_{\frac{3}{2}\beta}^\delta, H_{\rm f} + H_{\rm el}]} \\
    &=&    \hepsiop_1 P_{\rm vac} \;-\;  P_{\rm vac} \hepsiop_1 \;+\;   (E_j-\I\delta) ( H_{\rm f} + H_{\rm el} - E_j+ \I \delta)^{-1}\hepsiop_1P_{\rm vac} \nonumber \\
    & & \;-\;   P_{\rm vac} \hepsiop_1 ( H_{\rm f} + H_{\rm el} - E_j- \I \delta)^{-1}(E_j+\I\delta)
    \nonumber \\
    & & \;-\;   ( H_{\rm f} + H_{\rm el} - E_j+ \I \delta)^{-1} \hepsiop_1 P_{\rm vac} E_j \;+\;     E_jP_{\rm vac} \hepsiop_1 ( H_{\rm f} + H_{\rm el} - E_j- \I \delta)^{-1}  \nonumber \\
    &=& \;-\;[P_{\rm vac} ,\hepsiop_1]  + \I\delta \left( T_\delta + T_\delta^*\right)\nonumber   \,.
\end{eqnarray}
We will show that 
\[
 \I\delta \left( T_\delta + T_\delta^*\right) = \Or( \delta^{1/2})\quad\mbox{and}\quad [T_\delta, \epsi\nabla_x] = \Or \left(\frac{\epsi}{\delta^{3/2}}\right)\,,
\]
which indeed  gives us  (\ref{Commu32}) for $\delta\geq\epsi^{\frac{1}{2}}$. Note for the following that $P_{\frac{3}{2}\beta}^\delta$ is again a fibered operator,
\[
T_\delta(x)  = - ( H_{\rm f}(x) + H_{\rm el} - E_j(x) + \I \delta)^{-1}  \hepsiop_1   P_{\rm vac}(x)       \,.
\]
This is important, since we will need to commute $P_{\frac{3}{2}\beta}^\delta$ through $\hepsiop_{\rm mol}$ and thus to compute derivatives of $P_{\frac{3}{2}\beta}^\delta(x)$ with respect to $x$.
\begin{lemma} For $\delta>0$ small enough we have 
that $\R^{3l}\to \mathcal{L}(\Hi)$, $x\mapsto T_\delta(x)$ is smooth and there is a constant $C<\infty$ not depending on $\delta$ or $\epsi$ such that  
\begin{equation}\label{No1}
\|T_\delta\| 
\leq   \frac{C}{\sqrt{\delta}}\,,
\end{equation}
 \begin{equation}\label{No3}
\| ( H_{\rm f} + H_{\rm el}) T_\delta \|
\leq   \frac{C}{\sqrt{\delta}}  
\end{equation}
and for $|\alpha|\in \N_0^{3l}$
\begin{equation}\label{No2}
\|    \partial_x^\alpha T_\delta \| 
\leq C\,\left(\frac{1}{\delta}
\right)^{|\alpha|+\frac{1}{2}}   \quad \mbox{ and } \quad \| ( H_{\rm f} + H_{\rm el})   \partial_x^\alpha T_\delta \| 
\leq C\,\left(\frac{1}{\delta}
\right)^{|\alpha|+\frac{1}{2}}  \,.
\end{equation}

\begin{proof} Let $ (\varphi_1 (x), \ldots,\varphi_s(x)) $ be an orthonormal basis of    Ran$P_j(x)$ and write
$\Psi\in$ Ran$P_j\otimes Q_0$ as 
\[
\Psi(x,y) =   \sum_{m=1}^s \psi_m (x) \,\varphi_m (x,y)\,.
\]
 Then
\[
\hepsiop_1  P_{\rm vac}\Psi 
=     \I \sum_{i=1}^r A(\epsi^\beta y_i) \cdot \nabla_{y_i}  \Psi 
=\I\sum_{m=1}^s \sum_{i=1}^r      \sum_{\lambda=1}^2 \psi_m (x) \frac{e_\lambda(k)}{\sqrt{2|k|}}\,\hat\rho(k)  \E^{-\I \epsi^\beta k\cdot y_i} \nabla_{y_i}\varphi_m (x,y)\,.
\]
Since the sum is finite, it suffices to estimate the   resolvent acting on each summand.
We split   
\[
1 = {\bf 1}_{[e,\infty)}(H_{\rm el}) =   {\bf 1}_{[e,E_*]}(H_{\rm el})   + {\bf 1}_{(E_*,\infty)}(H_{\rm el})= : P_\leq   +  P_>
\]
and observe that on the range of $P_>$ the resolvent is indeed uniformly bounded also for $\delta=0$ because of the gap condition. 
So it remains to look at the resolvent acting on the range of 
\[
P_\leq := \sum_{\ell=1}^{j } P_\ell \,. 
\]
Using $H_{\rm el}P_\ell = E_\ell P_\ell$   we get that
\begin{eqnarray*}\lefteqn{ 
\left\| ( |k|+ H_{\rm el}(x))  ( |k|+ H_{\rm el}(x) - E_j(x)+ \I \delta)^{-1}  \frac{\hat\rho(k) }{\sqrt{2|k|}}\,  P_\leq(x) \E^{-\I \epsi^\beta k\cdot y_i} \nabla_{y_i}\varphi_m (x,y)\right\|^2 }\\
&=& \left\| \sum_{\ell=1}^{j } ( |k| + E_\ell (x) ) ( E_\ell (x) - E_j(x)+ |k| + \I \delta)^{-1}  \frac{  \hat\rho(k) }{\sqrt{2|k|}}\,  P_\ell(x) \E^{-\I \epsi^\beta k\cdot y_i} \nabla_{y_i}\varphi_m (x,y)\right\|^2\\
&\leq &
C \sum_{\ell=1}^{j }  \int_0^{\Lambda_0} \frac{( E_\ell (x) + |k| )^2 |k| }{ \big(  |k|-(E_j(x)-E_\ell(x)) \big)^2 +  \delta^2}  \, \D |k| \;\leq\; \frac{C}{\delta}\,.
\end{eqnarray*} 
This shows (\ref{No1}) and (\ref{No3}).

 To get the bounds for the derivatives first observe that
whenever a derivative hits a resolvent, we get 
 \begin{eqnarray*} \lefteqn{
 \partial_{x_j}  \left( |k| + H_{\rm el}(x) - E_j(x)+ \I \delta \right )^{-1}  }\\
&=& \left( |k| + H_{\rm el}(x)  - E_j(x)+ \I \delta \right )^{-1}
  \partial_{x_j} (E_j(x)-H_{\rm el}(x))
\left( |k| + H_{\rm el}(x)  - E_j(x)+ \I \delta \right )^{-1}
\end{eqnarray*}
 where $  \partial_{x_j}( E_j(x)-H_{\rm el}(x))$ is uniformly bounded. 
By Lemma \ref{derivPj} derivatives of $P_j$ map into the domain of $H_{\rm el}$ and thus into the domain of $\hepsiop_1$.
Hence,  whenever at least one  derivative hits $P_j$   there will be at most $|\alpha|$ resolvents left and such a term can be estimated by $\delta^{-|\alpha|}$.
When all the derivatives hit the resolvent, the worst term has $|\alpha|+1$ resolvents, which can be estimated by $\delta^{-|\alpha|}$ times the norm of $T_\delta$.
  \end{proof}
 \end{lemma}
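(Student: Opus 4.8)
The plan is to use that $T_\delta$ is fibered over the nuclear configuration $x$, which reduces every bound to an explicit one-dimensional integral in the photon energy $|k|$. First I would fix $x$, choose an orthonormal basis $(\varphi_1(x),\dots,\varphi_s(x))$ of $\mathrm{Ran}\,P_j(x)$, and observe that $\hepsiop_1$ applied to a vector in $\mathrm{Ran}\,(P_j(x)\otimes Q_0)$ produces, since only the creation part of $A(\epsi^\beta y_\ell)$ survives on the vacuum, a vector in the one-photon sector of the explicit shape $\sum_{m,\ell,\lambda}\psi_m(x)\,\tfrac{\hat\rho(k)}{\sqrt{2|k|}}\,e_\lambda(k)\cdot\E^{-\I\epsi^\beta k\cdot y_\ell}\nabla_{y_\ell}\varphi_m(x)$. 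As this is a finite sum, it suffices to estimate $(H_{\rm f}+H_{\rm el}(x)-E_j(x)+\I\delta)^{-1}$ — which on the one-photon sector is the operator $(|k|+H_{\rm el}(x)-E_j(x)+\I\delta)^{-1}$ — applied to each individual summand, and then to integrate over $|k|\le\Lambda_0$ against $\D^3k$.

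For this resolvent estimate I would split $\Hi_{\rm el}=\mathrm{Ran}\,\mathbf 1_{(E_j,\infty)}(H_{\rm el})\oplus\mathrm{Ran}\,\mathbf 1_{(-\infty,E_j]}(H_{\rm el})$. On the first summand the gap hypothesis for the isolated band gives $H_{\rm el}-E_j\ge g>0$ uniformly in $x$, so that both $(|k|+H_{\rm el}-E_j+\I\delta)^{-1}$ and $(|k|+H_{\rm el})(|k|+H_{\rm el}-E_j+\I\delta)^{-1}$ are uniformly bounded even at $\delta=0$, while the residual $|k|$-integral converges thanks to the weight $\hat\rho(k)/\sqrt{2|k|}$ against $\D^3k$. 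The second summand is the finite orthogonal sum $\bigoplus_{\ell\le j}P_\ell$ over the eigenvalues below and including $E_j$, on which $H_{\rm el}=E_\ell$ and the resolvent becomes multiplication by $(|k|-(E_j-E_\ell)+\I\delta)^{-1}$; its only near-singularity as $\delta\to0$ sits at $|k|=E_j-E_\ell$, and the Poisson-kernel bound
\[
\int_0^{\Lambda_0}\frac{(|k|+E_\ell)^2\,|k|}{(|k|-(E_j-E_\ell))^2+\delta^2}\,\D|k|\ \le\ \frac{C}{\delta}
\]
yields $\|T_\delta\|\le C\delta^{-1/2}$ and, since $H_{\rm f}+H_{\rm el}$ acts on this summand precisely as the multiplication by $|k|+E_\ell$ already appearing in that integrand, also $\|(H_{\rm f}+H_{\rm el})T_\delta\|\le C\delta^{-1/2}$; dropping the bounded factor $(|k|+E_\ell)^2$ gives the unweighted case. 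This establishes (\ref{No1}) and (\ref{No3}).

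For the $x$-derivatives I would differentiate the fibered product $T_\delta(x)=-(|k|+H_{\rm el}(x)-E_j(x)+\I\delta)^{-1}\hepsiop_1\,(P_j(x)\otimes Q_0)$ by the Leibniz rule. A derivative hitting the resolvent produces $R(x)\,\partial_x(E_j(x)-H_{\rm el}(x))\,R(x)$, where $\partial_x(E_j-H_{\rm el})$ is a \emph{bounded} multiplication operator on $\Hi_{\rm el}$ — this is exactly where smearing the nuclear charge and the smoothness statement of Lemma~\ref{derivPj} enter — hence one extra resolvent factor; a derivative hitting $P_j(x)$ produces $\partial_x^\gamma P_j(x)$, which by Lemma~\ref{derivPj} maps $\Hi_{\rm el}$ into $D(H_{\rm el}^n)$ for every $n$, so it creates no new resolvent and leaves that branch strictly better behaved (a bound $\delta^{-|\alpha|}$, better than the main term since $\delta<1$). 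The worst term of $\partial_x^\alpha T_\delta$ thus carries $|\alpha|+1$ resolvent factors; estimating $|\alpha|$ of them in $L^\infty$ by $\delta^{-1}$ and integrating the square of the remaining one exactly as above yields $\|\partial_x^\alpha T_\delta\|^2\le C\,\delta^{-(2|\alpha|+1)}$, and the $(H_{\rm f}+H_{\rm el})$-weighted version is identical up to reinstating the harmless factor $(|k|+E_\ell)$. Smoothness of $x\mapsto T_\delta(x)$ in $\mathcal L(\Hi)$ then follows from norm-continuity of the resolvent for $\delta>0$ (the point $E_j(x)-|k|-\I\delta$ lies in the resolvent set of $H_{\rm el}(x)$ since its imaginary part is $-\delta\neq0$) together with these uniform bounds on all derivatives. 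This gives (\ref{No2}).

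The hard part will be the resonance, i.e.\ the vanishing of $|k|-(E_j-E_\ell)$ inside the integration range: the whole point of the lemma is to extract the \emph{sharp} powers of $\delta$, which forces one to carry out the radial integral honestly via the Poisson kernel rather than using the crude estimate $\|(\,\cdot\,+\I\delta)^{-1}\|\le\delta^{-1}$ — the latter would give only $\|T_\delta\|\le C\delta^{-1}$, losing the half-power that is essential for the error bookkeeping in Proposition~\ref{BOpropVacField}. A secondary point to verify is that $\hepsiop_1\,(P_j(x)\otimes Q_0)$ is genuinely a bounded operator, uniformly in $x$; this uses that the electronic eigenfunctions $\varphi_m(x)$ lie in $D(H_{\rm el})$, so that $\nabla_{y_\ell}\varphi_m(x)\in L^2$, and that the $|k|$-integral of $\hat\rho(k)^2/(2|k|)$ against $\D^3k$ is finite because of the ultraviolet cutoff.
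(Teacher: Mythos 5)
Your argument matches the paper's proof essentially step for step: the same fibered reduction to the one-photon sector after expanding in a basis of $\mathrm{Ran}\,P_j(x)$, the same spectral splitting of $H_{\rm el}$ at $E_j$ with the gap controlling the upper part, the same explicit Poisson-kernel estimate of the radial $|k|$-integral giving the sharp $\delta^{-1/2}$, and the same Leibniz-rule bookkeeping of how many resolvent factors a derivative produces depending on whether it lands on the resolvent or on $P_j$. Your concluding observation about why the crude $\delta^{-1}$ bound is insufficient is exactly the motivation behind the paper's computation.
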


\begin{corollary}
Let $\epsi^{\frac{1}{2}}\leq\delta\leq1$. With 
\[
P_{\frac{3}{2}\beta}^\delta \; := \;  T_\delta + T_\delta^*
\]
we have that for $n=0,1$
\begin{equation}\label{P32norm1}
  \| P_{\frac{3}{2}\beta}^\delta \|_{\mathcal{L}(D( (\hepsiop_0)^n))} = \Or(\delta^{-\frac{1}{2}}),  
\end{equation}
and 
\begin{equation}\label{Pvacclose}
 \| \tilde P_{\rm vac}^\epsi  - P_j^\epsi \otimes Q_0 \|_{\mathcal{L}(D( (\hepsiop_0)^n))} =  \Or\big(\epsi^{\frac{3}{2}\beta }\delta^{-\frac{1}{2}} \big),
\end{equation}
\begin{equation}\label{Pvacsquare}
 \| \tilde P_{\rm vac}^\epsi \tilde P_{\rm vac}^\epsi - \tilde P_{\rm vac}^\epsi \|_{\mathcal{L}(D( (\hepsiop_0)^n))} =  \Or\big(\epsi^{3\beta}\delta^{-1}\big)\,.
\end{equation}
Moreover,  
\begin{equation}\label{P32commuRel}
    \left[P_{\frac{3}{2}\epsi}^\delta, \hepsiop_0  \right]  + [P_0\otimes Q_0, \hepsiop_1]   =  \delta^{\frac{1}{2}}\mathcal{T} + R,  
\end{equation}
where
\begin{equation*}
\mathcal{T} =     \I \delta^{\frac{1}{2}} (T_\delta+T_\delta^*) + 2\, \delta^{-\frac{1}{2}}\epsi  \nabla_x (T_\delta+T_\delta^*) \cdot \epsi\nabla_x   = \Or(1)
\end{equation*}
in $\mathcal{L}\big(D( (\hepsiop_0)^{n+1}),D( (\hepsiop_0)^n)\big)$ and
\begin{equation*}
\| R \|_{\mathcal{L}(D( (\hepsiop_0)^n))} = \Or(\epsi^2\delta^{-\frac{5}{2}} )\,.
\end{equation*}
 
\begin{proof} 
We will use $\delta\geq\epsi^{\frac{1}{2}}$ without noting it explicitly. It follows directly from (\ref{No1}) that $\|T_\delta\|=\|T_\delta^*\|=\Or(\delta^{-\frac{1}{2}})$, which yields~(\ref{P32norm1}) for $n=0$.
For $n=1$ notice that (\ref{No3}), (\ref{No2}), and $T_\delta(H_{\rm el} + H_{\rm f}) = T_\delta E_j$ imply that
\[
[ \hepsiop_0 , T_\delta ] = [ -\epsi^2\Delta_x + H_{\rm el} + H_{\rm f} ,T_\delta ] = [ -\epsi^2\Delta_x ,T_\delta ]- \hepsiop_1 P_{\rm vac}+\I\delta T_\delta
\]
is $\Or(1)$ in $\mathcal{L}(D(\hepsiop_0),\Hi)$ and thus by Lemma~\ref{NormDnLemma} also $T_\delta=\Or(\delta^{-\frac{1}{2}})$ in $\mathcal{L}(D(\hepsiop_0) )$.
In a similar way we find that also $T^*_\delta$ is $\Or(\delta^{-\frac{1}{2}} )$    in $\mathcal{L}(D(\hepsiop_0) )$.
 The estimate (\ref{Pvacclose}) follows immediately from (\ref{P32norm1}). For (\ref{Pvacsquare}) we note that
\begin{equation}\label{offdiagonal}
    P_{\rm vac}T_\delta \;=\; 0 \;=\; T_\delta^*P_{\rm vac}
\end{equation}
because $[P_{\rm vac}, H_{\rm el} + H_{\rm f}]=0$ and $\hepsiop_1$ creates a photon when applied to $P_{\rm vac}$.  
Then we have
  \begin{eqnarray*}\lefteqn{
  (P^\epsi \otimes Q_0 + \epsi^{\frac{3}{2}\beta} P^\delta_{\frac{3}{2}}\beta)(P^\epsi \otimes Q_0 + \epsi^{\frac{3}{2}\beta} P^\delta_{\frac{3}{2}}\beta)\;\;=}\\
  &=& P^\epsi \otimes Q_0 + \epsi^{\frac{3}{2}\beta}  \left( P^\delta_{\frac{3}{2}\beta}P^\epsi \otimes Q_0 + P^\epsi \otimes Q_0P^\delta_{\frac{3}{2}\beta}\right) +   \epsi^{3\beta} P^\delta_{\frac{3}{2}\beta}P^\delta_{\frac{3}{2}\beta}\\
  &=&  P^\epsi \otimes Q_0 + \epsi^{\frac{3}{2}\beta}   \left( P^\delta_{\frac{3}{2}\beta}P_0 \otimes Q_0 + P_0 \otimes Q_0P^\delta_{\frac{3}{2}\beta}\right) +  \Or\big(\epsi^{3\beta}\delta^{-1}\big) \\
  &\stackrel{(\ref{offdiagonal})}{=}& P^\epsi \otimes Q_0 + \epsi^{\frac{3}{2}\beta} P^\delta_{\frac{3}{2}\beta}  
  +  \Or\big(\epsi^{3\beta}\delta^{-1}\big)\,.
    \end{eqnarray*}
For    (\ref{P32commuRel}) note that according to (\ref{commucomp})
\[
 \left[P_{\frac{3}{2}\beta}^\delta, H_{\rm el} + H_{\rm f}  \right]  + [P_0\otimes Q_0, \hepsiop_1]   =  \I\delta (T+T^*)\,.
\]
With
\[
 \left[P_{\frac{3}{2}\beta}^\delta, -\epsi^2\Delta_x  \right]    =   2 \epsi (\nabla_x P_{\frac{3}{2}\epsi}^\delta) \cdot \epsi\nabla_x + \epsi^2 (\Delta_x P_{\frac{3}{2}\beta}^\delta)  
 \]
and $\|R\|_{\mathcal{L}(\Hi)} = \| \epsi^2 (\Delta_x P_{\frac{3}{2}\beta}^\delta) \|_{\mathcal{L}(\Hi)} = \Or(\epsi^2\delta^{-\frac{5}{2}} )$ we directly obtain (\ref{P32commuRel}) for $n=0$.
For $n=1$ it suffices to show $\|[\mathcal{   T} ,\hepsiop_0]\|_{\mathcal{L}(D(\hepsiop_0),\Hi)} = \Or(1)$. It holds 
\[
[\mathcal{   T} ,\hepsiop_0]=\I\delta^{\frac{1}{2}}[T_\delta+T_\delta^* ,\hepsiop_0]+\delta^{-\frac{1}{2}}\epsi[\nabla_x(T_\delta+T_\delta^*) ,\hepsiop_0]. 
\]
As shown in the proof of (\ref{P32norm1}) the first term is of order $\delta^{\frac{1}{2}}$. Analogously, it follows that the second term is of order $\epsi\delta^{-2}$. Hence, both are $\Or(1)$ because of $\epsi^{\frac{1}{2}}\leq\delta\leq1$.
Finally $\|R\|_{\mathcal{L}(D( \hepsiop_0)  )} = \| \epsi^2 (\Delta_x P_{\frac{3}{2}\beta}^\delta) \|_{\mathcal{L}(D( \hepsiop_0 ))} = \Or(\epsi^2\delta^{-\frac{5}{2}} + \epsi^2\delta^{-\frac{7}{2}}+ \epsi^4\delta^{-\frac{9}{2}} )$ follows from Lemma~\ref{NormDnLemma} and (\ref{No2}).
 \end{proof}
\end{corollary}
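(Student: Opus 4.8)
The plan is to derive all five estimates from the bounds \mbox{(\ref{No1})--(\ref{No2})} on $T_\delta$, from the already established fibered identity (\ref{commucomp}), and from the elementary observation (Lemma~\ref{NormDnLemma}) that for $n=0,1$ an operator $A$ is bounded in $\mathcal{L}(D((\hepsiop_0)^n))$ as soon as $A$ is bounded in $\mathcal{L}(\Hi)$ and $[\hepsiop_0,A]$ is bounded in $\mathcal{L}(D(\hepsiop_0),\Hi)$, with the corresponding norm bounds. Two facts will be used repeatedly: $\epsi\nabla_x$ is bounded uniformly in $\epsi$ as a map $D((\hepsiop_0)^{n+1})\to D((\hepsiop_0)^n)$, being one of the summands of $\hepsiop_0$; and the standing hypothesis $\delta\geq\epsi^{1/2}$ turns any quantity $\epsi^a\delta^{-b}$ with $a-\tfrac{b}{2}>0$ into $o(1)$.

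For (\ref{P32norm1}) with $n=0$ one simply invokes $\|P_{\frac{3}{2}\beta}^\delta\|\leq\|T_\delta\|+\|T_\delta^*\|=\Or(\delta^{-1/2})$ from (\ref{No1}). For $n=1$ I would compute $[\hepsiop_0,T_\delta]$ after splitting $\hepsiop_0=-\epsi^2\Delta_x+H_{\rm el}+\Hf$: since $T_\delta$ is $x$-fibered and $(H_{\rm el}+\Hf)P_{\rm vac}=E_jP_{\rm vac}$, the defining resolvent equation gives $(H_{\rm el}+\Hf)T_\delta=-\hepsiop_1P_{\rm vac}+(E_j-\I\delta)T_\delta$, hence $[H_{\rm el}+\Hf,T_\delta]=-\hepsiop_1P_{\rm vac}+\I\delta T_\delta=\Or(1)$ in $\mathcal{L}(\Hi)$; and $[-\epsi^2\Delta_x,T_\delta]=-2\epsi(\nabla_xT_\delta)\cdot\epsi\nabla_x-\epsi^2(\Delta_xT_\delta)=\Or(1)$ in $\mathcal{L}(D(\hepsiop_0),\Hi)$ by (\ref{No2}) and $\delta\geq\epsi^{1/2}$. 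Thus $[\hepsiop_0,T_\delta]=\Or(1)$ in $\mathcal{L}(D(\hepsiop_0),\Hi)$, and Lemma~\ref{NormDnLemma} yields $T_\delta=\Or(\delta^{-1/2})$ in $\mathcal{L}(D(\hepsiop_0))$; the same argument handles $T_\delta^*$. Then (\ref{Pvacclose}) follows at once, since $\tilde P_{\rm vac}^\epsi-P_j^\epsi\otimes Q_0=\epsi^{\frac{3}{2}\beta}P_{\frac{3}{2}\beta}^\delta$.

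For (\ref{Pvacsquare}) the algebraic input is off-diagonality, $P_{\rm vac}T_\delta=0$: indeed $P_{\rm vac}$ commutes with $H_{\rm el}+\Hf$, hence with the resolvent, while $\hepsiop_1$ is linear in the field and therefore maps $\mathrm{Ran}\,P_{\rm vac}$ into the one-photon sector, on which $Q_0$ vanishes; taking adjoints, $T_\delta^*P_{\rm vac}=0$, and trivially $T_\delta P_{\rm vac}=T_\delta$, so $P_{\rm vac}P_{\frac{3}{2}\beta}^\delta+P_{\frac{3}{2}\beta}^\delta P_{\rm vac}=P_{\frac{3}{2}\beta}^\delta$. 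Expanding $(\tilde P_{\rm vac}^\epsi)^2$ with $\tilde P_{\rm vac}^\epsi=P_j^\epsi\otimes Q_0+\epsi^{\frac{3}{2}\beta}P_{\frac{3}{2}\beta}^\delta$: the leading term $(P_j^\epsi\otimes Q_0)^2$ equals $P_j^\epsi\otimes Q_0$ since $P_j^\epsi$ is a projection; in the cross terms of order $\epsi^{\frac{3}{2}\beta}$ one replaces $P_j^\epsi$ by $P_j$ (error $\Or(\epsi)$ in $\mathcal{L}(D((\hepsiop_0)^n))$ by Proposition~\ref{BOprop}, times $\|P_{\frac{3}{2}\beta}^\delta\|=\Or(\delta^{-1/2})$) and then uses the identity just established, so these reconstruct $\epsi^{\frac{3}{2}\beta}P_{\frac{3}{2}\beta}^\delta$, which cancels against the same term in $\tilde P_{\rm vac}^\epsi$; and $\epsi^{3\beta}(P_{\frac{3}{2}\beta}^\delta)^2=\Or(\epsi^{3\beta}\delta^{-1})$. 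Collecting the errors yields $(\tilde P_{\rm vac}^\epsi)^2-\tilde P_{\rm vac}^\epsi=\Or(\epsi^{3\beta}\delta^{-1})$ in $\mathcal{L}(D((\hepsiop_0)^n))$, $n=0,1$.

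Finally, for (\ref{P32commuRel}) I would start from (\ref{commucomp}), $[P_{\frac{3}{2}\beta}^\delta,H_{\rm el}+\Hf]+[P_{\rm vac},\hepsiop_1]=\I\delta(T_\delta+T_\delta^*)$, and add $[P_{\frac{3}{2}\beta}^\delta,-\epsi^2\Delta_x]=2\epsi(\nabla_xP_{\frac{3}{2}\beta}^\delta)\cdot\epsi\nabla_x+\epsi^2(\Delta_xP_{\frac{3}{2}\beta}^\delta)$; the sum is exactly $\delta^{1/2}\mathcal{T}+R$ with $\delta^{1/2}\mathcal{T}=\I\delta(T_\delta+T_\delta^*)+2\epsi(\nabla_x(T_\delta+T_\delta^*))\cdot\epsi\nabla_x$ and $R=\epsi^2(\Delta_x(T_\delta+T_\delta^*))$. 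That $\mathcal{T}=\Or(1)$ in $\mathcal{L}(D((\hepsiop_0)^{n+1}),D((\hepsiop_0)^n))$ follows for $n=0$ from $\delta^{1/2}\|T_\delta\|=\Or(1)$ and from $\|\epsi\nabla_x(T_\delta+T_\delta^*)\|=\Or(\epsi\delta^{-3/2})$ by (\ref{No2}), combined with the uniform bound on $\epsi\nabla_x$ between the graph spaces, so that $\delta^{-1/2}\cdot\epsi\delta^{-3/2}=\Or(\epsi\delta^{-2})=\Or(1)$; for $n=1$ one applies Lemma~\ref{NormDnLemma} once more to $\mathcal{T}$, again controlled by (\ref{No2}). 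And $R=\Or(\epsi^2\delta^{-5/2})$ in $\mathcal{L}(\Hi)$ by (\ref{No2}) with $|\alpha|=2$, and in $\mathcal{L}(D(\hepsiop_0))$ by a final application of Lemma~\ref{NormDnLemma}, which brings in $x$-derivatives of $T_\delta$ of order at most four, still dominated when $\delta\geq\epsi^{1/2}$. I expect the main obstacle to be precisely this bookkeeping of passing $\Hi$-bounds to the $\hepsiop_0$-graph norms: each such step introduces one further $-\epsi^2\Delta_x$-commutator, hence one more $x$-derivative on $T_\delta$ and an extra factor $\delta^{-1}$, and one must verify at every stage that the accompanying power of $\epsi$ keeps the term subleading --- this is exactly where the derivative bounds of the preceding lemma and the hypothesis $\delta\geq\epsi^{1/2}$ enter in an essential way.
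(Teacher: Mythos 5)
Your proposal is correct and follows essentially the same route as the paper's own proof: the $n=0$ bounds from (\ref{No1}), the $n=1$ bounds via $[\hepsiop_0,T_\delta]=[-\epsi^2\Delta_x,T_\delta]-\hepsiop_1P_{\rm vac}+\I\delta T_\delta$ and Lemma~\ref{NormDnLemma}, the off-diagonality $P_{\rm vac}T_\delta=0=T_\delta^*P_{\rm vac}$ to reconstruct the cross terms in $(\tilde P^\epsi_{\rm vac})^2$, and the split of $[P^\delta_{\frac{3}{2}\beta},\hepsiop_0]$ into the fibered part (\ref{commucomp}) plus $2\epsi(\nabla_xP^\delta_{\frac{3}{2}\beta})\cdot\epsi\nabla_x+\epsi^2(\Delta_xP^\delta_{\frac{3}{2}\beta})$ are all exactly the paper's steps. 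The only differences are cosmetic (you track the $P^\epsi\to P_0$ replacement error slightly more explicitly), so nothing further is needed.
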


\begin{lemma} It holds that
\begin{equation}\label{PvacschlangecomRel}
\| [ \hepsiop ,  \tilde P_{\rm vac}^\epsi ]   \|_{\mathcal{L}(D(\hepsiop_0), \Hi)} = \Or\big(\epsi + \epsi^{\frac{3}{2}\beta  }\delta^{\frac{1}{2}}\big)
\end{equation}
and
\begin{equation}\label{Pvacschlangecom}
[ \tilde P_{\rm vac}^\epsi , \hepsiop ] \,{\bf 1}_{E+\frac{1}{2}}(\hepsiop_0 )= \epsi^{\frac{3}{2}\beta  } \delta^{\frac{1}{2}}\mathcal{T}    \,{\bf 1}_{E+\frac{1}{2}}(\hepsiop_0 ) + \Or( \epsi^{\frac{3}{2}\beta+2  } \delta^{-\frac{5}{2}}  ) = \Or\big(  \epsi^{\frac{3}{2}\beta   } \delta^{ \frac{1}{2}}\big)
\end{equation}
in $\mathcal{L}(\Hi)$.
As a consequence,
\begin{equation}\label{PvacschlangecomChi}
\left\| \left[\tilde\chi(\hepsiop ),  \tilde P_{{\rm vac}}^\epsi   \right] \right\|_{\mathcal{L}(\Hi)} = \Or(  \epsi^{\frac{3}{2}\beta   } \delta^{ \frac{1}{2}}  )
\end{equation}
for any smooth $\tilde \chi$ with compact support in $(-\infty,E+\frac{1}{2})$.

\begin{proof} 
We first show (\ref{Pvacschlangecom}). Due to (\ref{P32commuRel}) we only need to check (\ref{Assu1})--(\ref{Assu4}). The estimates (\ref{Assu2}) and (\ref{Assu4}) follow from (\ref{CommuField}) and (\ref{PepsinahePField}) respectively. Since $\epsi^{\frac{3}{2}\beta} \delta^{- \frac{1}{2}}<1$ for $\beta>5/6$ and $\delta>\epsi^{1/2}$, $\tilde P_{\rm vac}^\epsi$ is $\Or(1)$ in $\mathcal{L}(\Hi)$ as well as in $\mathcal{L}(D(\hepsiop_0))$ by (\ref{P32norm1}). So (\ref{Assu1}) follows from the fact that $\hepsiop_2$ is uniformly bounded from $D(\hepsiop_0)$ to $\Hi$.
Since $\epsi^{3\beta} \delta^{- \frac{1}{2}}<\epsi^{\frac{3}{2}\beta+1}$ for $\beta>5/6$ and $\delta>\epsi^{1/2}$, (\ref{Assu3}) also follows from (\ref{P32norm1}). Now (\ref{PvacschlangecomChi}) directly follows from Lemma~\ref{ChiLemma}.
   
    For  (\ref{PvacschlangecomRel}) we apply exactly the same reasoning as for (\ref{Pvacschlangecom}), however, with (\ref{CommuFieldRel}) instead of (\ref{CommuField}), which worsens the bound.
 \end{proof}
\end{lemma}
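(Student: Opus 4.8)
The plan is to expand $[\tilde P_{\rm vac}^\epsi,\hepsiop]$ according to the perturbative splittings $\hepsiop=\hepsiop_0+\epsi^{\frac32\beta}\hepsiop_1+\epsi^{\frac32\beta+1}\hepsiop_2$ of (\ref{hepsiop}) and $\tilde P_{\rm vac}^\epsi=P^\epsi\otimes Q_0+\epsi^{\frac32\beta}P_{\frac32\beta}^\delta$, and to show that every resulting contribution except one designated leading term is of strictly lower order. First I would record the standing a priori bounds: since $\epsi^{\frac32\beta}\delta^{-\frac12}<1$ for $\beta>\frac56$ and $\epsi^{\frac12}\le\delta\le1$, (\ref{P32norm1}) gives $\tilde P_{\rm vac}^\epsi=\Or(1)$ in $\mathcal{L}(\Hi)$ and in $\mathcal{L}(D(\hepsiop_0))$; $\hepsiop_1,\hepsiop_2$ are uniformly $\hepsiop_0$-bounded by Lemma~\ref{SALemma}; and $[\hepsiop_0,P^\epsi\otimes Q_0]=[\hepsiop_{\rm mol},P^\epsi]\otimes Q_0$ because $Q_0$ commutes with $H_{\rm f}$.

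The core of both estimates is the handling of the $P_{\frac32\beta}^\delta$-terms, which is common to (\ref{PvacschlangecomRel}) and (\ref{Pvacschlangecom}). Grouping $\epsi^{\frac32\beta}\big([\hepsiop_0,P_{\frac32\beta}^\delta]+[\hepsiop_1,P_j\otimes Q_0]\big)$ and invoking (\ref{P32commuRel}), this equals $-\epsi^{\frac32\beta}\delta^{\frac12}\mathcal{T}-\epsi^{\frac32\beta}R$ with $\|\mathcal{T}\|=\Or(1)$ and $\|R\|=\Or(\epsi^2\delta^{-\frac52})$, and the $R$-part $\epsi^{\frac32\beta+2}\delta^{-\frac52}$ is absorbed into $\epsi^{\frac32\beta}\delta^{\frac12}$ because $\delta\ge\epsi^{\frac12}\ge\epsi^{\frac23}$. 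The leftover cross-terms are $\epsi^{\frac32\beta}[\hepsiop_1,(P^\epsi-P_j)\otimes Q_0]=\Or(\epsi^{\frac32\beta+1})$ by (\ref{PepsinahePField}), $\epsi^{3\beta}[\hepsiop_1,P_{\frac32\beta}^\delta]=\Or(\epsi^{3\beta}\delta^{-\frac12})$ by (\ref{P32norm1}), and $\epsi^{\frac32\beta+1}[\hepsiop_2,\tilde P_{\rm vac}^\epsi]=\Or(\epsi^{\frac32\beta+1})$ by the a priori bounds; using $\beta>\frac56$ and $\delta\ge\epsi^{\frac12}$ one has $\epsi^{\frac32\beta+1}\le\epsi^{\frac32\beta}\delta^{\frac12}$ (from $\delta\ge\epsi^2$) and $\epsi^{3\beta}\delta^{-\frac12}\le\epsi^{\frac32\beta}\delta^{\frac12}$ (from $\delta\ge\epsi^{\frac32\beta}$), so all of them are $\Or(\epsi^{\frac32\beta}\delta^{\frac12})$.

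The only remaining term is $[\hepsiop_0,P^\epsi\otimes Q_0]=[\hepsiop_{\rm mol},P^\epsi]\otimes Q_0$, and here the two claims part. Without energy cutoff, (\ref{CommuFieldRel}) bounds it only by $\Or(\epsi)$ in $\mathcal{L}(D(\hepsiop_0),\Hi)$, giving (\ref{PvacschlangecomRel}). Under ${\bf 1}_{E+\frac12}(\hepsiop_0)$ it becomes $([\hepsiop_{\rm mol},P^\epsi]{\bf 1}_{E+\frac12}(\hepsiop_{\rm mol})\otimes 1)(1\otimes Q_0)=\Or(\epsi^3)$ by (\ref{CommuField}), which for $\beta\le\frac43$ is absorbed; hence $[\tilde P_{\rm vac}^\epsi,\hepsiop]\,{\bf 1}_{E+\frac12}(\hepsiop_0)=\epsi^{\frac32\beta}\delta^{\frac12}\mathcal{T}\,{\bf 1}_{E+\frac12}(\hepsiop_0)+\Or(\epsi^{\frac32\beta+1}(1+\epsi\delta^{-\frac52}))=\Or(\epsi^{\frac32\beta}\delta^{\frac12})$, which is (\ref{Pvacschlangecom}). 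For (\ref{PvacschlangecomChi}), pick a compact interval $I$ with $\mathrm{supp}\,\tilde\chi\subset I\subset(-\infty,E+\tfrac12)$ and $\chi_1\in C_0^\infty$ with $\chi_1\equiv1$ on $I$, $\mathrm{supp}\,\chi_1\subset(-\infty,E+\tfrac12)$; then ${\bf 1}_I(\hepsiop)=\chi_1(\hepsiop){\bf 1}_I(\hepsiop)$, and by Lemma~\ref{ChiLemma} ($\hepsiop-\hepsiop_0=\Or(\epsi^{\frac32\beta})$ in $\mathcal{L}(D_0,\Hi)$) one may replace $\chi_1(\hepsiop)$ by $\chi_1(\hepsiop_0)=\chi_1(\hepsiop_0){\bf 1}_{E+\frac12}(\hepsiop_0)$ up to $\Or(\epsi^{\frac32\beta})$ in $\mathcal{L}(\Hi,D_0)$; combining (\ref{Pvacschlangecom}) with the crude bound (\ref{PvacschlangecomRel}) then yields $\|[\hepsiop,\tilde P_{\rm vac}^\epsi]\,{\bf 1}_I(\hepsiop)\|_{\mathcal{L}(\Hi)}=\Or(\epsi^{\frac32\beta}\delta^{\frac12})$, and Lemma~\ref{CommuLemma} with $n=0$ (applied to $H=\hepsiop$, $A=\tilde P_{\rm vac}^\epsi$) upgrades this to (\ref{PvacschlangecomChi}).

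The genuine analytic content — the resolvent bounds on $T_\delta$ behind (\ref{P32norm1}) and the identity (\ref{P32commuRel}) — is imported, so within the lemma the only point requiring care is to keep ${\bf 1}_{E+\frac12}(\hepsiop_0)$ attached wherever one uses the smallness of $[\hepsiop_{\rm mol},P^\epsi]$: off the energy shell this commutator is only $\Or(\epsi)$, which is precisely why (\ref{PvacschlangecomRel}) carries the extra $\Or(\epsi)$ and why (\ref{PvacschlangecomChi}) must be routed through a smooth cutoff and Lemma~\ref{CommuLemma} rather than a direct manipulation of $\tilde\chi(\hepsiop)$. The rest is the short list of monomial inequalities among $\epsi$, $\delta$, $\epsi^\beta$ that follow from $\epsi^{\frac12}\le\delta\le1$ and $\frac56<\beta\le\frac43$.
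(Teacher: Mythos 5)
Your proof is correct and follows the same underlying route as the paper's: reduce the commutator to the conditions (\ref{Assu1})--(\ref{Assu4}), invoke (\ref{P32commuRel}) to produce the $\epsi^{\frac32\beta}\delta^{\frac12}\mathcal{T}$ leading term, verify (\ref{Assu1}), (\ref{Assu3}) via (\ref{P32norm1}) and the monomial inequalities from $\delta\ge\epsi^{1/2}$, $\beta>\frac56$, and get (\ref{Assu2}), (\ref{Assu4}) from (\ref{CommuField}) (respectively (\ref{CommuFieldRel}) without cutoff) and (\ref{PepsinahePField}). The one place where you diverge usefully from the paper is (\ref{PvacschlangecomChi}): the paper simply says it "directly follows from Lemma~\ref{ChiLemma}", but Lemma~\ref{ChiLemma} only controls $\tilde\chi(\hepsiop)-\tilde\chi(\hepsiop_0)$, not a commutator, and a naive split $[\tilde\chi(\hepsiop),\tilde P_{\rm vac}^\epsi]=[\tilde\chi(\hepsiop)-\tilde\chi(\hepsiop_0),\tilde P_{\rm vac}^\epsi]+[\tilde\chi(\hepsiop_0),\tilde P_{\rm vac}^\epsi]$ destroys the cancellation built into $P_{\frac32\beta}^\delta$ and gives only $\Or(\epsi^{\frac32\beta})$ rather than the claimed $\Or(\epsi^{\frac32\beta}\delta^{\frac12})$. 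Your route — first converting the $\hepsiop_0$-cutoff in (\ref{Pvacschlangecom}) to an $\hepsiop$-cutoff via Lemma~\ref{ChiLemma} together with the crude bound (\ref{PvacschlangecomRel}), and then applying Lemma~\ref{CommuLemma} with $H=\hepsiop$, $n=0$ — is the argument the paper actually needs here; the paper's citation is at best shorthand for this combination. The remaining minor point is that for Lemma~\ref{CommuLemma} you must take $\tilde I$ strictly inside $I$ with distinct endpoints, which you gesture at but should make explicit; $\tilde P_{\rm vac}^\epsi$ being bounded and self-adjoint (which that lemma requires) does hold since $P^\epsi$, $Q_0$ and $T_\delta+T_\delta^*$ are all self-adjoint.
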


Now Lemma~\ref{ProjectorLemma} applied to $\tilde P^\epsi_{\rm vac}$ with  $n=1$, $\delta  \sim \epsi^{3\beta}\delta^{-1}$, $\delta_1\sim\epsi+\epsi^{\frac{3}{2}\beta   } \delta^{ \frac{1}{2}} $ and $\beta_1\sim \epsi^{\frac{3}{2}\beta   } \delta^{ \frac{1}{2}}$  
yields a projector $P^\epsi_{\rm vac}$   with all the properties claimed in  Propositions~\ref{BOpropVacField}~and~\ref{BOpropVacField2}.

\section{Proofs of Lemmas}\label{Lemmas}

\subsection*{Proof of Lemma~\ref{SALemma}}
The statement about the   potentials is standard. 
Using for examples the estimates contained in Proposition~$1$ and in the proof of Proposition~$2$ of \cite{Ten} we can show easily that
\begin{equation*}
\norm{A ^{(i)}(\alpha y_{j})\cdot p_{j, y, (i)}}_{\mathcal{L}(D_{0}, \Hi)} = \norm{\Phi(v_{\alpha y}^{(i)})\cdot p_{j, y, (i)}}_{\mathcal{L}(D_{0}, \Hi)}\leq C \norm{v_{\alpha y}^{(i)}}_{\omega},
\end{equation*}
where 
\begin{equation*}\begin{split}
v_{\alpha y}^{(i)}(k, \lambda) := \frac{\hat{\varphi}(\mu k)}{\abs{k}^{1/2}}e_{\lambda}^{(i)}(k)\E^{\I k\cdot\alpha y}
\end{split}
\end{equation*}
and, given a function $f\in L^{2}(\field{R}{3}\otimes\field{C}{2})$, 
\begin{equation*}
\norm{f}_{\omega}:= \big(\norm{f\abs{k}^{-1/2}}^{2}_{L^{2}(\field{R}{3}\otimes\field{C}{2})} + \norm{f}^{2}_{L^{2}(\field{R}{3}\otimes\field{C}{2})}\big)^{1/2} \, .
\end{equation*}
Using these explicit expressions we get then
\[
\norm{A ^{(i)}(\alpha y_{j})\cdot p_{j, y, (i)}}_{\mathcal{L}(D_{0}, \Hi)}\leq C \Lambda\mu^{-1} = C\frac{\Lambda}{2\me\alpha^{2}} =   C\Lambda_{0}\,.
\]
In the same way we have
\begin{equation*}
\norm{:A (\alpha y_{j})^{2}:}^{2}_{\mathcal{L}(D_{0}, \Hi)}\leq C\norm{v_{\alpha y}^{(i)}}_{\omega}^{2}\leq C (\Lambda\mu^{-1})^{2} = C\Lambda_{0}^{2}\,.
\end{equation*}
Identical results hold for the coefficients of the Hamiltonian containing the nuclear coordinates, so all the coefficients in $\hepsiop$ can be bounded with an $\epsi$-independent bound in terms of $\hzeroepsiop$ or $\hfree$.

\subsection*{Proof of Lemma~\ref{TensorLemma}}
 
Since $H_f$ is nonnegative and since $\hepsiop_{\rm mol}\otimes 1$ and $1\otimes H_{\rm f} $   commute,  we have that
\begin{equation}\label{prodest1}
(\hepsiop_{\rm mol} \otimes 1)^n \leq (\hepsiop_{\rm mol} \otimes 1  + 1\otimes H_{\rm f})^n = (\hepsiop_0)^n
\end{equation}
for any $n\in \N$. To estimate tensor product operators in $\mathcal{L}(D_0^n)$, the following characterization of this operator norm will be useful.

\begin{lemma} \label{GraphNormLemma}
Let $(H,D(H))$ be self-adjoint and 
\[
D^n := \{\psi \in D(H)\,|\, H^{k}\psi \in D(H)\mbox{ for } k=1,\ldots n-1\}
\]
be equipped with the graph norm
\[
\| \psi \|_{D^n} := \sum_{j=0}^n \| H^j \psi \|\,.
\]
Then $(D^n, \| \cdot\|_{D^n})$ is a Banach space, 
\[
D^n_R := \{ (H+\I)^{-n} \psi\,|\,\psi\in \Hi\} = D^n
\]
and for $A\in \mathcal{L}(D^n,D^m)$ the operator norm $\| A \|_{\mathcal{L}(D^n,D^m)}$ is equivalent to the norm
\[
\| A\|_{R(n,m)} := \sum_{j=0}^m \| H^j A (H+\I)^{-n}\|_{\mathcal{L}(\Hi)}\,.
\]
More precisely, there are constants $C_n$ depending only on $n$ (not on $H$ or $A$), such that
\[
{\textstyle \frac{1}{1+m}} \|A\|_{R(n,m)}\;\leq\; \|A\|_{\mathcal{L}(D^n,D^m)} \;\leq\; C_n \|A\|_{R(n,m)}\,.
\]
\begin{proof}
Since $H$ is self-adjoint, it is closed and therefore  $(D^n, \| \cdot\|_{D^n})$ is a Banach space.
Let $\psi\in D^n$, then $(H+\I)^n \psi\in \Hi$ and thus $\psi\in D^n_R$. Let conversely $\psi = (H+\I)^{-n}\phi \in D^n_R$, then
$H^k\psi\in \Hi$ for $k\leq n$ since $H(H+\I)^{-1} \in \mathcal{L}(\Hi)$.
For the norms observe that
\begin{eqnarray*}
\| A\psi \|_{D^m} &=& \sum_{j=0}^m \| H^j A\psi\| \;=\;  \sum_{j=0}^m \| H^j A  (H+\I)^{-n} (H+\I)^{n} \psi\|\\
&\leq& \sum_{j=0}^m \| H^j A  (H+\I)^{-n}\|_{\mathcal{L}(\Hi)} \| (H+\I)^{n} \psi\| \;\;\leq \;\; C_n \|A\|_{R(n,m)} \|\psi\|_{D^n}
\end{eqnarray*}  
and thus
\[
\|A\|_{\mathcal{L}(D^n,D^m)}\leq C_n \|A\|_{R(n,m)}\,.
\]
Conversely for $j\leq m$
\begin{eqnarray*}
\|H^j A (H+\I)^{-n} \psi \| &\leq& \|A(H+\I)^{-n}\psi \|_{D^m} \leq \|A\|_{\mathcal{L}(D^n,D^m)}  \| (H+\I)^{-n}\psi \|_{D^n}\\
&\leq &  \|A\|_{\mathcal{L}(D^n,D^m)} \|\psi\|\,,
\end{eqnarray*}  
where we use $\|(H+\I)^{-1}\|_{\mathcal{L}(D^j, D^{j+1})} = 1$. Thus
\[
\|A\|_{R(n,m)} \leq (m+1) \|A\|_{\mathcal{L}(D^n,D^m)}\,.
\]
\end{proof}
\end{lemma}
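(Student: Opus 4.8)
The plan is to funnel everything through the single bounded operator $(H+\I)^{-n}$, together with the functional-calculus observation that damping a power of $H$ by enough copies of $(H+\I)^{-1}$ produces a contraction. Concretely, since $H$ is self-adjoint its spectrum is real, so $(H\pm\I)^{-1}\in\mathcal{L}(\Hi)$, and by the spectral theorem
\[
\|H^{j}(H+\I)^{-n}\|_{\mathcal{L}(\Hi)}\;\le\;\sup_{\lambda\in\R}\frac{|\lambda|^{j}}{(1+\lambda^{2})^{n/2}}\;\le\;1\qquad(0\le j\le n).
\]
Expanding $(H+\I)^{k}$ by the binomial theorem shows that $\psi$ lies in the domain of $(H+\I)^{n}$ precisely when $H^{k}\psi\in D(H)$ for $k=0,\dots,n-1$, i.e.\ precisely when $\psi\in D^{n}$; since $(H+\I)^{n}\colon D^{n}\to\Hi$ is then a bijection with bounded inverse $(H+\I)^{-n}$, this already gives $D^{n}_{R}=D^{n}$.

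Next I would settle completeness and the auxiliary norm equivalence $\|\cdot\|_{D^{n}}\sim\|(H+\I)^{n}\cdot\|$. For completeness: if $(\psi_{p})$ is $\|\cdot\|_{D^{n}}$-Cauchy, then $(H^{j}\psi_{p})$ converges in $\Hi$ for every $j\le n$, and applying the closedness of $H$ repeatedly — first to $\psi_{p}\to\chi_{0}$, $H\psi_{p}\to\chi_{1}$, then to $H\psi_{p}$, $H^{2}\psi_{p}$, and so on — shows $\chi_{0}\in D^{n}$ with $H^{j}\chi_{0}=\chi_{j}$, hence $\psi_{p}\to\chi_{0}$ in $\|\cdot\|_{D^{n}}$. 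For the norm equivalence, write $\psi=(H+\I)^{-n}(H+\I)^{n}\psi$ and apply the contraction bound termwise to get $\|\psi\|_{D^{n}}=\sum_{j=0}^{n}\|H^{j}(H+\I)^{-n}(H+\I)^{n}\psi\|\le(n+1)\|(H+\I)^{n}\psi\|$, while the binomial expansion gives $\|(H+\I)^{n}\psi\|\le\sum_{k=0}^{n}\binom{n}{k}\|H^{k}\psi\|\le 2^{n}\|\psi\|_{D^{n}}$. In particular $\|(H+\I)^{-n}\|_{\mathcal{L}(\Hi,D^{n})}\le n+1$.

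The equivalence of $\|A\|_{\mathcal{L}(D^{n},D^{m})}$ and $\|A\|_{R(n,m)}$ is then a short computation. For the upper bound, given $\psi\in D^{n}$ put $\phi:=(H+\I)^{n}\psi\in\Hi$, so that $\|A\psi\|_{D^{m}}=\sum_{j=0}^{m}\|H^{j}A(H+\I)^{-n}\phi\|\le\|A\|_{R(n,m)}\|\phi\|\le 2^{n}\|A\|_{R(n,m)}\|\psi\|_{D^{n}}$, whence $\|A\|_{\mathcal{L}(D^{n},D^{m})}\le C_{n}\|A\|_{R(n,m)}$. For the lower bound, given $\phi\in\Hi$ put $\psi:=(H+\I)^{-n}\phi\in D^{n}$, with $\|\psi\|_{D^{n}}\le(n+1)\|\phi\|$; then for each $j\le m$, $\|H^{j}A(H+\I)^{-n}\phi\|=\|H^{j}A\psi\|\le\|A\psi\|_{D^{m}}\le\|A\|_{\mathcal{L}(D^{n},D^{m})}\|\psi\|_{D^{n}}$, and summing over $j=0,\dots,m$ gives $\|A\|_{R(n,m)}\le C\,\|A\|_{\mathcal{L}(D^{n},D^{m})}$ with an $H$-independent constant $C$. (The precise constants $\tfrac{1}{1+m}$ and $C_{n}$ in the statement are a matter of bookkeeping; only their independence of $H$ and $A$ is used later.)

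The only genuine point of care — and therefore the ``main obstacle,'' modest as it is — is that every constant stay independent of the particular self-adjoint operator $H$ (and of $A$). This is precisely why the argument is routed through $(H+\I)^{-n}$: the estimates $\|H^{j}(H+\I)^{-n}\|\le1$ and $\|(H+\I)^{n}\psi\|\le 2^{n}\|\psi\|_{D^{n}}$ are purely spectral and hold with absolute constants, whereas the unbounded powers $H^{j}$ themselves carry no such uniformity. The secondary subtlety, dealt with by the binomial expansion, is the identification $D^{n}=D((H+\I)^{n})$, which is what makes $(H+\I)^{-n}$ a genuine isomorphism $\Hi\to D^{n}$ and lets the norm bookkeeping close.
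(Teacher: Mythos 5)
Your proof is correct and follows essentially the same route as the paper's: both identify $D^n$ with $D((H+\I)^n)$ and factor $A$ through the isomorphism $(H+\I)^{-n}:\Hi\to D^n$, estimating $\|H^jA(H+\I)^{-n}\|$ against $\|A\|_{\mathcal{L}(D^n,D^m)}$ and vice versa; you merely spell out the spectral contraction bounds, the binomial identification, and the completeness argument that the paper leaves implicit. Your lower-bound constant comes out as $(m+1)(n+1)$ rather than the paper's sharper $(m+1)$, but as you correctly note, only the independence of the constants from $H$ and $A$ is ever used.
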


So the $\mathcal{L}( D^n_0 , D^m_0  )$-norm of an operator $B\otimes 1$ for $m\leq n$ is estimated by
\[
\| B\otimes 1 \|_{\mathcal{L}( D^n_0,D^m_0 )} \leq C_n \sum_{j=0}^m \| (\hepsiop_0)^j \,(B\otimes 1) \, (\hepsiop_0+\I)^{-n}\|\,.
\]
This will turn out useful, since for bounded operators on Hilbert spaces $\|A\otimes B\|= \|A\|\cdot \|B\|$.
Let's look at a single  term in the sum more closely,
\begin{eqnarray*}\lefteqn{\hspace{-.5cm}
(\hepsiop_0)^j \,(B\otimes 1) \, (\hepsiop_0+\I)^{-n}  \;=\; (\hepsiop_{\rm mol}\otimes 1 + 1\otimes H_{\rm f} )^j \,(B\otimes 1) \, (\hepsiop_0+\I)^{-n}}\\
&=& \sum_{\ell = 0}^j  {\textstyle { j \choose \ell} }\left( (\hepsiop_{\rm mol})^\ell \otimes 1\right) \left( 1 \otimes H_f^{j-\ell}\right) (B\otimes 1) \, (\hepsiop_0+\I)^{-n}\\
&=& \sum_{\ell = 0}^j  {\textstyle { j \choose \ell} }\left( (\hepsiop_{\rm mol})^\ell B \otimes H_f^{j-\ell}\right)  \left(  (\hepsiop_{\rm mol}+\I)^{-(\ell+n-j)}\otimes (H_{\rm f} +\I)^{-(j-\ell)}    \right)  \,\times\\
&& \hspace{3cm}\times  \left(  (\hepsiop_{\rm mol}+\I)^{\ell+n-j}\otimes (H_{\rm f} +\I)^{j-\ell}    \right)   (\hepsiop_0+\I)^{-n}\\
&=&\sum_{\ell = 0}^j  {\textstyle { j \choose \ell} }\left( (\hepsiop_{\rm mol})^\ell B (\hepsiop_{\rm mol}+\I)^{-(\ell+n-j)} \otimes H_f^{j-\ell}(H_{\rm f} +\I)^{-(j-\ell)}\right)   \,\times\\
&& \hspace{3cm}\times  \left(  (\hepsiop_{\rm mol}+\I)^{\ell+n-j}\otimes (H_{\rm f} +\I)^{j-\ell}    \right)   (\hepsiop_0+\I)^{-n}\,.
\end{eqnarray*}
Since 
\[
 \left(  (\hepsiop_{\rm mol}+\I)^{\ell+n-j}\otimes (H_{\rm f} +\I)^{j-\ell}    \right)   (\hepsiop_0+\I)^{-n}\quad\mbox{and} \quad H_{\rm f}^{j-\ell}(H_{\rm f} +\I)^{-(j-\ell)}
\]
are bounded uniformly in $\epsi$ due to (\ref{prodest1}), it suffices to control terms of the form
\[
 (\hepsiop_{\rm mol})^\ell B (\hepsiop_{\rm mol}+\I)^{-(\ell+n-j)}\,.
\]
By Lemma~\ref{GraphNormLemma} these are controlled again in terms of $\|B\|_{\mathcal{L}(D_{\rm mol}^{\ell+n-j}, D_{\rm mol}^\ell)  }$.

\subsection*{Proof of Lemma~\ref{ChiLemma}}
 The assumption $\| A\|_{\mathcal{L} ( D_0 ,\Hi)}   \leq \delta <1$ implies that for $\psi\in D$
 \[
 \|A\psi \| \leq \delta \|\psi\|_{D_0} = \delta ( \| H_0\psi\|+\|\psi\|)
 \]
 and thus $A$ is $H_0$-bounded with relative bound smaller than $1$.
 The equivalence of the norms follows from
 \[
 \|\psi \|_{D_H} = \| H\psi\| + \|\psi\| \leq \|H_0\psi\| + \|\psi\| + \|A\psi\| \leq \|\psi\|_{D_0} (1+\delta)
 \]
 and
 \[
  \|\psi \|_{D_0} = \| H_0 \psi\| + \|\psi\| \leq \|H \psi\| + \|\psi\| + \|A\psi\| \leq \|\psi\|_{D_H} + \delta \|\psi\|_{D_0}\,.
 \]
 The last claim  follows from the Helffer-Sj\"ostrand formula 
 \[
 \tilde \chi(H) = \frac{1}{\pi}\int_\C \partial_{\bar z} \hat \chi(z)\, (H-z)^{-1}\,\D z\,,
 \]
 where $ \hat \chi $ is an appropriate almost-analytic extension of $\tilde \chi$, 
 and the resolvent formula
 \begin{eqnarray*} 
 \left\| (H_0 -z)^{-1} - (H- z)^{-1} \right\|_{\mathcal{L} ( \Hi,D)} & =& \left\| (H- z)^{-1}  A \,(H_0 -z)^{-1}\right\|_{\mathcal{L} ( \Hi,D)} \\
 &\leq &  \left\| (H- z)^{-1}\right\|_{\mathcal{L} ( \Hi,D)} \left\|  A \right\|_{\mathcal{L} ( D,\Hi )}  \left\| (H_0 -z)^{-1}\right\|_{\mathcal{L} ( \Hi,D)}\\ &
 \leq & \delta (1+\delta) \left( 1 +  \frac{ |z|+1}{|{\rm Im}z| }\right)^2 \,.
 \end{eqnarray*}

 \subsection*{Proof of Lemma~\ref{derivPj}}
Due to the smearing of the nuclear charge it holds $\vrm{nn},V_{\rm en}\in C^\infty_{\rm b}(\R^{3l},C^\infty_{\rm b}(\R^{3r}))$. Note that
\[
\big[\nabla_x,\big(H_{\rm el}(x)-z\big)^{-1}\big] \;=\; \big(H_{\rm el}(x)-z\big)^{-1}\big(\nabla_x\vrm{nn}(x)+\nabla_xV_{\rm en}(x)\big)\big(H_{\rm el}(x)-z\big)^{-1}.
\]
Thus the mapping $x\mapsto (H_{\rm el}(x)-z)^{-1}$ is in $C^1_{\rm b}\big(\R^{3l},\mathcal{L}(\Hi_{\rm el})\big)$. 
Since $E_j$ is separated by a gap, the projection $P_j(x)$ associated to $E_j(x)$ is given via the Riesz formula: 
\begin{eqnarray*}
 P_j(x) &=& \frac{\I}{2\pi}\oint_{\gamma(x)} \big(H_{\rm el}(x)-z\big)^{-1}\,\D z,
 \end{eqnarray*} 
 where $\gamma(x)$ is positively oriented closed curve encircling $E_j(x)$ once. It can be chosen independent of $x$ locally because the gap condition is uniform.
 Therefore $(H_{\rm el}(\cdot)-z)^{-1}\in C^1_{\rm b}\big(\R^{3l},\mathcal{L}(\Hi_{\rm el}) \big)$ entails that $P_j\in C^1_{\rm b}\big(\R^{3l},\mathcal{L}(\Hi_{\rm el}) \big)$. By
 \begin{eqnarray*}
 E_j(x)P_j(x) \ \;=\ \;H_{\rm el}(x)P_j(x) &=& \frac{\I}{2\pi}\oint_{\gamma(x)} z\big(H_{\rm el}(x)-z\big)^{-1}\,\D z
 \end{eqnarray*} 
 we obtain $E_jP_j\in C^1_{\rm b}(\R^{3l},\mathcal{L}(\Hi_{\rm el}))$. Then $E_j={\rm tr}_{L^2(\R^{3s})}\big(E_jP_j\big)\in C^1_{\rm b}(\R^{3l})$. For it holds
 \begin{eqnarray*}
 \nabla_x\, {\rm tr}\big(E_jP_j\big) &=& \nabla_x\, {\rm tr}\big((E_jP_j)P_j\big)\;\;=\;\;  {\rm tr}\big((\nabla_x E_jP_j)P_j\,+\, (E_jP_j)\nabla_x P_j\big)\\
 &=&  {\rm tr}\big((\nabla_x E_jP_j)P_j\big) \,+\, {\rm tr}\big((E_jP_j)\nabla_x P_j\big)  \ \;<\ \; \infty
 \end{eqnarray*} 
 because $P_j$ and $E_j P_j$ are trace-class operators and the product of a trace-class operator and a bounded operator is again a trace-class operator (see e.g.\ \cite{ReSi1}, Theorem VI.19). The argument for higher derivatives goes along the same lines.
 
  For the last claim we  observe that $H_{\rm el}^n \partial_x^\alpha P_j$ is bounded for any $\alpha\in \N_0^{3l}$ and any $n\in\N_0$. Since $H_{\rm el}$, $E_j$ and  $P_j$  have bounded and smooth derivatives, this can be easily seen inductively  by differentiating the identity
 \[
 0 = (H_{\rm el} - E_j )^n P_j\,.
 \]

\subsection*{Proof of Lemma~\ref{TechLem1}}
 We proceed by induction. 
For $m=1$, i.e.\ $|\alpha|\leq 2$,  we have by standard elliptic estimates that $\epsi^{|\alpha|}\partial^\alpha_x$ is relatively bounded by $-\epsi^2\Delta_x$.
   Now $\hepsiop_{\rm mol}$ has the form
 \[
 \hepsiop_{\rm mol} = -\epsi^2 \Delta_x \otimes 1 + H_{\rm el}(x) = -\epsi^2 \Delta_x \otimes 1 \,+\, 1\otimes \underbrace{(-\Delta_y + \vrm{ee}(y))}_{=: H_{\rm e,0}\geq 0} \,+\, V_{\rm en} (x,y),
 \]
 where $V _{\rm en}$ is bounded with bounded derivatives. Hence
 \[
 \| -\epsi^2\Delta_x \psi \| \leq \|  \hepsiop_{\rm mol} \psi \| + \|V_{\rm en}\|_\infty \|\psi\|
 \]
and thus 
\[
\| A_\alpha \epsi^\alpha \partial^\alpha_x \psi \| \leq \|A_\alpha\| \,\|   \epsi^\alpha \partial^\alpha_x \psi \|\leq C  ( \|  \hepsiop_{\rm mol} \psi \| + \|\psi\|) = C \|\psi\|_{D(\hepsiop_{\rm mol})}
\]
 with a constant $C$ independent of $\epsi$.

 Now assume that we proved the assertion for operators of order $n-1$ and let $|\alpha|=n$ and  $m =  \lceil n/2\rceil  $. Then $A(x) \epsi^{|\alpha|}\partial^\alpha_x$
 is relatively bounded by $(-\epsi^2 \Delta_x)^m$ again by standard elliptic estimates. Using the induction hypothesis we find that
 \begin{eqnarray*}
 \| (\epsi^2 \Delta_x)^m \psi \|& \leq & C (\|  \hepsiop_{\rm mol} (\epsi^2 \Delta_x)^{m-1} \psi \|+\|   (\epsi^2 \Delta_x)^{m-1} \psi \|) \\&\leq &
 C (
  \|  (\epsi^2 \Delta_x)^{m-1} \hepsiop_{\rm mol}  \psi \| + \| [ \hepsiop_{\rm mol} ,(\epsi^2 \Delta_x)^{m-1}] \psi \|+\|   (\epsi^2 \Delta_x)^{m-1} \psi \|) \\
  &\leq & C \| \psi \|_{D( (\hepsiop_{\rm mol})^m)}\,,
 \end{eqnarray*}
   since $ (\epsi^2 \Delta_x)^{m-1}$ and $[ \hepsiop_{\rm mol} ,(\epsi^2 \Delta_x)^{m-1}] $ are both differential operators of order at most $2m-2\leq n-1$.
 
 For the second claim note that 
 \begin{eqnarray*}\lefteqn{
 \| (\hepsiop_{\rm mol})^k A_\alpha \epsi^\alpha \partial^\alpha_x \psi \|}\\ &\leq &\|  [ (\hepsiop_{\rm mol})^k ,A_\alpha ] \epsi^\alpha \partial^\alpha_x \psi   \|
 +  \| A_\alpha [ (\hepsiop_{\rm mol})^k , \epsi^\alpha \partial^\alpha_x]  \psi \|+  \| A_\alpha    \epsi^\alpha \partial^\alpha_x  (\hepsiop_{\rm mol})^k \psi \|\\
 &\leq & C \| \psi \|_{D( (\hepsiop_{\rm mol})^{m+k})}\,,
 \end{eqnarray*}
 since $ [ (\hepsiop_{\rm mol})^k ,A_\alpha ]\epsi^\alpha \partial^\alpha_x $ and $[ (\hepsiop_{\rm mol})^k , \epsi^\alpha \partial^\alpha_x]  $ are  admissible of order $2k-1+ n\leq 2(k+m)$.

\subsection*{Proof of Lemma~\ref{TechLem}}

All the operators appearing are differential operators with coefficients $A_\alpha$ that are composed of derivatives of $P_0$ and $R$, i.e.\ of $\partial_x^\alpha P_0$ and $\partial_x^\beta R$. So they are all admissible in the sense of Lemma~\ref{TechLem1}. It remains to show that also commutators of the coefficients $A_\alpha$ with $(\hepsiop_{\rm mol})^k$  are admissible, which in turn follows if $H_{\rm e}^k\partial_x^\beta A_\alpha$ is bounded for any $\beta\in\N_0^{3l}$.   
Now according to Lemma~\ref{derivPj} $\partial^\beta_x P_0(x) \in \mathcal{L}\big(\Hi_{\rm el}, D(H_{\rm el}^{n})\big)$ for any $n$ and clearly also $\partial_x^\beta R(x) \in \mathcal{L}\big(  D(H_{\rm el}^{n})\big)$
for any $n$. Since every coefficient $A_\alpha$ appearing in the construction contains at least one factor of the type $\partial^\beta_x P_0$, the claim follows.

\subsection*{Proof of Lemma~\ref{CommuLemma}}

First take any $\phi\in C^\infty_0(\R)$ and $\hat \phi$ an appropriate almost analytic extension. 
Then    the Helffer-Sj\"ostrand formula  implies
 \begin{eqnarray*}
\lefteqn{ \left\| [ \phi(H), A] {\bf 1}_I(H)\right\|_{\mathcal{L}(\Hi, D(H^n))} }\\ 
&& \ \,=\ \, \left\| \frac{1}{\pi}\int_\C \partial_{\bar z} \hat \phi(z)\, (H-z)^{-1} [ A,H]  {\bf 1}_I(H) \, (H-z)^{-1}\,\D z\right\|_{\mathcal{L}(\Hi, D(H^n))}\\
 && \ \,\leq \ \,  \frac{\delta}{\pi}\int_\C | \partial_{\bar z}  \hat \phi(z) |  \| (H-z)^{-1} \|_{\mathcal{L}(  D(H^n))}  \, \| (H-z)^{-1}\|_{\mathcal{L}(\Hi)}  \,\D z\\
 && \ \,\leq \ \,  \frac{\delta}{\pi}\int_\C | \partial_{\bar z}  \hat \phi(z) |  \frac{1}{|{\rm Im}(z)|^2}  \,\D z\;\leq\; C_\phi \delta \,.
 \end{eqnarray*}
 Taking the adjoint shows that also $ \| {\bf 1}_I(H) [ \phi(H), A] \|_{\mathcal{L}(\Hi)}\leq C_\phi\delta$. With the bound $\| {\bf 1}_I(H) \|_{\mathcal{L}(\Hi, D(H^n))}\leq C_n $ we get also 
 \[
  \| {\bf 1}_I(H) [ \phi(H), A] \|_{\mathcal{L}(\Hi, D(H^n))}  \leq C_nC_\phi\delta\,.
  \]
 Now choose $\chi\in C^\infty_0(\R)$ with supp$\chi \subset I$ and $\chi|_{\tilde I} =1$. This implies 
 $\tilde \chi =\tilde \chi {\bf 1}_I(H)$,  $  \chi =  \chi {\bf 1}_I(H)$ and $\tilde \chi   \chi =\tilde\chi$.
Using the above estimate for $\phi=\chi$ and $\phi=\tilde \chi$, we get
\begin{eqnarray*}
  \| \tilde \chi(H)  A - A\tilde\chi(H) \| &=&   \| \tilde \chi(H)  A - A\chi(H)\tilde\chi(H) \| 
 \leq   \| \tilde \chi(H)  A - \chi (H)A \tilde\chi(H) \| + C_\chi\delta
 \\ &\leq &( C_\chi + C_nC_{\tilde \chi} )\delta \,.
\end{eqnarray*}

\subsection*{Proof of Lemma~\ref{ProjectorLemma}}

 We first state another lemma that will be used in the proof.
 
  \begin{lemma}\label{NormDnLemma}
Let $(H , D(H ) )$ be a   self-adjoint operator  and equip the domains $D^n:= D(H^n)$ with the graph norms
$\| \psi \|_{D^n} := \sum_{i=0}^n \| H^i \psi \|$ and let $N\in\N$. If $A\in\mathcal{L}(\Hi)$ satisfies 
\[
\| [A,H] \|_{\mathcal{L}(D^{n}, D^{n-1})} \leq \delta_{n}
\]
for all $1\leq n\leq N $,
then
\begin{equation}\label{ANormDn}
\|A\|_{\mathcal{L}(  D^n)} \leq n \|A\| + \sum_{i=1}^{n } \delta_i 
\end{equation}
for all $1\leq n\leq N $. 
Moreover, if in addition
\[
\| (A-z)^{-1} \|_{\mathcal{L}(\Hi)} \leq \alpha
\]
and $\delta_n <\frac{1}{2} \frac{1}{2^{2n}\alpha}$, then
\begin{equation}\label{RNormDn}
\| (A-z)^{-1} \|_{\mathcal{L}(D^n)} \leq  2^{2n }    \alpha
\end{equation}
for all $n\leq N$.
 \end{lemma}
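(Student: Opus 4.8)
The plan is to establish the two inequalities separately, each by induction on $n$.

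For the bound~(\ref{ANormDn}) I would exploit the identity $\|\phi\|_{D^n} = \|\phi\| + \|H\phi\|_{D^{n-1}}$ together with $HA\psi = AH\psi + [H,A]\psi$. The case $n=1$ is immediate, since $\|A\psi\|_{D^1} = \|A\psi\| + \|AH\psi + [H,A]\psi\| \leq \|A\|\,(\|\psi\|+\|H\psi\|) + \delta_1\|\psi\|_{D^1}$. For the step $n-1\to n$, writing $S_k := \sum_{i=1}^k\delta_i$ and assuming $\|A\|_{\mathcal{L}(D^{n-1})}\leq (n-1)\|A\| + S_{n-1}$, I would start from
\[
\|A\psi\|_{D^n} = \|A\psi\| + \|HA\psi\|_{D^{n-1}} \leq \|A\|\,\|\psi\| + \|A\|_{\mathcal{L}(D^{n-1})}\,\|H\psi\|_{D^{n-1}} + \|[H,A]\|_{\mathcal{L}(D^n,D^{n-1})}\,\|\psi\|_{D^n},
\]
insert the induction hypothesis and $\|[H,A]\|_{\mathcal{L}(D^n,D^{n-1})}\leq\delta_n$, and then use $\|H\psi\|_{D^{n-1}}\leq\|\psi\|_{D^n}$ together with $\|\psi\|_{D^n} = \|\psi\| + \|H\psi\|_{D^{n-1}}$; the $\|A\|$-terms recombine to $\|A\|(\|\psi\| + (n-1)\|H\psi\|_{D^{n-1}})\leq n\|A\|\,\|\psi\|_{D^n}$ and the $\delta$-terms to $S_{n-1}\|H\psi\|_{D^{n-1}} + \delta_n\|\psi\|_{D^n}\leq S_n\|\psi\|_{D^n}$, which is the claim. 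I would also point out, via a short core/approximation argument, that the hypotheses force $A$ to map each $D^n$ into itself, so that every norm above is finite.

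For~(\ref{RNormDn}) I would set $R := (A-z)^{-1}$ and use the resolvent commutator identity $[R,H] = -R\,[A,H]\,R$, so that
\[
\epsilon_j := \|[R,H]\|_{\mathcal{L}(D^j,D^{j-1})} \leq \|R\|_{\mathcal{L}(D^{j-1})}\,\delta_j\,\|R\|_{\mathcal{L}(D^j)}.
\]
Applying the first part of the lemma with $A$ replaced by $R$ then gives $\|R\|_{\mathcal{L}(D^n)}\leq n\alpha + \sum_{j=1}^n\epsilon_j$. Abbreviating $M_n := \|R\|_{\mathcal{L}(D^n)}$ (so $M_0\leq\alpha$) and inducting on $n$, the hypothesis $\delta_j < \tfrac12\,2^{-2j}\alpha^{-1}$ yields, for $j\leq n-1$, $\epsilon_j \leq M_{j-1}\delta_j M_j \leq (2^{2j}\alpha\delta_j)\,2^{2j-2}\alpha < 2^{2j-3}\alpha$ from the induction hypotheses $M_{j-1}\leq 2^{2j-2}\alpha$, $M_j\leq 2^{2j}\alpha$, whereas for $j=n$ one can only write $\epsilon_n \leq 2^{2n-2}\alpha\,\delta_n\,M_n < \tfrac18 M_n$. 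Summing, $M_n < n\alpha + \tfrac{\alpha}{8}\sum_{j=1}^{n-1}4^j + \tfrac18 M_n < n\alpha + \tfrac{4^n\alpha}{24} + \tfrac18 M_n$, and solving for $M_n$ gives $M_n < \tfrac87\,(n + \tfrac{4^n}{24})\,\alpha \leq 2^{2n}\alpha$, the last step holding because $n\leq\tfrac56\,4^n$ for every $n\geq 1$. This closes the induction.

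The step I expect to be the real obstacle is the circularity in this second induction: the estimate for $\epsilon_n$ carries a factor $M_n$ on its right-hand side, so the recursion only closes thanks to the explicit smallness threshold imposed on $\delta_n$, which is calibrated precisely to absorb the self-referential term (with slack, as the comparison $n\leq\tfrac56\,4^n$ indicates). A second, purely technical matter to be settled beforehand is the \emph{a priori} claim that $R$ preserves each $D^j$ with $j\leq N$, so that the operator norms above are finite; this follows by the same simultaneous induction, or by expanding $H^jR$ via the commutator identity. Everything else is routine bookkeeping.
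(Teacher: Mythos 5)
Your proof is correct and follows essentially the same route as the paper's: the first inequality comes from the identity $\|A\psi\|_{D^n}=\|A\psi\|+\|HA\psi\|_{D^{n-1}}$ together with the commutation trick $HA=AH+[H,A]$, and the resolvent bound rests on $[R,H]=-R[A,H]R$ with the self-referential term absorbed thanks to the smallness threshold on $\delta_n$. The only difference is cosmetic: for the resolvent part the paper carries out a direct one-step recursion and solves the resulting inequality $\|R\|_{\mathcal{L}(D^n)}\leq(\|R\|+\|R\|_{\mathcal{L}(D^{n-1})})/(1-\delta_n\|R\|_{\mathcal{L}(D^{n-1})})$, whereas you bootstrap through part one applied to $R$ and then sum the cascade of $\epsilon_j$ estimates; both close the induction with the same numerical slack.
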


\begin{proof}
We proceed by induction.
\begin{eqnarray*}
\| A\psi\|_{D^n} &=& \sum_{i=0}^n \| H^i A\psi \| \;\leq\;  \|A\psi\| + \sum_{i=0}^{n-1 } \| H^{i } AH\psi\| + \sum_{i=0}^{n -1}  \| H^{ i} [H,A] \psi\|\\
&=& \|A\psi\| + \|AH\psi\|_{D^{n-1}} + \| [H,A] \psi\| _{D^{n-1}}\\[1mm]
&\leq & \|A\|\,\|\psi\| + \|A\|_{\mathcal{L}(D^{n-1})} \|\psi\|_{D^n} + \| [H,A] \|_{\mathcal{L}(D^n,D^{n-1}) } \|\psi\|_{D^n}
\end{eqnarray*}
and thus
\[
\|A\|_{\mathcal{L}(D^{n})} \leq  \|A\|_{\mathcal{L}(D^{n-1})} + \|A\| +\delta_n\,.
\]
Since for $n=1$ the computation yields $\|A\|_ {\mathcal{L}(D )} \leq \|A\|+\delta_1$, this implies (\ref{ANormDn}). For (\ref{RNormDn}) we proceed analogously and abbreviate $R:= (A-z)^{-1}$.
\begin{eqnarray*}
\| R\psi\|_{D^n} &=& \sum_{i=0}^n \| H^i R\psi \| \;\leq\;   \|R\psi\| + \sum_{i=0}^{n-1} \| H^{i} RH\psi\| + \sum_{i=0}^{n-1} \| H^{i} [H,R] \psi\|\\
&=& \|R\psi\| + \|RH\psi\|_{D^{n-1}} + \| R[A,H] R\psi\| _{D^{n-1}}\\[1mm]
&\leq & \|R\|\,\|\psi\| + \|R\|_{\mathcal{L}(D^{n-1})} \|\psi\|_{D^n} +  \|R\|_{\mathcal{L}(D^{n-1})}  \delta_n \|R\|_{\mathcal{L}(D^{n})}   \|\psi\|_{D^n}
\end{eqnarray*}
and thus
\[
\|R\|_{\mathcal{L}(D^{n})} \leq \frac{\|R\| +\|R\|_{\mathcal{L}(D^{n-1})} }{1- \delta_n \|R\|_{\mathcal{L}(D^{n-1})}}\,.
\]
For $n=1$ the this yields $\|R\|_{\mathcal{L}(D )} \leq 4 \alpha$ if $1-\delta_1\alpha>\frac{1}{2}$ and by induction one obtains~(\ref{RNormDn}).  
 \end{proof}

 Sine $\tilde Q$ is self-adjoint in $\mathcal{L}(\Hi)$, (\ref{Property4}) implies that the spectrum of $\tilde Q$
 is located in   balls  of radius $  \delta_1$ around $0$ and~$1$. 
  Thus, for $\delta <\frac{1}{2}$   the curve $\gamma : [0,2\pi) \to \C$, $\gamma (\theta) = 1+  \frac{1}{2} \E^{\I \theta}$,
 is contained in the resolvent set of $\tilde Q  \in \mathcal{L}(\Hi)$   and we can define
 \[ 
 Q := \frac{\I}{2\pi}\oint_\gamma (\tilde Q- z)^{-1}\,\D z
 \]
 as a bounded operator in $\mathcal{L}(\Hi)$.  Note that  $Q$ is just the spectral projection of $\tilde Q$  related to the spectrum near $1$.
 For simplicity  write $R(z) :=(\tilde Q- z)^{-1}$ and assume $  \delta  <\frac{1}{4}$.
  Then for $z\in\gamma$ we have 
 $\|R(z)\| \leq 4$ and by Lemma~\ref{NormDnLemma} for $\delta_n  < \frac{1}{2}\frac{1}{2^{2n } }$ also
 \begin{equation}\label{ResoBound}
 \|R(z)\|_{\mathcal{L}(D^n)} \leq 4\cdot 2^{2n}= 4^{n+1}
 \end{equation}
 is uniformly bounded on $\gamma$. Hence $\|Q\|_{\mathcal{L}(D^n)}\leq 4^{n+1}$ for all $n\leq N$.
  The fact that $\tilde Q   - Q$ has spectrum only in a ball around $0$ of size $\delta$ implies  
 \[ 
  \|  \tilde Q  - Q\|_{\mathcal{L}(\Hi)} \leq \delta\,.
 \]
   To estimate the difference also in $\mathcal{L}(D^n)$
 we use Nenciu's formula \cite{Nen}
\[
Q - \tilde Q = \frac{\I}{2\pi} \oint_\gamma \frac{ R(z) -   R(1-z)}{1-z}  \D z\, \left( \tilde Q \tilde Q-\tilde Q\right)\,.
\]
Now (\ref{Property4}) together with (\ref{ResoBound})  implies the second part of (\ref{Prop6}) with $C_n = 8\cdot 4^{n+1}$.
 From 
 \[
 [H, Q ]  = \frac{\I}{2\pi}\oint_\gamma R(z) [H  , \tilde Q  ]  \,R(z) \,\D z\,,
 \]
and (\ref{Property0}) it follows that 
\[
\| [H , Q ]  \|_{\mathcal{L}(D^n, D^{n-1})}  \leq   C_n \,\delta_n 
\]
with $C_n = 4^{2n+1}$.
From now on we will not keep track of the exact value of $C_n$ and   increase it as necessary in the following steps. 
But it should always be noted that it depends only on $n$.

Now pick $\tilde \chi \in C^\infty_0(\R)$ with support in $(e-1,E+1)$ and with
$\tilde \chi {\bf 1}_{E+\frac{1}{2}} =  {\bf 1}_{E+\frac{1}{2}}$.
Then
(\ref{Property3}) together with Lemma~\ref{CommuLemma} implies that there is a constant $C$ depending only on $\tilde \chi$, which in turn can be fixed given $E$, such that 
\[ 
\| [ \tilde\chi(H), \tilde Q] \|_{\mathcal{L}( \Hi , D^n   )} \leq C\,\beta_1\,.
\]
Thus for $z\in\gamma$
 \[
 \left\| \left[ R(z),  \tilde \chi(H ) \right] \right\|_{\mathcal{L}(\Hi,D^n) }  =  \left\|R(z) \left[   \tilde \chi(H),\tilde Q  \right]   R(z)\right\|_{\mathcal{L}(\Hi,D^n) }  \leq C_n\,C\,\beta_1\,,
 \]
 which shows that
 \begin{eqnarray*}
 [H, Q ] \,{\bf 1}_{\rm E+\frac{1}{2} }(H ) &=& \frac{\I}{2\pi}\oint_\gamma R(z) [H  , \tilde Q  ]  \,R(z) \,\D z \,\tilde \chi(H){\bf 1}_{\rm E+\frac{1}{2} }(H ) \\
 &=&\frac{\I}{2\pi}\oint_\gamma R(z) [H  , \tilde Q  ] \,\tilde \chi(H) \,R(z) \,\D z \,{\bf 1}_{\rm E+\frac{1}{2} }(H )\\
 && + \, \frac{\I}{2\pi}\oint_\gamma R(z) [H  , \tilde Q  ]  \,[R(z),\tilde \chi(H)] \,\D z \,{\bf 1}_{\rm E+\frac{1}{2} }(H )
 \end{eqnarray*}
 implies
 \[
  \left\|   [H, Q ] \,{\bf 1}_{\rm E+\frac{1}{2} }(H )\right\|_{\mathcal{L}(\Hi,D^n) } \leq  C_nC_E\,\beta_1\,.
 \]
 Finally we get   (\ref{Property7}) using again  Nenciu's formula and (\ref{Property5}).

\section*{Acknowledgements}
We are grateful to Luca Tenuta for his engagement in the initial phase of this project.
We thank J\"urg Fr\"ohlich, Marcel Griesemer, Christian Hainzl, Michael Sigal, Herbert Spohn,  Hans-Michael Stiepan and Jan-Eric Str\"ang   for helpful remarks and stimulating discussions. This work was supported by the German Science Foundation (DFG) and by the German Israeli Foundation (GIF).

\end{document}